\providecommand{\tabularnewline}{\\}
\newcommand{\lyxaddress}[1]{
	\par {\raggedright #1
	\vspace{1.4em}
	\noindent\par}
}
\theoremstyle{plain}
\newtheorem{thm}{\protect\theoremname}
\theoremstyle{plain}
\newtheorem{prop}[thm]{\protect\propositionname}
\theoremstyle{plain}
\newtheorem{lem}[thm]{\protect\lemmaname}
\theoremstyle{remark}
\newtheorem{rem}[thm]{\protect\remarkname}
\providecommand{\lemmaname}{Lemma}
\providecommand{\propositionname}{Proposition}
\providecommand{\remarkname}{Remark}
\providecommand{\theoremname}{Theorem}
\begin{document}
\title{Statistical process control via $p$-values}
\author{Hien Duy Nguyen$^{1,2}$ and Dan Wang$^{2,3}$}
\maketitle

\lyxaddress{$^{1}$Institute of Mathematics for Industry, Kyushu University,
Japan. $^{2}$School of Computing, Engineering, and Mathematical Sciences,
La Trobe University, Australia. $^{3}$Department of Statistics, School
of Mathematics, Northwest University, Xi’an, China}
\begin{abstract}
We study statistical process control (SPC) through charting of $p$-values.
When in control (IC), any valid sequence $(P_{t})_{t}$ is super-uniform,
a requirement that can hold in nonparametric and two-phase designs
without parametric modelling of the monitored process. Within this
framework, we analyse the Shewhart rule that signals when $P_{t}\le\alpha$.
Under super-uniformity alone, and with no assumptions on temporal
dependence, we derive universal IC lower bounds for the average run
length (ARL) and for the expected time to the $k$th false alarm ($k$-ARL).
When conditional super-uniformity holds, these bounds sharpen to the
familiar $\alpha^{-1}$ and $k\alpha^{-1}$ rates, giving simple,
distribution-free calibration for $p$-value charts. Beyond thresholding,
we use merging functions for dependent $p$-values to build EWMA-like
schemes that output, at each time $t$, a valid $p$-value for the
hypothesis that the process has remained IC up to $t$, enabling smoothing
without ad hoc control limits. We also study uniform EWMA processes,
giving explicit distribution formulas and left-tail guarantees. Finally,
we propose a modular approach to directional and coordinate localisation
in multivariate SPC via closed testing, controlling the family-wise
error rate at the time of alarm. Numerical examples illustrate the
utility and variety of our approach.
\end{abstract}
\textbf{Keywords: }statistical process control; $p$-values; exponentially
weighted moving average; average run length; directional localisation

\section{Introduction}

Statistical process control (SPC) comprises an important class of
methods for monitoring the behaviour of manufacturing, engineering,
and other production processes. Since their introduction by \citet{Shewhart1931},
the SPC literature has grown significantly, with works such as \citet{Zwetsloot:2021aa},
\citet{waqas2024controlcharts}, \citet{ahmed2025highdimensional},
and \citet{Tsung2025} providing good recent references on the current
state of the art. Good introductions to the SPC literature appear
in the texts of \citet{xie2002statistical}, \citet{qiu2013introduction},
and \citet{chakraborti2019nonparametric}.

Typically, when conducting SPC, one seeks to monitor a process whose
in-control state can be characterised as generating data from a null
distribution, defined by a null hypothesis $\mathrm{H}_{0}$. One
then wishes to detect whether the process becomes out-of-control (OOC),
characterised as a deviation away from $\mathrm{H}_{0}$. In the discrete-time
setting, at time point $t\in\mathbb{N}$, the analyst assesses deviation
from the in-control (IC) state by sampling $n_{t}\in\mathbb{N}$ observations
from the data-generating process (DGP) and computing a test statistic
$Z_{t}$ whose distributional characteristics are known under $\mathrm{H}_{0}$.
The analyst then monitors the sequence of test statistics $\left(Z_{t}\right)_{t}$
and raises an alarm whenever the test of $\mathrm{H}_{0}$ using $Z_{t}$
is rejected at some specified level $\alpha>0$. This choice controls
the instantaneous false-alarm rate (FAR) of the monitoring process
and the IC average run length, typically via a control limit for $Z_{t}$
that can be characterised as a hypothesis-testing critical value depending
on $\alpha$. This is the essence of the general scheme introduced
by \citet{Shewhart1931}. A more sophisticated analyst can further
employ time-point averaging methods, such as the cumulative sum (CUSUM)
and exponentially weighted moving average (EWMA) charts introduced
by \citet{page1954} and \citet{roberts1959}, respectively. The null
distributions of these chart statistics can be obtained in specific
parametric situations, such as under assumptions of normality (see,
e.g., \citealp{roberts1959}, \citealp{LucasSaccucci1990}, \citealp{LowryWoodallChampRigdon1992},
\citealp{LuReynolds1999}, \citealp{yeh2003multivariate}, and \citealp{Knoth2005_EWMA_S2_ARL}),
but can be difficult to deduce in general.

For any hypothesis $\mathrm{H}_{0}$, one can typically construct
a sequence of $p$-values $\left(P_{t}\right)_{t}$, based on samples
of sizes $\left(n_{t}\right)_{t}$, which can be used as test statistics
for $\mathrm{H}_{0}$ in place of $\left(Z_{t}\right)_{t}$. Here,
we define the $p$-value as a random variable $P$ taking values in
the unit interval such that, under the null $\mathrm{H}_{0}$, the
probability that $P\le\alpha$ is less than or equal to $\alpha$
for each $\alpha>0$, following the general definition of \citet{Lehmann2022}.
The use of $p$-values in SPC has been explored in the works of \citet{GriggSpiegelhalter2008},
who derive and make use of the steady-state distribution of CUSUM
statistics to compute $p$-values for monitoring; \citet{GriggSpiegelhalterJones2009}
and \citet{LiTsung2009}, who consider the use of $p$-values and
false discovery rate control to simultaneously monitor multiple processes;
and \citet{LiQiuChatterjeeWang2013}, who suggest the charting of
$p$-values as a unified framework for conducting SPC, particularly
in the presence of variable sampling intervals (VSI) and variable
sample sizes (VSS) in each time period.

In this work, we continue to develop the use of the $p$-value as
a charting tool for SPC. In particular, we demonstrate that $p$-values
are applicable in both one- and two-phase monitoring programmes and,
like \citet{LiQiuChatterjeeWang2013}, we discuss how $p$-values
are applicable in VSI and VSS settings. Without any assumption on
stochastic dependence within the sequence $\left(P_{t}\right)_{t}$,
we prove that the $p$-value chart that raises an alarm if $P_{t}\le\alpha$
has an average run length (ARL) that is always bounded below by $\left(1+\alpha^{-1}\right)/2$,
and under a conditional super-uniformity assumption, we prove that
this can be improved to the typical $\alpha^{-1}$ bound of parametric
methods. To extend these observations, motivated by the notion of
$k$-FWER (generalised family-wise error rate; the case $k=1$ corresponds
to the usual family-wise error rate (FWER); \citealp{LehmannRomano2005gFWER}),
we also study the $k$-ARL, defined as the average time until first
observing $k$ false alarms under the null hypothesis, which generalises
the usual ARL (corresponding to the $k=1$ case). Without dependence
assumptions, we prove that $k$-ARL is bounded below by $\left(1+k\alpha^{-1}\right)/2$,
which again improves to $k\alpha^{-1}$ under conditional super-uniformity
assumptions. These $k$-ARL bounds allow the $p$-value charts to
be correctly calibrated for all DGPs where $\mathrm{H}_{0}$ holds
and mitigate against the time-consuming task of simulation-based ARL
calibration, which only holds under the specific DGP assumptions that
underlie the simulation study.

Further to these results, we demonstrate that, like well-behaved test
statistics, one can construct EWMA-like schemes $\left(Q_{t}\right)_{t}$
based on a $p$-value sequence $\left(P_{t}\right)_{t}$, such that
each $Q_{t}$ is a $p$-value for the hypothesis that $\mathrm{H}_{0}$
is true for all time points up to time $t$, against the complementary
hypothesis. We demonstrate that this is possible using recent results
for combining $p$-values \citep{Vovk2020,Vovk2022}. Thus, given
any sequence of test statistics $\left(Z_{t}\right)_{t}$ whose null
distributions admit $p$-values, one can always generate EWMA-like
schemes for SPC. We further consider and follow on from the work on
EWMA charts based on uniform test-statistic constructions of \citet{yeh2003multivariate}
and obtain novel results regarding the weighted averaging of independent
and identically distributed random variables. This permits the general
and generic construction of EWMA versions of one-phase parametric
Shewhart-type charts with ARL control that does not depend on simulated
calibration or specific test statistic distributions.

Using a $p$-value-based closed-testing procedure (e.g., \citealp{MarcusPeritzGabriel1976}),
we show how $p$-value charts can detect any departure of the multivariate
parameter from a null baseline as well as localise the affected coordinates
and directions of deviation, in a principled way, with simultaneous
error control. This provides an alternative to the test-statistic-based
approaches to the same problem of \citet{Hawkins1991RegressionAdjusted},
\citet{MasonTracyYoung1995MYT}, \citet{TanShi2012BayesianMeanShifts},
\citet{XuShuLiWang2023Technometrics}, \citet{XuDeng2023CIE}, and
\citet{XiongXu2025JSCS}, for example. We then conclude with demonstrations
of the validity and applicability of our approaches via examples and
numerical simulations, where we present novel methods for monitoring
the stability of the distribution of a process using $p$-values from
Kolmogorov--Smirnov tests, and provide a nonparametric method for
localising the direction and coordinate of change via Mann--Whitney
$U$-tests (see \citealp{GibbonsChakraborti2010} regarding nonparametric
tests).

The remainder of the manuscript proceeds as follows. In Section 2,
we present our general technical framework, together with our ARL
results and EWMA constructions. Section 3 then contains the application
of $p$-values to the problem of localised detection for multivariate
charts. Examples and numerical simulations are presented in Section
4, and Section 5 contains concluding remarks. Proofs of theorems and
additional technical and numerical results appear in the Appendix. 

\section{Main results}

\label{sec:Main-results}

\subsection{Technical setup}

Let $\left(\Omega,\mathfrak{A},\mathrm{P}\right)$ be a probability
space on which all of our random objects are supported, with generic
element $\omega$ and expectation symbol $\mathrm{E}$. We suppose
that the fundamental object that we wish to monitor via an SPC process
is the random object $X:\Omega\to\mathbb{X}$, where $\left(\mathbb{X},\mathfrak{B}_{\mathbb{X}}\right)$
is a metric space equipped with its Borel $\sigma$-algebra. For each
$t\in\mathbb{N}$, let $X_{t}$ denote a time-$t$ replicate of $X$
and let $\mathrm{P}_{X,t}\in{\cal M}_{\mathbb{X}}$ denote the push-forward
measure of $X_{t}$ at time $t$, where ${\cal M}_{\mathbb{X}}$ is
the set of Borel probability measures on $\mathbb{X}$. We will say
that the process is IC at time $t$ if $\mathrm{P}_{X,t}$ satisfies
the null hypothesis $\mathrm{H}_{0}$, where the null hypothesis is
some property of a measure in ${\cal M}_{\mathbb{X}}$ (e.g., if $\mathbb{X}=\mathbb{R}$,
$\mathrm{H}_{0}$ may be that $\mathrm{E}X_{t}=0$).

We consider two types of SPC designs: the one-phase and the two-phase
designs. In the one-phase design, at each time $t$, we realise an
IID sample $\mathbf{X}_{t}=\left(X_{t,i}\right)_{i}$ of replicates
of $X_{t}$ with probability measure $\mathrm{P}_{X,t}$, for $i\in\left[n_{t}\right]=\left[1,n_{t}\right]\cap\mathbb{N}$.
Using $\left(\mathbf{X}_{t}\right)_{t}$, we then compute the $p$-value
$P_{t}=p_{t}\left(\mathbf{X}_{1},\dots,\mathbf{X}_{t}\right)$ for
some functions $p_{t}:\mathbb{X}^{\sum_{s=1}^{t}n_{s}}\to\left[0,1\right]$,
and say that $\left(P_{t}\right)_{t}$ satisfies the super-uniformity
property (under $\mathrm{H}_{0}$) if, for each $t$, 
\begin{equation}
\mathrm{P}_{0,t}\left(P_{t}\le\alpha\right)\le\alpha\text{, for every }\alpha\in\left[0,1\right]\text{ and every }\mathrm{P}_{0,t}\in{\cal M}_{0,t}\text{.}\label{eq:superuniformity}
\end{equation}
Here, ${\cal M}_{0,t}$ denotes the set of probability measures $\mathrm{P}_{0,t}$
on $\left(\Omega,\mathfrak{A}\right)$ such that the push-forward
measure of $X_{s}$ under $\mathrm{P}_{0,t}$ satisfies $\mathrm{H}_{0}$,
for every $s\in\left[t\right]$.

Alternatively, in a two-phase design we first realise a so-called
Phase~I sample $\mathbf{X}_{0}=\left(X_{0,i}\right)_{i}$ of IID
replicates of $X_{0}$ with measure $\mathrm{P}_{X,0}$, where $i\in\left[n_{0}\right]$.
The null hypothesis $\mathrm{H}_{0}$ is then defined relative to
the measure $\mathrm{P}_{X,0}$; for example, we may consider the
equality in distribution hypothesis that $\mathrm{P}_{X,t}=\mathrm{P}_{X,0}$.
Like the one-phase design, we then realise IID samples $\left(X_{t,i}\right)_{i}$
at each time $t$ of size $n_{t}$, arising from the measure $\mathrm{P}_{X,t}$.
Since $\mathrm{H}_{0}$ at each $t$ is a hypothesis concerning $\mathrm{P}_{X,0}$
along with the measures $\mathrm{P}_{X,t}$, the $p$-value for the
test must be taken as a function of both samples $\mathbf{X}_{0}$
and $\mathbf{X}_{1},\dots,\mathbf{X}_{t}$: $P_{t}=p_{t}\left(\mathbf{X}_{0},\dots,\mathbf{X}_{t}\right)$
for some $p_{t}:\mathbb{X}^{\sum_{s=0}^{t}n_{s}}\to\left[0,1\right]$,
and again is taken to satisfy the super-uniformity property (\ref{eq:superuniformity}).

In either case, we obtain a sequence of $p$-values $\left(P_{t}\right)_{t}$
that test $\mathrm{H}_{0}$ at each time point $t$, where we do not
assume any particular temporal dependence structure between the $p$-values.
The sequence constitutes a control chart and an alarm is triggered
at time $t$ if $P_{t}<\alpha$ for some control limit $\alpha$ chosen
by the analyst. Here, the control limit is the usual size of the hypothesis
test for $\mathrm{H}_{0}$, and the super-uniformity property ensures
that, under $\mathrm{H}_{0}$, the probability of a false alarm at
time $t$ is at most $\alpha$. 

\subsection{Controlling the average run length}

The ARL is defined as the expected time to first alarm, assuming that
$\mathrm{P}_{X,t}$ satisfies $\mathrm{H}_{0}$ for each time $t$.
Let $R:\Omega\to\mathbb{N}$ denote the run length. For each $m\in\mathbb{N}$,
we can write 
\[
\mathrm{P}_{0}\left(R\le m\right)=\mathrm{P}_{0}\left(\bigcup_{t=1}^{m}\left\{ P_{t}\le\alpha\right\} \right)\le\sum_{t=1}^{m}\mathrm{P}_{0}\left(P_{t}\le\alpha\right)\le m\alpha,
\]
by super-uniformity, where $\mathrm{P}_{0}$ is any measure on $\left(\Omega,\mathfrak{A}\right)$
in ${\cal M}_{0,t}$ for all $t\in\mathbb{N}$. We thus have $\mathrm{P}_{0}\left(R>m\right)\ge1-m\alpha$
for every $m\ge0$, and thus for each $m\ge1$, 
\[
\mathrm{P}_{0}\left(R\ge m\right)=\mathrm{P}_{0}\left(R>m-1\right)\ge1-\left(m-1\right)\alpha.
\]
Using the survival function expression for the expectation (cf. \citealp[Thm. 12.1, p. 75]{Gut2013Probability}),
we can express the ARL as 
\begin{equation}
\mathrm{ARL}=\mathrm{E}_{0}R=\sum_{m=1}^{\infty}\mathrm{P}_{0}\left(R\ge m\right)\ge\sum_{m=1}^{\infty}\bigl[1-\left(m-1\right)\alpha\bigr]_{+},\label{eq:ARL-expression}
\end{equation}
where $\mathrm{E}_{0}$ is the expectation operator for $\mathrm{P}_{0}$
and $\left[a\right]_{+}=\max\{a,0\}$. 
\begin{prop}
\label{prop:ARLnoconditions} For each $\alpha\in\left(0,1\right]$,
if $\left(P_{t}\right)_{t}$ satisfies (\ref{eq:superuniformity}),
then 
\[
\mathrm{ARL}\ge\frac{1}{2\alpha}+\frac{1}{2}\text{.}
\]
\end{prop}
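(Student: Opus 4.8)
The plan is to take the bound already established in (\ref{eq:ARL-expression}) as the starting point: since that inequality reduces the problem to a purely deterministic series, it remains only to show
\[
\sum_{m=1}^{\infty}\bigl[1-(m-1)\alpha\bigr]_{+}\ge\frac{1}{2\alpha}+\frac{1}{2}.
\]
First I would reindex with $j=m-1$ to rewrite the left-hand side as $\sum_{j=0}^{\infty}[1-j\alpha]_{+}$, and note that the summand $1-j\alpha$ is nonnegative exactly when $j\le1/\alpha$. Setting $n=\lfloor1/\alpha\rfloor$ (with $n\ge1$ because $\alpha\le1$), the infinite series collapses to the finite arithmetic sum $\sum_{j=0}^{n}(1-j\alpha)=(n+1)-\alpha n(n+1)/2$.

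The main obstacle is that this exact value depends on the floor $\lfloor1/\alpha\rfloor$, whereas the target is a clean function of $\alpha$, so I must show the floor-dependent expression always dominates $\frac{1}{2\alpha}+\frac{1}{2}$. To handle this, I would write $1/\alpha=n+\theta$ with $\theta\in[0,1)$, substitute $\alpha=(n+\theta)^{-1}$ into the arithmetic sum, and reduce the desired inequality to a polynomial comparison. After clearing the positive factor $n+\theta$, the difference between the two sides simplifies to $\theta(1-\theta)/\{2(n+\theta)\}$, which is nonnegative since $\theta\in[0,1)$; equality holds precisely when $1/\alpha$ is an integer, which both proves the bound and pins down its tight cases.

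Alternatively, and more cleanly, I would bypass the floor entirely by exploiting that $f(x)=[1-\alpha x]_{+}$ is convex. For a convex function the trapezoidal rule overestimates the integral, so $\int_{j}^{j+1}f\le\tfrac{1}{2}\{f(j)+f(j+1)\}$ on each unit interval; summing over $j\ge0$ and telescoping gives $\sum_{j=0}^{\infty}f(j)\ge\int_{0}^{\infty}f(x)\,\mathrm{d}x+\tfrac{1}{2}f(0)$. Since the triangle under $f$ has area $\int_{0}^{\infty}f=1/(2\alpha)$ and $f(0)=1$, this yields the bound $\frac{1}{2\alpha}+\frac{1}{2}$ at once, and makes the ``area plus half the initial term'' structure of the constant transparent.
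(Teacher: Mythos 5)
Your first argument is essentially the paper's own proof: the paper also truncates the series at $\nu+1=\lfloor 1/\alpha\rfloor+1$, evaluates the arithmetic sum as $(\nu+1)(1-\alpha\nu/2)$, and verifies the bound via the factorisation $(u-\nu)(u-(\nu+1))\le 0$ with $u=1/\alpha$, which is precisely your $\theta(1-\theta)\ge 0$ after writing $u=\nu+\theta$; your identification of equality exactly when $1/\alpha\in\mathbb{N}$ matches what that factorisation shows. Your second argument is a genuinely different and cleaner route: since $x\mapsto[1-\alpha x]_{+}$ is convex (a maximum of affine functions), the trapezoidal overestimate $\int_{j}^{j+1}f\le\tfrac12\{f(j)+f(j+1)\}$ summed over $j\ge 0$ gives $\sum_{j\ge 0}f(j)\ge\int_{0}^{\infty}f+\tfrac12 f(0)=\tfrac{1}{2\alpha}+\tfrac12$ directly, with no case analysis on the floor. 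What the convexity argument buys is transparency — the constant is visibly ``area of the triangle plus half the initial ordinate'' — and it would generalise immediately to any convex decreasing survival-probability lower bound; what the paper's (and your first) computation buys is the exact value of the truncated sum, which the paper reuses verbatim in the proof of Proposition \ref{prop:kARL_superuniform} to state the sharper intermediate bound $(\nu+1)(1-\alpha\nu/(2k))$. Both of your arguments are correct.
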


\begin{proof}
All proofs appear in the Appendix. 
\end{proof}
The bound above can be improved given some assumptions on conditional
super-uniformity of the $p$-values. In particular, suppose that $\left({\cal F}_{t}\right)_{t}$
is a filtration on the space $\left(\Omega,\mathfrak{A}\right)$,
whereupon the following assumption holds: 
\begin{equation}
\mathrm{P}_{0}\left(P_{t}\le\alpha\mid{\cal F}_{t-1}\right)\le\alpha\text{, a.s., for every }\alpha\in\left[0,1\right]\text{.}\label{eq:Conditional_Super_Uni}
\end{equation}

\begin{prop}
\label{prop:ARL_with_conditional_indep} Suppose that $\left(P_{t}\right)_{t}$
satisfies (\ref{eq:Conditional_Super_Uni}) for some filtration $\left({\cal F}_{t}\right)_{t}$.
Then, for each $\alpha\in\left(0,1\right]$, 
\[
\mathrm{ARL}\ge\frac{1}{\alpha}\text{.}
\]
\end{prop}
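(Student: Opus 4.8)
The plan is to sharpen the crude union bound that produced (\ref{eq:ARL-expression}) by replacing it with an iterated-conditioning argument that exploits the conditional control (\ref{eq:Conditional_Super_Uni}). Writing the run length as $R=\inf\{t\ge 1:P_t\le\alpha\}$, the key structural observation is that the survival event factorises as $\{R>m\}=\bigcap_{t=1}^{m}\{P_t>\alpha\}$. It therefore suffices to obtain a sharp lower bound on $\mathrm{P}_0(R>m)$ and then resum, so the entire proof reduces to establishing the geometric estimate $\mathrm{P}_0(R>m)\ge(1-\alpha)^m$ for every $m\ge 0$.

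First I would prove this geometric bound by induction on $m$, the base case being $\mathrm{P}_0(R>0)=1=(1-\alpha)^0$ since $R\ge 1$ almost surely. For the inductive step, I set $A_t=\{P_t>\alpha\}$ and condition on $\mathcal{F}_{m-1}$, obtaining
\begin{equation*}
\mathrm{P}_0\Bigl(\bigcap_{t=1}^{m}A_t\Bigr)=\mathrm{E}_0\Bigl[\mathbf{1}_{\bigcap_{t=1}^{m-1}A_t}\,\mathrm{P}_0(A_m\mid\mathcal{F}_{m-1})\Bigr]\ge(1-\alpha)\,\mathrm{P}_0\Bigl(\bigcap_{t=1}^{m-1}A_t\Bigr),
\end{equation*}
where I have used that (\ref{eq:Conditional_Super_Uni}) rearranges to $\mathrm{P}_0(A_m\mid\mathcal{F}_{m-1})\ge 1-\alpha$ a.s.\ and that the indicator $\mathbf{1}_{\bigcap_{t=1}^{m-1}A_t}$ is nonnegative, so the a.s.\ bound may be inserted without reversing the inequality. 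Combining this with the inductive hypothesis $\mathrm{P}_0(\bigcap_{t=1}^{m-1}A_t)=\mathrm{P}_0(R>m-1)\ge(1-\alpha)^{m-1}$ closes the induction.

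Finally, I would invoke the same survival-function expression for the expectation used to derive (\ref{eq:ARL-expression}) to write $\mathrm{ARL}=\mathrm{E}_0 R=\sum_{m=0}^{\infty}\mathrm{P}_0(R>m)\ge\sum_{m=0}^{\infty}(1-\alpha)^m=\alpha^{-1}$, the geometric series being summable for all $\alpha\in(0,1]$ and the endpoint $\alpha=1$ giving $\mathrm{ARL}\ge 1=\alpha^{-1}$ directly.

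The step I expect to carry the real weight is the factorisation in the displayed inequality: pulling $\mathbf{1}_{\bigcap_{t=1}^{m-1}A_t}$ outside the conditional expectation via the ``taking out what is known'' property requires this indicator to be $\mathcal{F}_{m-1}$-measurable, which amounts to the adaptedness of $(P_t)_t$ to the filtration $(\mathcal{F}_t)_t$ under which (\ref{eq:Conditional_Super_Uni}) is posed. This hypothesis is implicit in the setup but essential, so I would state it explicitly at the outset; once granted, every remaining step is routine. I would also remark that the bound is tight, being attained when the $P_t$ are i.i.d.\ uniform on $[0,1]$, for which $R$ is geometric with mean exactly $\alpha^{-1}$.
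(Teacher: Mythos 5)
Your proposal is correct and follows essentially the same route as the paper's proof: both establish the geometric survival bound $\mathrm{P}_{0}\left(R>m\right)\ge\left(1-\alpha\right)^{m}$ by conditioning on ${\cal F}_{m-1}$, pulling the ${\cal F}_{m-1}$-measurable indicator of the no-alarm history outside the conditional expectation, and applying the conditional super-uniformity bound, then summing the survival function. Your explicit remark that adaptedness of $\left(P_{t}\right)_{t}$ to $\left({\cal F}_{t}\right)_{t}$ is needed for the ``taking out what is known'' step is a fair observation, but it is implicit in the paper's setup and does not constitute a difference in approach.
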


With ${\cal F}_{t}=\sigma\left(\mathbf{X}_{1},\dots,\mathbf{X}_{t}\right)$,
observe that the conditional super-uniformity assumption (\ref{eq:Conditional_Super_Uni})
is satisfied in the one-phase design whenever the $p$-value at time
$t$ depends only on the time-$t$ sample $\mathbf{X}_{t}$ and $\mathbf{X}_{t}$
is independent of ${\cal F}_{t-1}$ under $\mathrm{P}_{0}$, i.e.,
$P_{t}=p_{t}\left(\mathbf{X}_{t}\right)$. Similarly, for a two-phase
design, with ${\cal F}_{t}=\sigma\left(\mathbf{X}_{0},\mathbf{X}_{1},\dots,\mathbf{X}_{t}\right)$,
the condition is satisfied whenever the time-$t$ $p$-value depends
only on the Phase~I sample $\mathbf{X}_{0}$ and the time-$t$ sample
$\mathbf{X}_{t}$ and $\mathbf{X}_{t}$ is conditionally independent
of $\left(\mathbf{X}_{1},\dots,\mathbf{X}_{t-1}\right)$ given $\mathbf{X}_{0}$
under $\mathrm{P}_{0}$, i.e., $P_{t}=p_{t}\left(\mathbf{X}_{0};\mathbf{X}_{t}\right)$.
These are among the most common cases for $p$-value constructions,
and thus the assumption can be considered reasonable. In the sequel,
we will show that certain EWMA-like schemes also satisfy condition
(\ref{eq:Conditional_Super_Uni}).

We note that the bound of Proposition \ref{prop:ARL_with_conditional_indep}
is, in fact, tight. This can be seen by considering an IID sequence
$\left(P_{t}\right)_{t}$ adapted to $\left({\cal F}_{t}\right)_{t}$,
where $P_{t}\sim\mathrm{Unif}\left(0,1\right)$ under $\mathrm{H}_{0}$
and therefore $\mathrm{P}_{0}\left(P_{t}\le\alpha\,\big|\,{\cal F}_{t-1}\right)=\alpha$
for each $\alpha\in\left[0,1\right]$ and $t\in\mathbb{N}$, thus
satisfying (\ref{eq:Conditional_Super_Uni}). It then holds that $\mathrm{P}_{0}\left(R>t\right)=\left(1-\alpha\right)^{t}$
and therefore $\mathrm{E}_{0}R=\alpha^{-1}$. This is the usual computation
of the ARL for parametric Shewhart-type charts (see, e.g., \citealp[Sec. 3.6.1]{chakraborti2019nonparametric}). 

\subsection{Controlling the $k$-ARL}

Like the ARL, we define the $k$-ARL as the expected time until $k$
alarms are raised, assuming that $\mathrm{P}_{X,t}$ satisfies $\mathrm{H}_{0}$
for each time $t$, and let $R_{k}:\Omega\to\mathbb{N}$ denote the
time until $k$ alarms are raised. That is, the $k$-ARL is equal
to $\mathrm{E}_{0}R_{k}$. Note that the usual ARL and run length
are the $1$-ARL and $R_{1}$, respectively. Under only super-uniformity,
we have the following result. 
\begin{prop}
\label{prop:kARL_superuniform} For each $\alpha\in\left(0,1\right]$,
if $\left(P_{t}\right)_{t}$ satisfies (\ref{eq:superuniformity}),
then, for any $k\ge1$, 
\[
\mathrm{E}_{0}R_{k}\ge\left(\nu+1\right)\left(1-\frac{\alpha\nu}{2k}\right)\ge\frac{k}{2\alpha}+\frac{1}{2}\text{,}
\]
where $\nu=\left\lfloor k/\alpha\right\rfloor $. 
\end{prop}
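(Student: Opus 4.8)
The plan is to mimic the survival-function argument used for Proposition \ref{prop:ARLnoconditions}, but to replace the union bound (which sufficed for the $k=1$ case) with Markov's inequality applied to the number of false alarms. First I would introduce the counting process $N_m = \sum_{t=1}^{m} \mathbf{1}\{P_t \le \alpha\}$ and observe that $R_k$ is the first time this count reaches $k$, so that $\{R_k \ge m\} = \{N_{m-1} \le k-1\}$ for every $m \ge 1$ (with the convention $N_0 = 0$). Since $R_k$ is a nonnegative integer-valued random variable (possibly $+\infty$, in which case the claimed bound is trivial), the survival-function identity for the expectation gives $\mathrm{E}_0 R_k = \sum_{m=1}^{\infty} \mathrm{P}_0(R_k \ge m) = \sum_{m=1}^{\infty} \mathrm{P}_0(N_{m-1} \le k-1)$.

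The central step is to lower-bound each summand. By Markov's inequality together with super-uniformity, $\mathrm{P}_0(N_{m-1} \ge k) \le \mathrm{E}_0 N_{m-1}/k \le (m-1)\alpha/k$, using only $\mathrm{E}_0 N_{m-1} = \sum_{t=1}^{m-1} \mathrm{P}_0(P_t \le \alpha) \le (m-1)\alpha$; crucially, this requires no assumption on the temporal dependence of the $p$-values. Hence $\mathrm{P}_0(N_{m-1} \le k-1) \ge [1 - (m-1)\alpha/k]_+$, and substituting $j = m-1$ leaves the arithmetic sum $\sum_{j=0}^{\infty} [1 - j\alpha/k]_+$. The nonzero terms are exactly those with $j \le \nu = \lfloor k/\alpha \rfloor$ (the boundary term at an integer $k/\alpha$ vanishes, so its inclusion is harmless), and evaluating the resulting finite arithmetic series yields the closed form $(\nu+1)(1 - \alpha\nu/(2k))$, which is the first claimed inequality.

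For the second inequality I would write $k/\alpha = \nu + \theta$ with $\theta \in [0,1)$, substitute into both sides, clear the positive denominator $2(\nu+\theta)$, and expand. After cancellation, the difference between the two sides collapses to $\theta(1-\theta)$, which is nonnegative on $[0,1)$, with equality precisely when $k/\alpha$ is an integer. This completes the chain of inequalities and, as a byproduct, identifies exactly when the cleaner $k/(2\alpha) + 1/2$ bound is attained.

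I expect the only genuinely delicate point to be the choice of Markov's inequality over the union bound: for $k > 1$, a naive inclusion--exclusion over which time points alarm would reintroduce the dependence among the events $\{P_t \le \alpha\}$ that we are explicitly avoiding, whereas bounding the count in expectation sidesteps this entirely and keeps the argument distribution-free. The remaining work---the arithmetic-series evaluation and the $\theta(1-\theta)$ reduction---is routine once the substitution $k/\alpha = \nu + \theta$ is in place.
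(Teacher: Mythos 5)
Your proposal is correct and follows essentially the same route as the paper: Markov's inequality applied to the alarm count $\sum_{t\le m}\mathbf{1}\{P_t\le\alpha\}$ under super-uniformity, the survival-function representation of $\mathrm{E}_0 R_k$, and the arithmetic-series evaluation over $j\in\{0,\dots,\nu\}$. Your explicit $k/\alpha=\nu+\theta$ reduction of the second inequality to $\theta(1-\theta)\ge0$ is just a rewriting of the paper's factorisation $(u-\nu)(u-(\nu+1))\le0$ from the proof of Proposition \ref{prop:ARLnoconditions}, to which the paper defers.
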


Under the stronger assumption of Proposition~\ref{prop:ARL_with_conditional_indep},
we also have an analogous result for controlling the $k$-ARL. Unfortunately,
we do not know an elementary proof of this result, as we did for Proposition
\ref{prop:ARL_with_conditional_indep}, and must instead resort to
martingale theory. 
\begin{prop}
\label{prop:Conditional_Super_Unif_kARL} Suppose that $\left(P_{t}\right)_{t}$
satisfies (\ref{eq:Conditional_Super_Uni}) for some filtration $\left({\cal F}_{t}\right)_{t}$.
Then, for each $\alpha\in\left(0,1\right]$ and $k\ge1$, 
\[
\mathrm{E}_{0}R_{k}\ge\frac{k}{\alpha}\text{.}
\]
\end{prop}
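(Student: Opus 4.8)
The plan is to recast the alarm times through a counting process and exploit a supermartingale structure inherited from conditional super-uniformity. Writing $A_{t}=\mathbf{1}\{P_{t}\le\alpha\}$ for the alarm indicator at time $t$ and $N_{t}=\sum_{s=1}^{t}A_{s}$ for the number of alarms up to time $t$ (with $N_{0}=0$), the process $\left(N_{t}\right)_{t}$ is adapted to $\left({\cal F}_{t}\right)_{t}$ and $R_{k}=\inf\{t\ge1:N_{t}=k\}$ is a stopping time. Since each $A_{t}$ takes values in $\{0,1\}$, the jump at time $R_{k}$ is exactly one, so $N_{R_{k}}=k$ on $\{R_{k}<\infty\}$. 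The key observation is that assumption (\ref{eq:Conditional_Super_Uni}) gives $\mathrm{E}_{0}\left[A_{t}\mid{\cal F}_{t-1}\right]=\mathrm{P}_{0}\left(P_{t}\le\alpha\mid{\cal F}_{t-1}\right)\le\alpha$ almost surely, so that $M_{t}:=N_{t}-\alpha t$ satisfies $\mathrm{E}_{0}\left[M_{t}-M_{t-1}\mid{\cal F}_{t-1}\right]\le0$; that is, $\left(M_{t}\right)_{t}$ is an $\left({\cal F}_{t}\right)_{t}$-supermartingale with $M_{0}=0$, and each $M_{t}$ is integrable because $\left|M_{t}\right|\le N_{t}+\alpha t\le\left(1+\alpha\right)t$.

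Next I would apply optional stopping in a localised form. Because $R_{k}\wedge n$ is a bounded stopping time for every $n\in\mathbb{N}$, the optional stopping theorem for supermartingales yields $\mathrm{E}_{0}\left[M_{R_{k}\wedge n}\right]\le\mathrm{E}_{0}\left[M_{0}\right]=0$, i.e.
\begin{equation*}
\mathrm{E}_{0}\left[N_{R_{k}\wedge n}\right]\le\alpha\,\mathrm{E}_{0}\left[R_{k}\wedge n\right].
\end{equation*}
A short check shows $N_{R_{k}\wedge n}=\min\left(N_{n},k\right)$: on $\{R_{k}\le n\}$ the left side equals $N_{R_{k}}=k$ and $N_{n}\ge k$, while on $\{R_{k}>n\}$ it equals $N_{n}<k$. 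This quantity increases to $k$ pointwise on the event $\{R_{k}<\infty\}$.

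It then remains to pass to the limit $n\to\infty$. The case $\mathrm{E}_{0}R_{k}=\infty$ makes the bound trivial, so I assume $\mathrm{E}_{0}R_{k}<\infty$, whence $R_{k}<\infty$ almost surely. On the right-hand side, $R_{k}\wedge n\uparrow R_{k}$ gives $\mathrm{E}_{0}\left[R_{k}\wedge n\right]\to\mathrm{E}_{0}R_{k}$ by monotone convergence; on the left-hand side, $0\le N_{R_{k}\wedge n}=\min\left(N_{n},k\right)\le k$ together with $N_{R_{k}\wedge n}\to k$ almost surely gives $\mathrm{E}_{0}\left[N_{R_{k}\wedge n}\right]\to k$ by bounded convergence. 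Combining these limits yields $k\le\alpha\,\mathrm{E}_{0}R_{k}$, i.e.\ $\mathrm{E}_{0}R_{k}\ge k/\alpha$, as claimed.

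The main delicacy, and the reason an elementary argument (as in the proof of Proposition \ref{prop:ARL_with_conditional_indep}) seems unavailable, is the unboundedness of $R_{k}$: optional stopping cannot be invoked directly at $R_{k}$, and one must localise at $R_{k}\wedge n$ and then justify the interchange of limit and expectation on both sides. The supermartingale direction is exactly what is needed, since conditional super-uniformity furnishes only the inequality $\mathrm{E}_{0}\left[A_{t}\mid{\cal F}_{t-1}\right]\le\alpha$ rather than an equality; the bound is attained with equality precisely in the conditionally uniform case.
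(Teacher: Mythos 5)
Your proof is correct and follows essentially the same route as the paper's: the same supermartingale $N_{t}-\alpha t$, optional stopping at the bounded stopping time $R_{k}\wedge n$, and the same limit passage (monotone convergence on the time side, bounded/monotone convergence on the counting side, with the infinite-expectation case handled trivially). The only cosmetic difference is that you case-split on $\mathrm{E}_{0}R_{k}<\infty$ where the paper splits on $\mathrm{P}_{0}(R_{k}=\infty)>0$; both dispositions are valid.
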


We note that we are not the first to study the time to $k$ alarms,
$R_{k}$, although we cannot find specific instances of the $k$-ARL
being studied in the $p$-value context in the literature. Prior work
includes the study of $k>1$ failure rules for negative binomial charts
\citep{Albers2010_NBcharts_JSPI,Albers2011_Overdispersion_Metrika},
and conforming-count charts (CCC-$r$) \citep{XieGohLu1998_CIM_CIE,OhtaKusukawaRahim2001_CCCr_QREI,JoekesSmrekarRighetti2016_CCCr_QE,Chen2025CCCr}.
In these works, the control of the $k$-ARL is considered parametrically
under a negative binomial model, in which conformity of the process
being measured is treated as an IID Bernoulli event. In contrast,
our approach is model-free and relies only on the availability of
super-uniform $p$-values that satisfy either (\ref{eq:superuniformity})
or (\ref{eq:Conditional_Super_Uni}).

Finally, we note that the bound from Proposition \ref{prop:Conditional_Super_Unif_kARL}
is sharp by the same reasoning as that of Proposition \ref{prop:ARL_with_conditional_indep}.
Namely, suppose that we take $P_{t}\sim\mathrm{Unif}\left(0,1\right)$
IID and independent of ${\cal F}_{t-1}$. Let $I_{t}$ denote the
indicator of the event $\left\{ P_{t}\le\alpha\right\} $. Then, $\left(I_{t}\right)_{t}$
are IID $\mathrm{Bern}\left(\alpha\right)$. If we define $G_{1}=\inf\left\{ t\ge1:I_{t}=1\right\} $
and $G_{j}=\inf\left\{ t\ge1:\sum_{s=1}^{t}I_{s}=j\right\} -\inf\left\{ t\ge1:\sum_{s=1}^{t}I_{s}=j-1\right\} $,
for each $j\ge2$, then $G_{j}\sim\mathrm{Geom}\left(\alpha\right)$
for each $j\in\mathbb{N}$, where $G_{j}$ is the inter-alarm waiting
time and $\mathrm{E}_{0}G_{j}=1/\alpha$. Then, the time to $k$ alarms
is simply the sum of the first $k$ inter-alarm times: $R_{k}=\sum_{j=1}^{k}G_{j}$,
and by linearity, we have $\mathrm{E}_{0}R_{k}=k/\alpha$. 

\subsection{Exponentially weighted moving average charts}

To construct our exponentially weighted moving average charts, we
rely on results from \citet{Vovk2020} that permit us to average arbitrarily
dependent $p$-values. For $p$-values $P_{1},\dots,P_{m}$ satisfying
(\ref{eq:superuniformity}), weights $\bm{w}=\left(w_{1},\dots,w_{m}\right)\in\left[0,1\right]^{m}$
such that $\sum_{t=1}^{m}w_{t}=1$, and $r>-1$ with $r\neq0$, we
define the weighted generalised mean 
\[
\left(p_{1},\dots,p_{m}\right)\mapsto\mathrm{M}_{r,\bm{w}}\left(p_{1},\dots,p_{m}\right)=\left(\sum_{t=1}^{m}w_{t}p_{t}^{r}\right)^{1/r}\text{.}
\]
We say that the function $q:\left[0,1\right]^{m}\to\mathbb{R}_{\ge0}$
is a valid merging function if $\mathrm{P}_{0}\left(q\left(P_{1},\dots,P_{m}\right)\le\alpha\right)\le\alpha$
for each $\alpha\in\left[0,1\right]$. The following result is a consequence
of \citet{Vovk2020}. 
\begin{lem}
\label{lem:p-value=00003D00003D00003D00003D000020averaging=00003D00003D00003D00003D000020Lemma}Let
$r>-1$ with $r\neq0$ and $w_{\mathrm{max}}=\max_{t\in\left[m\right]}w_{t}$. 
\begin{enumerate}
\item For $r>-1$, the map 
\[
q=\left(1+r\right)^{1/r}\mathrm{M}_{r,\bm{w}}
\]
is a valid merging function. 
\item For $r\in\left[1,\infty\right)$, the map 
\[
q=\bigl(\min\{1+r,w_{\mathrm{max}}^{-1}\}\bigr)^{1/r}\mathrm{M}_{r,\bm{w}}
\]
is a valid merging function. 
\end{enumerate}
\end{lem}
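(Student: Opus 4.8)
The plan is to deduce this lemma directly from the general theory of $p$-merging functions established in \citet{Vovk2020}. The key object is the notion of a \emph{precise merging function}: a function $F:[0,1]^m \to [0,\infty)$ is a valid merging function precisely when $F(P_1,\dots,P_m)$ is super-uniform for every collection of $p$-values satisfying \eqref{eq:superuniformity}, with no dependence assumptions. Vovk and Wang characterise, for each $r$, the smallest constant $a_{r,\bm w}$ such that $a_{r,\bm w}\,\mathrm{M}_{r,\bm w}$ is a valid merging function; the content of the present lemma is to exhibit explicit (not necessarily optimal) admissible constants $a_{r,\bm w}$ in the two regimes $r>-1$ and $r\ge 1$. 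So the first step is to restate the target inequality: for part 1 it suffices to show that for every $\alpha\in[0,1]$ and every super-uniform $(P_t)_t$,
\[
\mathrm{P}_0\!\left((1+r)^{1/r}\,\mathrm{M}_{r,\bm w}(P_1,\dots,P_m)\le\alpha\right)\le\alpha,
\]
and similarly for part 2 with the constant $\bigl(\min\{1+r,w_{\max}^{-1}\}\bigr)^{1/r}$ in place of $(1+r)^{1/r}$.

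Next I would reduce the problem to a statement about expectations via Markov's inequality, which is the standard route for generalised-mean merging. For part 1, observe that the event $\{(1+r)^{1/r}\mathrm{M}_{r,\bm w}\le\alpha\}$ is, after raising to the power $r$ (taking care of the sign of $r$, since for $r<0$ the map $x\mapsto x^r$ reverses inequalities), an event controlled by $\sum_t w_t P_t^r$. The crux is the elementary bound on the $r$th moment of a super-uniform variable: if $P$ satisfies $\mathrm{P}_0(P\le\alpha)\le\alpha$, then for $r>-1$, $r\ne0$,
\[
\mathrm{E}_0\!\left[P^r\right]\le\int_0^1 u^r\,\mathrm{d}u=\frac{1}{r+1},
\]
because a super-uniform variable is stochastically at least as large as $\mathrm{Unif}(0,1)$ and $u\mapsto u^r$ is monotone (increasing for $r>0$, decreasing for $-1<r<0$, in which directions the stochastic ordering pushes the expectation). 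Combining this with $\sum_t w_t=1$ gives $\mathrm{E}_0\!\left[\sum_t w_t P_t^r\right]\le 1/(1+r)$, and a single application of Markov's inequality to $\sum_t w_t P_t^r$ then yields the threshold probability bound with the claimed constant $(1+r)^{1/r}$. This is exactly the dependence-free ``$M_r$'' merging bound, and the calculation is by now routine.

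For part 2 the structure is the same Markov-type argument, but the $r\ge1$ regime affords a second, sometimes sharper, admissible constant coming from a crude worst-case control rather than the moment bound. Here one uses that $\mathrm{M}_{r,\bm w}(p_1,\dots,p_m)\ge w_{\max}^{1/r}\max_t p_t^{\,\cdot}$-type estimates together with the union bound: since at most one summand can dominate, the constant $w_{\max}^{-1/r}$ arises from controlling $\mathrm{P}_0(\mathrm{M}_{r,\bm w}\le\alpha)$ by a union bound over the $m$ coordinates weighted by $w_t$. Taking the better of the two available constants $1+r$ and $w_{\max}^{-1}$ (before the $1/r$ power) gives $\min\{1+r,w_{\max}^{-1}\}$. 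The honest way to present both parts is simply to cite the corresponding results of \citet{Vovk2020} for the averaging of arbitrarily dependent $p$-values and verify that the stated constants match the admissible constants derived there, since the lemma is explicitly flagged as ``a consequence of'' that work.

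The main obstacle I anticipate is not any single inequality but the careful handling of the sign of $r$ in the range $-1<r<0$: raising the threshold inequality to the $r$th power flips the direction, so one must track whether the resulting event is $\{\sum_t w_t P_t^r \ge \cdot\}$ or $\{\le \cdot\}$ and apply Markov's inequality in the correct orientation, while also confirming that the moment bound $\mathrm{E}_0[P^r]\le 1/(1+r)$ still holds (it does, by the same stochastic-dominance argument, precisely because $r>-1$ keeps $\int_0^1 u^r\,\mathrm{d}u$ finite). The boundary and degenerate cases — $\alpha\in\{0,1\}$, $P_t$ on the boundary of $[0,1]$, and the excluded value $r=0$ — should be checked separately but are immediate. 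Provided these sign conventions are handled cleanly, both statements follow from a one-line Markov bound after the moment/union estimates, and no deeper machinery from \citet{Vovk2020} is actually needed beyond the validity characterisation.
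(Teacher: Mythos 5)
Your reduction to Markov's inequality does not work, and the step that fails is precisely the content of the lemma. First, the moment bound $\mathrm{E}_0\left[P^{r}\right]\le1/(1+r)$ holds only for $-1<r<0$: super-uniformity means $P$ stochastically dominates $\mathrm{Unif}\left(0,1\right)$, so for an \emph{increasing} map such as $u\mapsto u^{r}$ with $r>0$ the inequality reverses, $\mathrm{E}_0\left[P^{r}\right]\ge1/(1+r)$ (e.g.\ $P\equiv1$ is super-uniform with $\mathrm{E}_0\left[P^{r}\right]=1$). Second, even on $-1<r<0$, where the moment bound does hold, Markov's inequality gives the wrong exponent: writing $S=\sum_{t}w_{t}P_{t}^{r}$, the event $\bigl\{(1+r)^{1/r}S^{1/r}\le\alpha\bigr\}$ equals $\bigl\{S\ge\alpha^{r}/(1+r)\bigr\}$ (the power $r<0$ flips the inequality), and Markov yields $\mathrm{P}_0\bigl(S\ge\alpha^{r}/(1+r)\bigr)\le(1+r)\alpha^{-r}\,\mathrm{E}_0\left[S\right]\le\alpha^{-r}$, which exceeds $\alpha$ for every $\alpha\in\left(0,1\right)$ because $-r\in\left(0,1\right)$. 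So neither regime of part~1 is reachable by a one-line Markov bound, and your closing claim that no deeper machinery is needed is not correct: the validity of $(1+r)^{1/r}\mathrm{M}_{r,\bm{w}}$ under arbitrary dependence is exactly the non-trivial theorem of \citet{Vovk2020}, proved there via duality results for quantile aggregation under dependence uncertainty rather than via moment bounds. The paper itself offers no proof and simply cites that work, which is also your stated fallback; that fallback is the only part of your plan that actually delivers part~1.

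The one piece of your argument that is essentially sound is the $w_{\mathrm{max}}^{-1/r}$ constant in part~2, although the mechanism is a containment in a single event, not a union bound. For $r\ge1$, the event $\sum_{t}w_{t}P_{t}^{r}\le w_{\mathrm{max}}\alpha^{r}$ forces $w_{t^{*}}P_{t^{*}}^{r}\le w_{\mathrm{max}}\alpha^{r}$ for the index $t^{*}$ attaining $w_{\mathrm{max}}$, hence $P_{t^{*}}\le\alpha$, and super-uniformity of the single $p$-value $P_{t^{*}}$ gives the bound $\alpha$. A genuine union bound over the $m$ coordinates would produce $\sum_{t}(w_{\mathrm{max}}/w_{t})^{1/r}\alpha\ge m\alpha$ and is useless. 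With that repair, the $w_{\mathrm{max}}^{-1}$ branch of the minimum in part~2 is elementary, but the $(1+r)^{1/r}$ branch in both parts still has to be imported from \citet{Vovk2020}.
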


For a generic sequence of test statistics $\left(Z_{t}\right)_{t}$
and a learning rate $\lambda\in\left(0,1\right)$, an EWMA chart $\left(\tilde{Z}_{\lambda,t}\right)_{t}$
is typically defined via the one-step expression: 
\[
\tilde{Z}_{\lambda,t}=\lambda Z_{t}+\left(1-\lambda\right)\tilde{Z}_{\lambda,t-1}\text{,}
\]
for each $t\in\mathbb{N}$, given some initial $\tilde{Z}_{\lambda,0}$.
Unfortunately, such a scheme is not implementable for $p$-values,
in the sense that if $\left(P_{t}\right)_{t}$ satisfies (\ref{eq:superuniformity})
and 
\[
\tilde{P}_{\lambda,t}=\lambda P_{t}+\left(1-\lambda\right)\tilde{P}_{\lambda,t-1}\text{,}
\]
then $\left(\tilde{P}_{\lambda,t}\right)_{t}$ need not be a sequence
of super-uniform $p$-values under $\mathrm{P}_{0}$. We thus require
a modified approach, making use of Lemma \ref{lem:p-value=00003D00003D00003D00003D000020averaging=00003D00003D00003D00003D000020Lemma}.

\subsubsection{EWMA-like charts via $p$-value averaging }

\label{subsec:EWMA-like-charts-p-avg}

Starting with $\left(P_{t}\right)_{t\in\mathbb{N}}$ satisfying (\ref{eq:superuniformity}),
for each $\lambda\in\left(0,1\right)$ and $r>-1$ with $r\neq0$
we define the EWMA recursion directly from the first $p$-value by
\[
S_{\lambda,1}^{\left(r\right)}=P_{1}^{r},\text{ and }S_{\lambda,t}^{\left(r\right)}=\lambda P_{t}^{r}+\left(1-\lambda\right)S_{\lambda,t-1}^{\left(r\right)},\text{ for }t\ge2.
\]
By induction one checks that, for each $t\in\mathbb{N}$, 
\begin{equation}
S_{\lambda,t}^{\left(r\right)}=\sum_{s=1}^{t}w_{t,s}P_{s}^{r},\text{ }w_{t,1}=(1-\lambda)^{t-1},\text{ and }w_{t,s}=\lambda(1-\lambda)^{t-s},\;\mathrm{for\ }2\le s\le t,\label{eq:EWMA-no-P0-weights}
\end{equation}
so that $w_{t,s}\ge0$ and $\sum_{s=1}^{t}w_{t,s}=1$.

Let 
\[
w_{t,\max}=\max_{1\le s\le t}w_{t,s}=\max\left\{ \lambda,(1-\lambda)^{t-1}\right\} .
\]
For each $\lambda\in(0,1)$, $r>-1$ with $r\neq0$, and $t\in\mathbb{N}$
we now set 
\begin{equation}
a_{\lambda,t}^{\left(r\right)}=\begin{cases}
\min\left\{ 1+r,\frac{1}{w_{t,\max}}\right\} ^{1/r} & \mathrm{if\ }r\ge1,\\[0.3em]
\left(1+r\right)^{1/r} & \mathrm{if\ }r\in\left(-1,1\right)\setminus\{0\},
\end{cases}\text{ and }Q_{\lambda,t}^{\left(r\right)}=a_{\lambda,t}^{\left(r\right)}\left[S_{\lambda,t}^{\left(r\right)}\right]^{1/r}.\label{eq:Q-def}
\end{equation}
The fact that $\left(Q_{\lambda,t}^{\left(r\right)}\right)_{t}$ is
EWMA-like can be seen most clearly when $r=1$, in which case 
\[
Q_{\lambda,1}^{\left(1\right)}=P_{1}
\]
and, for each $t\ge2$, 
\begin{equation}
Q_{\lambda,t}^{\left(1\right)}=a_{\lambda,t}^{\left(1\right)}\left\{ \lambda P_{t}+\left(1-\lambda\right)\frac{Q_{\lambda,t-1}^{\left(1\right)}}{a_{\lambda,t-1}^{\left(1\right)}}\right\} ,\label{eq:Q-r1-recursion}
\end{equation}
so that $Q_{\lambda,t}^{(1)}$ is a rescaled EWMA of the $p$-values
$P_{t}$. 
\begin{prop}
\label{prop:EWMA-like-charts} For each $\lambda\in\left(0,1\right)$
and $r>-1$ with $r\neq0$, if $\left(P_{t}\right)_{t}$ satisfies
(\ref{eq:superuniformity}), then the sequence $\left(Q_{\lambda,t}^{\left(r\right)}\right)_{t\in\mathbb{N}}$
defined in \emph{(\ref{eq:EWMA-no-P0-weights})--(\ref{eq:Q-def})}
is super-uniform under $\mathrm{P}_{0}$, i.e. for each $\alpha\in\left[0,1\right]$,
\[
\mathrm{P}_{0}\left(Q_{\lambda,t}^{\left(r\right)}\le\alpha\right)\le\alpha\quad\mathrm{for\ all\ }t\in\mathbb{N}.
\]
\end{prop}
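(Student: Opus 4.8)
The plan is to recognise, for each fixed $t\in\mathbb{N}$, the statistic $Q_{\lambda,t}^{\left(r\right)}$ as a valid merging function applied to the $p$-values $P_{1},\dots,P_{t}$, so that its super-uniformity follows immediately from Lemma~\ref{lem:p-value=00003D00003D00003D00003D000020averaging=00003D00003D00003D00003D000020Lemma}. First I would rewrite $\left[S_{\lambda,t}^{\left(r\right)}\right]^{1/r}$ in terms of the weighted generalised mean. By (\ref{eq:EWMA-no-P0-weights}) we have $S_{\lambda,t}^{\left(r\right)}=\sum_{s=1}^{t}w_{t,s}P_{s}^{r}$, whence $\left[S_{\lambda,t}^{\left(r\right)}\right]^{1/r}=\mathrm{M}_{r,\bm{w}_{t}}\left(P_{1},\dots,P_{t}\right)$, where $\bm{w}_{t}=\left(w_{t,1},\dots,w_{t,t}\right)$. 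The weights are deterministic functions of $\lambda$, $t$, and $s$, and, as already noted below (\ref{eq:EWMA-no-P0-weights}), satisfy $w_{t,s}\ge0$ and $\sum_{s=1}^{t}w_{t,s}=1$; thus $\bm{w}_{t}$ is an admissible weight vector for the lemma, with largest component $w_{t,\max}=\max_{1\le s\le t}w_{t,s}$.

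Next I would match the normalising constant $a_{\lambda,t}^{\left(r\right)}$ of (\ref{eq:Q-def}) against the two parts of the lemma, splitting on the size of $r$. For $r\in\left(-1,1\right)\setminus\{0\}$, the constant is $\left(1+r\right)^{1/r}$, so that $Q_{\lambda,t}^{\left(r\right)}=\left(1+r\right)^{1/r}\,\mathrm{M}_{r,\bm{w}_{t}}\left(P_{1},\dots,P_{t}\right)$ is exactly the merging function of part~1 (valid for every $r>-1$, hence in particular here). For $r\ge1$, the constant is $\min\left\{1+r,1/w_{t,\max}\right\}^{1/r}$, so that $Q_{\lambda,t}^{\left(r\right)}$ coincides with the merging function of part~2 (valid for $r\in\left[1,\infty\right)$), with the lemma's $w_{\max}$ taken to be $w_{t,\max}$. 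In either regime, for each fixed $t$ the map $q_{t}=Q_{\lambda,t}^{\left(r\right)}$ is a valid merging function of $\left(P_{1},\dots,P_{t}\right)$.

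Finally, by the definition of a valid merging function, $\mathrm{P}_{0}\left(q_{t}\left(P_{1},\dots,P_{t}\right)\le\alpha\right)\le\alpha$ for every $\alpha\in\left[0,1\right]$, which is precisely the asserted super-uniformity of $Q_{\lambda,t}^{\left(r\right)}$; since $t$ was arbitrary, the conclusion holds for all $t\in\mathbb{N}$.

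I do not anticipate a substantive analytical obstacle, since the heavy lifting—validity of the generalised-mean merging functions under \emph{arbitrary} dependence—is already packaged in Lemma~\ref{lem:p-value=00003D00003D00003D00003D000020averaging=00003D00003D00003D00003D000020Lemma} (a consequence of \citealp{Vovk2020}). The only points requiring care are bookkeeping ones: confirming that the time-$t$ EWMA weights are deterministic, nonnegative, and sum to one, so that the lemma genuinely applies for each fixed $t$; and verifying that the case split in (\ref{eq:Q-def}) aligns exactly with the two regimes of the lemma, so that the correct multiplicative constant is attached in each case. It is worth emphasising that the argument is entirely pointwise in $t$ and invokes no independence among the $P_{s}$, which is what legitimises the construction for arbitrarily dependent $p$-value sequences.
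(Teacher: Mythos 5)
Your proposal is correct and follows essentially the same route as the paper's own proof: fix $t$, note the weights $w_{t,s}$ are nonnegative and sum to one, identify $Q_{\lambda,t}^{\left(r\right)}$ with the merging function of Lemma \ref{lem:p-value=00003D00003D00003D00003D000020averaging=00003D00003D00003D00003D000020Lemma} part 2 when $r\ge1$ and part 1 when $r\in\left(-1,1\right)\setminus\{0\}$, and conclude. No gaps.
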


A notable inconvenience in applying Proposition \ref{prop:EWMA-like-charts}
when $r\ge1$ is that the constants $\left(a_{\lambda,t}^{\left(r\right)}\right)_{t}$
are not time-independent. At some loss of sharpness in the super-uniform
inequality, we can replace $a_{\lambda,t}^{\left(r\right)}$ by a
time-independent bound $a_{\lambda}^{\left(r\right)}$ that works
for all $t$. Since $w_{t,\max}\ge\lambda$ for every $t\ge1$, we
have $w_{t,\max}^{-1}\le\lambda^{-1}$ and therefore 
\[
\min\{1+r,w_{t,\max}^{-1}\}\le\min\{1+r,\lambda^{-1}\}.
\]
Define 
\begin{equation}
a_{\lambda}^{\left(r\right)}=\begin{cases}
\min\left\{ 1+r,\frac{1}{\lambda}\right\} ^{1/r} & \mathrm{if\ }r\ge1,\\[0.3em]
\left(1+r\right)^{1/r} & \mathrm{if\ }r\in\left(-1,1\right)\setminus\{0\},
\end{cases}\text{ and }\tilde{Q}_{\lambda,t}^{\left(r\right)}=a_{\lambda}^{\left(r\right)}\left[S_{\lambda,t}^{\left(r\right)}\right]^{1/r}.\label{eq:Q-tilde-def}
\end{equation}
Then $a_{\lambda}^{\left(r\right)}\ge a_{\lambda,t}^{\left(r\right)}$
for every $t$, so 
\[
\tilde{Q}_{\lambda,t}^{\left(r\right)}=\frac{a_{\lambda}^{\left(r\right)}}{a_{\lambda,t}^{\left(r\right)}}Q_{\lambda,t}^{\left(r\right)}\ge Q_{\lambda,t}^{\left(r\right)}.
\]
Hence $\left\{ \tilde{Q}_{\lambda,t}^{\left(r\right)}\le\alpha\right\} \subseteq\left\{ Q_{\lambda,t}^{\left(r\right)}\le\alpha\right\} $
for all $\alpha\in[0,1]$, and the super-uniformity of $Q_{\lambda,t}^{\left(r\right)}$
yields 
\[
\mathrm{P}_{0}\left(\tilde{Q}_{\lambda,t}^{\left(r\right)}\le\alpha\right)\le\mathrm{P}_{0}\left(Q_{\lambda,t}^{\left(r\right)}\le\alpha\right)\le\alpha,\quad\mathrm{for\ all\ }t\in\mathbb{N}.
\]

Another unfortunate inconvenience is that the EWMA-like $p$-values
$\left(Q_{\lambda,t}^{\left(r\right)}\right)_{t}$ and their modification
$\left(\tilde{Q}_{\lambda,t}^{\left(r\right)}\right)_{t}$ need not
maintain \emph{conditional} super-uniformity when the original sequence
$\left(P_{t}\right)_{t}$ satisfies the conditional super-uniformity
condition (\ref{eq:Conditional_Super_Uni}). Fortunately, conditional
super-uniformity is preserved for the modification 
\begin{equation}
\bar{Q}_{\lambda,t}^{\left(r\right)}=\lambda^{-1/r}\left[S_{\lambda,t}^{\left(r\right)}\right]^{1/r},\quad t\ge1,\label{eq:Q-bar-def}
\end{equation}
for each $\lambda\in\left(0,1\right)$, when $r\ge1$. In the special
case $r=1$ we have 
\[
\bar{Q}_{\lambda,1}^{\left(1\right)}=\lambda^{-1}P_{1}\quad\mathrm{and}\quad\bar{Q}_{\lambda,t}^{\left(1\right)}=P_{t}+\left(1-\lambda\right)\bar{Q}_{\lambda,t-1}^{\left(1\right)},\quad t\ge2,
\]
so that $\lambda\bar{Q}_{\lambda,t}^{\left(1\right)}$ is the usual
EWMA of the raw $p$-values $P_{t}$ with weight $\lambda$ on the
current observation. 
\begin{prop}
\label{prop:EWMA-like-charts-conditional} For each $\lambda\in\left(0,1\right)$
and $r\ge1$, suppose $\left(P_{t}\right)_{t}$ satisfies (\ref{eq:Conditional_Super_Uni})
for the filtration $\left({\cal F}_{t}\right)_{t}$. Define $S_{\lambda,t}^{\left(r\right)}$
as above and $\bar{Q}_{\lambda,t}^{\left(r\right)}$ by (\ref{eq:Q-bar-def}).
Then $\left(\bar{Q}_{\lambda,t}^{\left(r\right)}\right)_{t}$ is conditionally
super-uniform under $\mathrm{P}_{0}$, i.e., for each $\alpha\in\left[0,1\right]$,
\[
\mathrm{P}_{0}\left(\bar{Q}_{\lambda,t}^{\left(r\right)}\le\alpha\,\big|\,{\cal F}_{t-1}\right)\le\alpha\text{, }\mathrm{a.s.,\ for\ all\ }t\ge1.
\]
\end{prop}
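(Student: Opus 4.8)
The plan is to unwind the event $\{\bar{Q}_{\lambda,t}^{(r)}\le\alpha\}$ into a statement about $P_{t}$ alone against an ${\cal F}_{t-1}$-measurable threshold, and then invoke conditional super-uniformity at that random threshold. Fix $\alpha\in[0,1]$. Since $r\ge1$ and $\lambda\in(0,1)$, the map $x\mapsto\lambda^{-1/r}x^{1/r}$ is strictly increasing on $[0,\infty)$, so
\[
\left\{\bar{Q}_{\lambda,t}^{\left(r\right)}\le\alpha\right\}=\left\{S_{\lambda,t}^{\left(r\right)}\le\lambda\alpha^{r}\right\}.
\]
For $t=1$ this reads $\{P_{1}\le\lambda^{1/r}\alpha\}$, and since $\lambda^{1/r}\le1$ the base case follows at once by applying (\ref{eq:Conditional_Super_Uni}) at the deterministic level $\lambda^{1/r}\alpha\le\alpha$.

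For $t\ge2$ I would substitute the recursion $S_{\lambda,t}^{(r)}=\lambda P_{t}^{r}+(1-\lambda)S_{\lambda,t-1}^{(r)}$ to rewrite the event as $\{P_{t}^{r}\le b_{t}\}$, where
\[
b_{t}=\alpha^{r}-\frac{1-\lambda}{\lambda}\,S_{\lambda,t-1}^{\left(r\right)}.
\]
The quantity $b_{t}$ is ${\cal F}_{t-1}$-measurable because $S_{\lambda,t-1}^{(r)}$ is a function of $P_{1},\dots,P_{t-1}$, and since each $P_{s}\in[0,1]$ forces $S_{\lambda,t-1}^{(r)}\ge0$, we get $b_{t}\le\alpha^{r}$. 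Setting $A_{t}=\left[b_{t}\right]_{+}^{1/r}$, which is ${\cal F}_{t-1}$-measurable with $A_{t}\in[0,\alpha]$, one checks by splitting on the sign of $b_{t}$ that $\{P_{t}^{r}\le b_{t}\}\subseteq\{P_{t}\le A_{t}\}$: the two events coincide when $b_{t}\ge0$, and the left-hand event is empty when $b_{t}<0$. Hence
\[
\mathrm{P}_{0}\!\left(\bar{Q}_{\lambda,t}^{\left(r\right)}\le\alpha\mid{\cal F}_{t-1}\right)\le\mathrm{P}_{0}\!\left(P_{t}\le A_{t}\mid{\cal F}_{t-1}\right),
\]
and it remains only to bound the right-hand side by $A_{t}\le\alpha$.

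The crux, and where I expect the real work to lie, is a substitution step: the hypothesis (\ref{eq:Conditional_Super_Uni}) is stated for \emph{deterministic} thresholds, whereas I need it at the random, ${\cal F}_{t-1}$-measurable threshold $A_{t}$. I would fix a regular conditional distribution of $P_{t}$ given ${\cal F}_{t-1}$, with conditional CDF $F(\omega,x)=\mathrm{P}_{0}(P_{t}\le x\mid{\cal F}_{t-1})(\omega)$, which exists since $P_{t}$ takes values in the Polish space $[0,1]$. Applying (\ref{eq:Conditional_Super_Uni}) at each rational $q\in[0,1]\cap\mathbb{Q}$ gives $F(\cdot,q)\le q$ a.s., and intersecting over this countable family yields a single null set off which $F(\omega,q)\le q$ for \emph{all} rational $q$ simultaneously; right-continuity of $x\mapsto F(\omega,x)$ then upgrades this to $F(\omega,x)\le x$ for every $x\in[0,1]$, a.s. Because $A_{t}$ is ${\cal F}_{t-1}$-measurable, the substitution property of regular conditional distributions gives $\mathrm{P}_{0}(P_{t}\le A_{t}\mid{\cal F}_{t-1})=F(\cdot,A_{t})$ a.s., whence $F(\cdot,A_{t})\le A_{t}\le\alpha$ a.s. Chaining the displayed inequalities closes the proof. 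The only delicate points are the existence of the regular conditional distribution and the simultaneous-in-$x$ upgrade, both routine once organised as above; everything else is elementary algebra driven by the monotonicity afforded by $r\ge1$.
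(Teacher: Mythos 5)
Your argument is correct, but it takes a noticeably heavier route than the paper's. The paper simply observes that, since $(1-\lambda)S_{\lambda,t-1}^{(r)}\ge0$, one has the pointwise bound $\bar{Q}_{\lambda,t}^{(r)}=\lambda^{-1/r}\{\lambda P_{t}^{r}+(1-\lambda)S_{\lambda,t-1}^{(r)}\}^{1/r}\ge P_{t}$, hence $\{\bar{Q}_{\lambda,t}^{(r)}\le\alpha\}\subseteq\{P_{t}\le\alpha\}$, and then applies (\ref{eq:Conditional_Super_Uni}) at the \emph{deterministic} level $\alpha$ --- two lines, no measure-theoretic machinery. You instead characterise the alarm event exactly as $\{P_{t}\le A_{t}\}$ for the ${\cal F}_{t-1}$-measurable threshold $A_{t}=[b_{t}]_{+}^{1/r}\le\alpha$, which forces you to upgrade (\ref{eq:Conditional_Super_Uni}) from fixed to random thresholds via a regular conditional distribution, a countable intersection over rationals, and the substitution property. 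All of those steps are sound (monotonicity of the conditional CDF alone suffices for the rational-to-real upgrade), and your route does buy a genuinely sharper conclusion, namely $\mathrm{P}_{0}(\bar{Q}_{\lambda,t}^{(r)}\le\alpha\mid{\cal F}_{t-1})\le A_{t}$, which quantifies the additional conservativeness of the chart when the recent $p$-value history is large; but since you immediately coarsen $A_{t}\le\alpha$ to conclude, the extra precision is not needed for the stated result, and the paper's one-line containment achieves the same end with far less overhead.
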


\begin{rem}
\label{rem:EWMA-lambda-large-small} For $r\ge1$, when $\lambda\ge1/2$,
it holds that $1/\lambda\le2\le1+r$, so $a_{\lambda}^{\left(r\right)}=\lambda^{-1/r}$
and therefore $\tilde{Q}_{\lambda,t}^{\left(r\right)}=\bar{Q}_{\lambda,t}^{\left(r\right)}$
for all $t\ge1$. Moreover, for $t\ge2$ we have $(1-\lambda)^{t-1}\le1-\lambda\le\lambda$,
so $w_{t,\max}=\lambda$, hence $a_{\lambda,t}^{\left(r\right)}=\lambda^{-1/r}$
and 
\[
Q_{\lambda,t}^{\left(r\right)}=\tilde{Q}_{\lambda,t}^{\left(r\right)}=\bar{Q}_{\lambda,t}^{\left(r\right)},\text{ for }t\ge2.
\]
At time $t=1$ we have $w_{1,\max}=1$, so $a_{\lambda,1}^{\left(r\right)}=1$
and $Q_{\lambda,1}^{\left(r\right)}=P_{1}$ while $\tilde{Q}_{\lambda,1}^{\left(r\right)}=\bar{Q}_{\lambda,1}^{\left(r\right)}=\lambda^{-1/r}P_{1}$.
Thus, for $\lambda\ge1/2$, the only discrepancy between $Q_{\lambda,t}^{\left(r\right)}$
and the conditionally valid $\bar{Q}_{\lambda,t}^{\left(r\right)}$
occurs at the very first time point.

When $\lambda<1/2$, the quantities $Q_{\lambda,t}^{\left(r\right)}$
and $\tilde{Q}_{\lambda,t}^{\left(r\right)}$ can differ for early
time points, because the maximal EWMA weight $w_{t,\max}=\max\{\lambda,(1-\lambda)^{t-1}\}$
may be larger than $\lambda$. However, since $(1-\lambda)^{t-1}$
decays geometrically in $t$, there exists a finite index $t_{\max}(\lambda,r)$
such that $Q_{\lambda,t}^{\left(r\right)}=\tilde{Q}_{\lambda,t}^{\left(r\right)}$
for all $t>t_{\max}(\lambda,r)$. Thus, for large $t$ and $\lambda<1/2$
it suffices to work with the simpler, time-homogeneous EWMA-like chart
$\left(\tilde{Q}_{\lambda,t}^{\left(r\right)}\right)_{t}$. 
\end{rem}

An alternative construction of EWMA-like charts for $p$-value statistics
is via the method of $e$-values. We provide such a derivation in
the Appendix for completeness but note that the constructed objects
did not perform as well in comparison to the averaging method above
in external numerical assessments. 

\subsubsection{The distribution of the EWMA process for independent uniform random
variables}

\label{subsec:Uniform_EWMA_Theorems}

We conclude this section with an extension of the exploration of the
EWMA process for IID uniform random variables, considered by \citet{yeh2003multivariate}.
In particular, \citet{yeh2003multivariate} consider a sequence of
IID test statistics $\left(Z_{t}\right)_{t}$, whose distribution
under $\mathrm{H}_{0}$ is equal to that of $Z$. Using the probability
integral transformation (i.e., the cumulative distribution function
(CDF) of $Z$, $\mathrm{F}_{Z}$), the authors generate the sequence
$\left(U_{t}\right)_{t}$, where, for each $t$, $U_{t}=\mathrm{F}_{Z}\left(Z_{t}\right)$
has distribution $\mathrm{Unif}\left(0,1\right)$ under $\mathrm{H}_{0}$
(when $\mathrm{F}_{Z}$ is continuous). The authors then consider
the transformation $V_{t}=U_{t}-1/2$, and construct the corresponding
EWMA process $\left(\tilde{V}_{\lambda,t}\right)_{t}$, defined by
\[
\tilde{V}_{\lambda,t}=\lambda V_{t}+\left(1-\lambda\right)\tilde{V}_{\lambda,t-1}\text{,}
\]
for $\lambda\in\left(0,1\right)$ and $t\ge1$, with $\tilde{V}_{\lambda,0}=0$.
They proceed to produce mean and variance formulas for the process
and obtain ARL estimates for various control rules via Monte Carlo
simulation.

We follow on from their work by considering IID sequences $\left(U_{t}\right)_{t}$
with $U_{t}\sim\mathrm{Unif}\left(0,1\right)$ (we may regard $P_{t}=U_{t}$
as a $p$-value under $\mathrm{H}_{0}$), and construct the EWMA process
$\left(\tilde{U}_{\lambda,t}\right)_{t}$, defined by 
\[
\tilde{U}_{\lambda,t}=\lambda U_{t}+\left(1-\lambda\right)\tilde{U}_{\lambda,t-1}\text{,}
\]
for $\lambda\in\left(0,1\right)$ and $t\ge1$, with $\tilde{U}_{\lambda,0}=u_{0}\in\left[0,1\right]$.
Since $V_{t}=U_{t}-1/2$ and the EWMA recursion is linear, the two
EWMAs are related by 
\[
\tilde{U}_{\lambda,t}=\tilde{V}_{\lambda,t}+\frac{1}{2}+\left(1-\lambda\right)^{t}\Big(u_{0}-\frac{1}{2}\Big)\text{, for }t\ge1,
\]
where $\left(\tilde{V}_{\lambda,t}\right)_{t}$ denotes the EWMA of
$V_{t}$ with $\tilde{V}_{\lambda,0}=0$. In particular, when $u_{0}=1/2$,
\[
\tilde{U}_{\lambda,t}=\tilde{V}_{\lambda,t}+\frac{1}{2}\text{, for all }t\ge1,
\]
so $\left(\tilde{U}_{\lambda,t}\right)_{t}$ is a mean-shift of $\left(\tilde{V}_{\lambda,t}\right)_{t}$
and thus characterises the same underlying stochastic object, up to
an additive constant. We now present an expression for the probability
density function (PDF) of $\tilde{U}_{\lambda,t}$, for each $t\ge1$. 
\begin{prop}
\label{prop:PDF_of_Ut}Let $a_{t,s}=\lambda\left(1-\lambda\right)^{t-s}$
and $a_{\mathbb{S}}=\sum_{s\in\mathbb{S}}a_{t,s}$, for each $s\in\left[t\right]$
and $\mathbb{S}\subseteq\left[t\right]$. Then, for each $u_{0}\in\left[0,1\right]$
and $t\ge1$, the PDF and CDF of $\tilde{U}_{\lambda,t}$ have the
respective forms: 
\[
u\mapsto f_{t,u_{0}}\left(u\right)=\frac{1}{\left(t-1\right)!\prod_{s=1}^{t}a_{t,s}}\sum_{\mathbb{S}\subseteq\left[t\right]}\left(-1\right)^{\left|\mathbb{S}\right|}\left[u-\left(1-\lambda\right)^{t}u_{0}-a_{\mathbb{S}}\right]_{+}^{t-1}\text{,}
\]
\[
u\mapsto\mathrm{F}_{t,u_{0}}\left(u\right)=\frac{1}{t!\prod_{s=1}^{t}a_{t,s}}\sum_{\mathbb{S}\subseteq\left[t\right]}\left(-1\right)^{\left|\mathbb{S}\right|}\left[u-\left(1-\lambda\right)^{t}u_{0}-a_{\mathbb{S}}\right]_{+}^{t}\text{.}
\]
\end{prop}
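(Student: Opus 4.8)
The plan is to reduce the statement to a classical fact about positively-weighted sums of independent uniforms. First I would unroll the EWMA recursion: iterating $\tilde{U}_{\lambda,t}=\lambda U_{t}+(1-\lambda)\tilde{U}_{\lambda,t-1}$ from $\tilde{U}_{\lambda,0}=u_{0}$ gives the closed form $\tilde{U}_{\lambda,t}=(1-\lambda)^{t}u_{0}+\sum_{s=1}^{t}a_{t,s}U_{s}$, with $a_{t,s}=\lambda(1-\lambda)^{t-s}$ as in the statement. Since the deterministic term $(1-\lambda)^{t}u_{0}$ only translates the distribution, it suffices to obtain the law of $W_{t}:=\sum_{s=1}^{t}a_{t,s}U_{s}$ and then replace $u$ by $u-(1-\lambda)^{t}u_{0}$ throughout. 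Each summand $a_{t,s}U_{s}$ is $\mathrm{Unif}(0,a_{t,s})$ and the summands are independent, so the target reduces to the density and CDF of a positively-weighted sum of independent standard uniforms.

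Next I would prove the following general claim, of which the proposition is the special case $a_{s}=a_{t,s}$: for independent $U_{1},\dots,U_{t}\sim\mathrm{Unif}(0,1)$ and any weights $a_{1},\dots,a_{t}>0$, the sum $W=\sum_{s=1}^{t}a_{s}U_{s}$ has density $f_{W}(w)=\bigl[(t-1)!\prod_{s=1}^{t}a_{s}\bigr]^{-1}\sum_{\mathbb{S}\subseteq[t]}(-1)^{|\mathbb{S}|}[w-a_{\mathbb{S}}]_{+}^{t-1}$, where $a_{\mathbb{S}}=\sum_{s\in\mathbb{S}}a_{s}$. I would establish this by induction on $t$ using convolution; no distinctness of the weights is needed, so the argument is insensitive to coincidences such as $\lambda=1/2$. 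The base case $t=1$ is $W=a_{1}U_{1}\sim\mathrm{Unif}(0,a_{1})$, and the claimed formula returns $a_{1}^{-1}\bigl(\mathbf{1}\{w>0\}-\mathbf{1}\{w>a_{1}\}\bigr)$ under the convention $[x]_{+}^{0}=\mathbf{1}\{x>0\}$, matching the uniform density.

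For the inductive step I would write $W_{t}=W_{t-1}+a_{t}U_{t}$ with $W_{t-1}$ independent of $U_{t}$, so that convolving with the $\mathrm{Unif}(0,a_{t})$ density yields $f_{W_{t}}(w)=a_{t}^{-1}\bigl[F_{W_{t-1}}(w)-F_{W_{t-1}}(w-a_{t})\bigr]$, where $F_{W_{t-1}}$ is obtained by integrating the inductive density formula (raising the exponent to $t-1$ and dividing by $(t-1)!$; the constant of integration vanishes because every $[w-a_{\mathbb{S}}]_{+}^{t-1}$, including the $\mathbb{S}=\emptyset$ term $[w]_{+}^{t-1}$, is zero for $w\le0$). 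Substituting this CDF and collecting terms, the subsets $\mathbb{S}\subseteq[t]$ split into those not containing $t$, contributing $(-1)^{|\mathbb{S}|}[w-a_{\mathbb{S}}]_{+}^{t-1}$ from $F_{W_{t-1}}(w)$, and those of the form $\mathbb{S}'\cup\{t\}$, for which $a_{\mathbb{S}'\cup\{t\}}=a_{\mathbb{S}'}+a_{t}$ and the sign flips, matching $-F_{W_{t-1}}(w-a_{t})$ exactly. This recombination gives the claimed density at level $t$, and integrating once more (again with vanishing constant) yields the stated CDF with exponent $t$ and factor $t!$. The final step is to set $a_{s}=a_{t,s}$ and reinstate the shift $u\mapsto u-(1-\lambda)^{t}u_{0}$.

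I expect the only real obstacle to be the bookkeeping in the inductive recombination, namely correctly pairing each subset containing $t$ with its counterpart in the translated CDF and tracking the sign flip, together with the mild care needed at the base case over the convention for $[x]_{+}^{0}$ and over the vanishing of the integration constants. An alternative route via characteristic functions is available, since $W$ has characteristic function $\prod_{s}(e^{ia_{s}\xi}-1)/(ia_{s}\xi)$ whose numerator expands by inclusion--exclusion into $\sum_{\mathbb{S}}(-1)^{t-|\mathbb{S}|}e^{ia_{\mathbb{S}}\xi}$; however, the individual inverted terms $(i\xi)^{-t}e^{ia_{\mathbb{S}}\xi}$ are singular at $\xi=0$ and only their signed combination is integrable, so the Fourier approach requires extra justification that the elementary convolution induction avoids. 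For this reason I would present the induction as the main argument.
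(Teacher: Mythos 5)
Your proposal is correct and follows essentially the same route as the paper's proof: unroll the recursion to isolate the deterministic shift $(1-\lambda)^{t}u_{0}$, then establish the density of the weighted sum of independent uniforms by induction via convolution with the $\mathrm{Unif}(0,a_{t,t})$ kernel, recombining the subsets $\mathbb{S}\subseteq[t-1]$ with their translates $\mathbb{S}\cup\{t\}$, and integrating termwise for the CDF. The paper packages the convolution step as a small lemma ($\int_{0}^{a}[x-y]_{+}^{m-1}\,\mathrm{d}y=([x]_{+}^{m}-[x-a]_{+}^{m})/m$), which is exactly your identity $f_{W_{t}}(w)=a_{t}^{-1}[F_{W_{t-1}}(w)-F_{W_{t-1}}(w-a_{t})]$ in a different guise, so the two arguments coincide.
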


From the linearity of expectation applied to $\tilde{U}_{\lambda,t}$,
and the fact that $\mathrm{E}_{0}U_{t}=1/2$ and $\mathrm{var}_{0}\left(U_{t}\right)=1/12$
(where $\mathrm{var}_{0}$ is the variance under $\mathrm{H}_{0}$),
we have 
\[
\mathrm{E}_{0}\tilde{U}_{\lambda,t}=\left(1-\lambda\right)^{t}u_{0}+\sum_{s=1}^{t}a_{t,s}\,\mathrm{E}_{0}U_{s}=\frac{1}{2}+\left(1-\lambda\right)^{t}\left(u_{0}-\frac{1}{2}\right)\text{,}
\]
and, by independence of $\left(U_{t}\right)_{t}$, 
\[
\mathrm{var}_{0}\left(\tilde{U}_{\lambda,t}\right)=\sum_{s=1}^{t}a_{t,s}^{2}\,\mathrm{var}_{0}\left(U_{s}\right)=\frac{1}{12}\frac{\lambda\left[1-\left(1-\lambda\right)^{2t}\right]}{2-\lambda}\text{.}
\]
These expressions match those obtained by \citet{yeh2003multivariate}.

As we stated at the start of the section, the EWMA $p$-values $\left(\tilde{P}_{\lambda,t}\right)_{t}$
need not be super-uniform under $\mathrm{H}_{0}$, in general. We
validate this claim by showing (i) if $u_{0}<1$, then $\tilde{U}_{\lambda,t}$
is not super-uniform for every $t\ge1$ and $\lambda\in(0,1)$; and
(ii) when $u_{0}=1$, the only globally super-uniform case is $t=1$,
whereas for $t\ge2$ super-uniformity fails.

For (i), note that since $a_{t,s}U_{s}:\Omega\to[0,a_{t,s}]$ for
each $s$, 
\[
\sum_{s=1}^{t}a_{t,s}U_{s}:\Omega\to\Bigl[0,\sum_{s=1}^{t}a_{t,s}\Bigr]=\bigl[0,\,1-(1-\lambda)^{t}\bigr].
\]
Because $\tilde{U}_{\lambda,t}=(1-\lambda)^{t}u_{0}+\sum_{s=1}^{t}a_{t,s}U_{s}$,
\[
\tilde{U}_{\lambda,t}:\Omega\to\bigl[(1-\lambda)^{t}u_{0},1-(1-\lambda)^{t}(1-u_{0})\bigr].
\]
Hence, if $\alpha^{*}$ satisfies $1-(1-\lambda)^{t}(1-u_{0})<\alpha^{*}<1$,
then $\mathrm{P}_{0}\left(\tilde{U}_{\lambda,t}\le\alpha^{*}\right)=1>\alpha^{*}$,
so $\tilde{U}_{\lambda,t}$ is not super-uniform with respect to $\mathrm{P}_{0}$.
In particular, if $u_{0}<1$, such an $\alpha^{*}$ always exists.

For (ii), take $u_{0}=1$, so $\mathrm{supp}(\tilde{U}_{\lambda,t})=[(1-\lambda)^{t},1]$.
When $t=1$, $\tilde{U}_{\lambda,1}\sim\mathrm{Unif}\left(1-\lambda,1\right)$
and is globally super-uniform. For $t\ge2$, define the normalised
convex combination 
\[
\bar{U}_{\lambda,t}=\frac{\tilde{U}_{\lambda,t}-(1-\lambda)^{t}}{1-(1-\lambda)^{t}}=\sum_{s=1}^{t}w_{s}U_{s}\text{, }w_{s}=\frac{a_{t,s}}{1-(1-\lambda)^{t}}\text{, and }\sum_{s=1}^{t}w_{s}=1.
\]
Let $\mathrm{G}_{\lambda,t}(\bar{u})=\mathrm{P}_{0}\left(\bar{U}_{\lambda,t}\le\bar{u}\right)$
for $\bar{u}\in[0,1]$. Then 
\[
\mathrm{F}_{t,1}(\alpha)=\mathrm{G}_{\lambda,t}\left(\frac{\alpha-(1-\lambda)^{t}}{1-(1-\lambda)^{t}}\right).
\]
For $t\ge2$, the two largest weights are 
\[
w_{t}=\frac{\lambda}{1-(1-\lambda)^{t}}\text{ and }w_{t-1}=\frac{\lambda(1-\lambda)}{1-(1-\lambda)^{t}}.
\]
Set 
\[
c=\frac{1}{w_{t}w_{t-1}}=\dfrac{\bigl(1-(1-\lambda)^{t}\bigr)^{2}}{\lambda^{2}(1-\lambda)}\text{.}
\]
If $\epsilon\le\min\{w_{t},w_{t-1}\}$, then the event $\{\bar{U}_{\lambda,t}>1-\epsilon\}$
implies both $\{w_{t}(1-U_{t})<\epsilon\}$ and $\{w_{t-1}(1-U_{t-1})<\epsilon\}$.
By independence and the $\mathrm{Unif}\left(0,1\right)$ CDF, 
\[
\mathrm{P}_{0}\left(\bar{U}_{\lambda,t}>1-\epsilon\right)\le\frac{\epsilon}{w_{t}}\cdot\frac{\epsilon}{w_{t-1}}=c\epsilon^{2},
\]
so 
\[
\mathrm{G}_{\lambda,t}(1-\epsilon)=1-\mathrm{P}_{0}\left(\bar{U}_{\lambda,t}>1-\epsilon\right)\ge1-c\epsilon^{2}.
\]

Choose 
\[
\epsilon^{*}=(1-\eta)\min\left\{ \frac{w_{t}}{2},\frac{w_{t-1}}{2},\frac{1-(1-\lambda)^{t}}{2c}\right\} \text{, for some }\eta\in(0,1).
\]
Then $\epsilon^{*}\le\min\{w_{t},w_{t-1}\}$ and $c\epsilon^{*2}<\left\{ 1-(1-\lambda)^{t}\right\} \epsilon^{*}/2$,
yielding the chain of inequalities: 
\[
\mathrm{G}_{\lambda,t}(1-\epsilon^{*})\ge1-c\epsilon^{*2}>1-\frac{1-(1-\lambda)^{t}}{2}\epsilon^{*}>1-\bigl(1-(1-\lambda)^{t}\bigr)\epsilon^{*}.
\]
Setting $\alpha^{*}=1-\bigl(1-(1-\lambda)^{t}\bigr)\epsilon^{*}$,
we conclude 
\[
\mathrm{F}_{t,1}(\alpha^{*})=\mathrm{G}_{\lambda,t}(1-\epsilon^{*})>\alpha^{*},
\]
which violates the super-uniformity inequality $\mathrm{P}_{0}\left(\tilde{U}_{\lambda,t}\le\alpha\right)\le\alpha$
for all $\alpha\in[0,1]$. Hence, for $u_{0}=1$ and $t\ge2$, $\tilde{U}_{\lambda,t}$
is not super-uniform.

We conclude this section with a positive outcome: although the process
$\left(\tilde{U}_{\lambda,t}\right)_{t}$ is not globally super-uniform
under $\mathrm{H}_{0}$, it is left-tail super-uniform whenever $u_{0}\ge1/2$,
i.e., 
\[
\mathrm{P}_{0}\bigl(\tilde{U}_{\lambda,t}\le\alpha\bigr)\le\alpha\text{ for every }\alpha\in[0,1/2].
\]
This enables us to still use $\left(\tilde{U}_{\lambda,t}\right)_{t}$
with the alarm rule $\tilde{U}_{\lambda,t}\le\alpha$ and still enjoy
the guarantee of Proposition \ref{prop:ARLnoconditions} (whose conclusion
is valid for any fixed $\alpha\in[0,1]$ for which $\mathrm{P}_{0}\bigl(\tilde{U}_{\lambda,t}\le\alpha\bigr)\le\alpha$
for all $t\ge1$). 
\begin{prop}
\label{prop:Ut_left_super_unif} For each $\alpha\in[0,1/2]$, $u_{0}\ge1/2$,
$\lambda\in(0,1)$, and $t\ge1$, 
\[
\mathrm{P}_{0}\bigl(\tilde{U}_{\lambda,t}\le\alpha\bigr)\le\alpha\text{.}
\]
\end{prop}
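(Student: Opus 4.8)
The plan is to reduce to the centred case $u_{0}=1/2$ and then exploit symmetry together with unimodality of the density. First I would note that, for fixed realisations of $\left(U_{s}\right)_{s}$, the value $\tilde{U}_{\lambda,t}=\left(1-\lambda\right)^{t}u_{0}+\sum_{s=1}^{t}a_{t,s}U_{s}$ is non-decreasing in $u_{0}$, so the event $\{\tilde{U}_{\lambda,t}\le\alpha\}$ shrinks as $u_{0}$ increases and $\mathrm{P}_{0}(\tilde{U}_{\lambda,t}\le\alpha)$ is non-increasing in $u_{0}$; it therefore suffices to prove the bound at $u_{0}=1/2$, the case $u_{0}\ge1/2$ following at once. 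Next I would show that, when $u_{0}=1/2$, the law of $\tilde{U}_{\lambda,t}$ is symmetric about $1/2$: putting $U_{s}'=1-U_{s}$, the vector $\left(U_{s}'\right)_{s}$ is equal in distribution to $\left(U_{s}\right)_{s}$, and the identity $\sum_{s=1}^{t}a_{t,s}=1-\left(1-\lambda\right)^{t}$ gives $\tfrac{1}{2}\left(1-\lambda\right)^{t}+\sum_{s=1}^{t}a_{t,s}U_{s}'=1-\tilde{U}_{\lambda,t}$, whence $1-\tilde{U}_{\lambda,t}\stackrel{d}{=}\tilde{U}_{\lambda,t}$. Since the law is continuous, this yields in particular $\mathrm{F}_{t,1/2}(1/2)=\mathrm{P}_{0}(\tilde{U}_{\lambda,t}\le1/2)=1/2$.

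The second ingredient is a monotonicity property of the density. Writing $\tilde{U}_{\lambda,t}-\tfrac{1}{2}\left(1-\lambda\right)^{t}=\sum_{s=1}^{t}a_{t,s}U_{s}$ as a sum of independent scaled uniforms, each summand $a_{t,s}U_{s}$ has the log-concave density $a_{t,s}^{-1}\mathbf{1}_{[0,a_{t,s}]}$; as log-concavity is preserved under convolution and translation, $\tilde{U}_{\lambda,t}$ has a log-concave, and hence unimodal, density $f_{t,1/2}$. Together with the symmetry about $1/2$ established above, unimodality forces the mode to lie at $1/2$, so that $f_{t,1/2}$ is non-decreasing on $\left(-\infty,1/2\right]$; equivalently, $\mathrm{F}_{t,1/2}$ is convex on $\left[0,1/2\right]$.

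Finally I would close the argument with a convexity comparison. Set $D(\alpha)=\mathrm{F}_{t,1/2}(\alpha)-\alpha$ for $\alpha\in\left[0,1/2\right]$. Subtracting a linear term preserves convexity, so $D$ is convex; moreover $D(0)=0$, because $\tilde{U}_{\lambda,t}\ge\tfrac{1}{2}\left(1-\lambda\right)^{t}>0$ almost surely and hence $\mathrm{F}_{t,1/2}(0)=0$, while $D(1/2)=\mathrm{F}_{t,1/2}(1/2)-1/2=0$ by the symmetry computed above. A convex function vanishing at both endpoints of an interval is non-positive on that interval, so $D(\alpha)\le0$, i.e. $\mathrm{P}_{0}(\tilde{U}_{\lambda,t}\le\alpha)\le\alpha$ for all $\alpha\in\left[0,1/2\right]$. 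Combined with the reduction in the first paragraph, this gives the conclusion for every $u_{0}\ge1/2$.

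The step I expect to be the main obstacle is the density-monotonicity claim: one must justify with care that the convolution of the scaled-uniform densities is log-concave and that symmetry pins the mode at $1/2$, so that $f_{t,1/2}$ is truly non-decreasing up to $1/2$, including the flat-topped case $t=1$ in which $\tilde{U}_{\lambda,1}$ is uniform on $\left[(1-\lambda)/2,(1+\lambda)/2\right]$. Should one prefer to avoid log-concavity, the monotonicity of $f_{t,1/2}$ on $\left[0,1/2\right]$ can instead be read directly from the explicit density formula of Proposition \ref{prop:PDF_of_Ut}, at the cost of a more computational argument.
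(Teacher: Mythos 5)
Your proof is correct, and it reaches the same structural fact the paper relies on -- that the weighted sum of independent uniforms has a density that is symmetric about its centre and non-decreasing up to that centre -- but by a genuinely different route at both ends of the argument. Where the paper establishes evenness and monotonicity of the centred density by an explicit induction on the convolutions $f_{t,m}^{\mathrm{cent}}=f_{t,m-1}^{\mathrm{cent}}\ast h$ with symmetric uniform kernels (a self-contained but somewhat laborious computation via Leibniz' rule), you obtain unimodality from log-concavity of the uniform density and its preservation under convolution (Pr\'ekopa's theorem), with symmetry pinning the mode at $1/2$; your fallback of reading monotonicity off the explicit formula in Proposition \ref{prop:PDF_of_Ut} is also viable, and the flat-topped $t=1$ case you worry about causes no trouble since a constant density is still non-decreasing and the resulting CDF is still convex on $[0,1/2]$. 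At the other end, the paper bounds $\mathrm{P}_{0}(\tilde{U}_{\lambda,t}^{\mathrm{cent}}\le-d)$ via the averaging inequality $d^{-1}\int_{0}^{d}g\ge\bar{d}^{-1}\int_{0}^{\bar{d}}g$ for non-increasing $g$, keeping $u_{0}$ general through the shift $d=c_{t}(u_{0})-\alpha$, whereas you first reduce to $u_{0}=1/2$ by pathwise monotonicity (a clean simplification the paper does not make) and then close with a chord argument: $D(\alpha)=\mathrm{F}_{t,1/2}(\alpha)-\alpha$ is convex on $[0,1/2]$ and vanishes at both endpoints, hence is non-positive. The two endgames are essentially dual formulations of the same comparison, but yours is arguably more conceptual and shorter, at the cost of importing the (standard but nontrivial) log-concavity-under-convolution fact, while the paper's version remains entirely elementary and self-contained.
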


\section{Directional and coordinate localisation via $p$-value charts}

We now specialise the framework to a setting where the monitored process
is indexed by a $d$-dimensional parameter $\theta=(\theta_{1},\dots,\theta_{d})$
with a fixed IC baseline $\theta_{0}=(\theta_{0,1},\dots,\theta_{0,d})\in\mathbb{R}^{d}$.
In a one-phase design, we may consider the null measure under $\mathrm{H}_{0}$,
$\mathrm{P}_{X,0}=\mathrm{P}_{X,\theta_{0}}$, and test against the
alternative that $\mathrm{P}_{X,t}=\mathrm{P}_{X,\theta}$ for some
$\theta\ne\theta_{0}$. In a two-phase design, for $X\in\mathbb{R}^{d}$
one may assess the null $\mathrm{H}_{0}$ that $\mathrm{P}_{X,t}=\mathrm{P}_{X,0}$
by testing, for example, the equality in mean $\mathrm{E}X_{t}=\mathrm{E}X_{0}$,
where $\Delta=\mathrm{E}X_{t}-\mathrm{E}X_{0}$ satisfies $\Delta=0$
under $\mathrm{H}_{0}$. 

At the point of alarm, our inferential objective is twofold: firstly,
to identify affected coordinates $j\in[d]$, for which $\theta_{j}\ne\theta_{0,j}$.
Secondly, for each affected coordinate $j$, to determine the direction
of change, i.e., whether $\theta_{j}<\theta_{0,j}$ or $\theta_{j}>\theta_{0,j}$.
We aim to make these conclusions in a principled manner with simultaneous
control, in the sense that the probability that at least one of our
dimensional or directional conclusions is incorrect is bounded above
by a pre-specified level $\alpha$.

To this end, for each coordinate $j\in[d]$ and direction $\square\in\{\le,\ge\}$,
suppose we have a sequence of directional $p$-values $\bigl(P_{t}^{(j,\square)}\bigr)_{t\in\mathbb{N}}$
that test the corresponding one-sided null $\mathrm{H}_{0}^{(j,\square)}$
(e.g., $\theta_{0,j}\ge\theta_{j}$ when $\square=\ge$, or $\theta_{0,j}\le\theta_{j}$
when $\square=\le$). We assume that each sequence is valid in the
super-uniform sense used in Section \ref{sec:Main-results}; i.e.,
for every $t\in\mathbb{N}$ and $\alpha\in[0,1]$, 
\[
\mathrm{P}_{0,t}^{(j,\square)}\bigl(P_{t}^{(j,\square)}\le\alpha\bigr)\le\alpha,
\]
whenever $\mathrm{P}_{0,t}^{(j,\square)}$ is a probability measure
on $(\Omega,\mathfrak{A})$ such that the push-forward measures $\mathrm{P}_{X,s}$
satisfy $\mathrm{H}_{0}^{(j,\square)}$ for all $s\in[t]$. In particular,
in a one-phase design the $p$-values may be of the form $P_{t}^{(j,\square)}=p_{t}^{(j,\square)}(\mathbf{X}_{1},\dots,\mathbf{X}_{t})$
(or $p_{t}^{(j,\square)}(\mathbf{X}_{t})$), while in a two-phase
design they may be of the form $P_{t}^{(j,\square)}=p_{t}^{(j,\square)}(\mathbf{X}_{0};\mathbf{X}_{t})$.
When convenient, we may further assume the conditional super-uniformity
property (as in (\ref{eq:Conditional_Super_Uni})) with respect to
a filtration $({\cal F}_{t})_{t\in\mathbb{N}}$.

Given these $2d$ sequences of directional $p$-values, we construct
an overall sequence of $p$-values $\bigl(\bar{P}_{t}\bigr)_{t\in\mathbb{N}}$
satisfying the super-uniformity condition (\ref{eq:superuniformity}),
and whose induced rejections satisfy the closed-testing principle
of \citet{MarcusPeritzGabriel1976}, based on the family of tests
\[
\mathbb{H}_{0}=\bigl\{\mathrm{H}_{0}^{(j,\square)}:j\in[d],\square\in\{\le,\ge\}\bigr\}.
\]
Upon an alarm raised by the rule $\bar{P}_{t}\le\alpha$, the procedure
provides a set of rejections $\mathbb{D}_{t}\subset[d]\times\{\le,\ge\}$
such that the family-wise error rate satisfies 
\begin{equation}
\mathrm{P}_{0}\bigl(\exists(j,\square)\in\mathbb{D}_{t}:\ \theta_{0,j}\ \square\ \theta_{j}\bigr)\le\alpha.\label{eq:FWER_control-1}
\end{equation}

\subsection{$p$-value chart construction }

\label{subsec:Directional chart}

We take as input the $2d$ sequences of $p$-values $\bigl(P_{t}^{(j,\square)}\bigr)_{t\in\mathbb{N}}$
and choose our level $\alpha\in(0,1)$ to determine our acceptable
ARL. Then, at each time $t$, we proceed via the following steps: 
\begin{enumerate}
\item For each $j\in[d]$, compute $P_{t}^{j}=\min\{1,\,2\min\{P_{t}^{(j,\le)},P_{t}^{(j,\ge)}\}\}$. 
\item Compute the aggregate $p$-value via the expression $\bar{P}_{t}=\min\{1,\,d\min_{j\in[d]}P_{t}^{j}\}$
or $\bar{P}_{t}=\min\{1,\,\min\{2,d\}\,d^{-1}\sum_{j=1}^{d}P_{t}^{j}\}$. 
\end{enumerate}
We then reject the global null hypothesis $\mathrm{H}_{0}:\theta=\theta_{0}$
in favour of the alternative that $\theta\ne\theta_{0}$ and raise
an alarm at time $t$ if $\bar{P}_{t}\le\alpha$. Note that the two
choices for $\bar{P}_{t}$ correspond to the Bonferroni aggregator
and the $r=1$ case of Lemma \ref{lem:p-value=00003D00003D00003D00003D000020averaging=00003D00003D00003D00003D000020Lemma},
with uniform weights to ensure that it satisfies the super-uniformity
condition (\ref{eq:superuniformity}). One can instead compute the
aggregate $p$-value via any other merging function that satisfies
Lemma \ref{lem:p-value=00003D00003D00003D00003D000020averaging=00003D00003D00003D00003D000020Lemma}.

At the time point where we raise an alarm, we apply Holm's procedure
(\citealp{Holm1979}) on the set $\{P_{t}^{j}\}_{j\in[d]}$. For each
coordinate $j$ that is rejected by Holm's procedure, we then place
the pair $(j,\le)$ into $\mathbb{D}_{t}$ if $P_{t}^{(j,\le)}\le P_{t}^{(j,\ge)}$,
else we place the pair $(j,\ge)$ into $\mathbb{D}_{t}$. The set
of paired coordinates and directions $\mathbb{D}_{t}$ then satisfies
(\ref{eq:FWER_control-1}). We can summarise this process via the
following steps: 
\begin{enumerate}
\item Order $\{P_{t}^{j}\}_{j\in[d]}$ to obtain the order statistics $P_{t}^{(1)}\le P_{t}^{(2)}\le\dots\le P_{t}^{(d)}$,
and note the corresponding indices of the ordering: $j_{(1)},\dots,j_{(d)}$. 
\item Let $k^{*}$ be the largest $k\in[d]$ such that $P_{t}^{(i)}\le\alpha/(d-i+1)$
for all $i\in[k]$. Put $j_{(1)},\dots,j_{(k^{*})}$ into $\mathbb{J}_{t}$. 
\item For each $j\in\mathbb{J}_{t}$, if $P_{t}^{(j,\le)}\le P_{t}^{(j,\ge)}$,
then put $(j,\le)$ into $\mathbb{D}_{t}$; else, put $(j,\ge)$ into
$\mathbb{D}_{t}$. 
\end{enumerate}

\subsubsection{Family-wise error rate control}

To verify that (\ref{eq:FWER_control-1}) holds, we first recall the
closed testing principle of \citet{MarcusPeritzGabriel1976} (see
also \citealp[Thm.~3.4]{Dickhaus2014SSI}) and the equivalence of
Holm's step-down procedure to the Bonferroni closure (cf. \citealp{Holm1979}). 
\begin{thm}
\label{thm:closure} Let $\{\,\mathbb{H}_{j}:j\in[m]\,\}$ be a finite
family of hypotheses. For every nonempty $\mathbb{J}\subseteq[m]$,
write $\mathbb{H}_{\mathbb{J}}=\bigcap_{j\in\mathbb{J}}\mathbb{H}_{j}$
and let $\mathcal{M}(\mathbb{H}_{\mathbb{J}})$ denote the set of
all data--generating distributions under $\mathbb{H}_{\mathbb{J}}$.
Suppose that for each such $\mathbb{J}$ we are given a local test
$T_{\mathbb{J}}$ of $\mathbb{H}_{\mathbb{J}}$ with level $\alpha$
in the sense that 
\[
\sup_{\mathrm{Q}\in\mathcal{M}(\mathbb{H}_{\mathbb{J}})}\mathrm{Q}\big(T_{\mathbb{J}}=1\big)\ \le\ \alpha.
\]
Define the closed \emph{testing} procedure that rejects an elementary
hypothesis $\mathbb{H}_{j}$ if and only if $T_{\mathbb{J}}$ rejects
$\mathbb{H}_{\mathbb{J}}$ for every $\mathbb{J}$ such that $j\in\mathbb{J}$.
Then the procedure has strong FWER control at level $\alpha$, i.e.
\[
\mathrm{Q}\big(\exists j\in[m]\ \text{with }\mathbb{H}_{j}\ \text{true and rejected}\big)\le\alpha,
\]
for any data--generating distribution $\mathrm{Q}$. 
\end{thm}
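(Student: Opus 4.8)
The plan is to fix an arbitrary data-generating distribution $\mathrm{Q}$ and reduce the global family-wise event to a single local test, namely the test of the intersection of all \emph{true} null hypotheses. First I would introduce the index set $\mathbb{T}=\{j\in[m]:\mathbb{H}_j\text{ is true under }\mathrm{Q}\}$, which depends on $\mathrm{Q}$ but not on the observed data. If $\mathbb{T}=\emptyset$ there is nothing to prove, since no true hypothesis can then be falsely rejected and the event in question is empty; so I assume $\mathbb{T}\neq\emptyset$, whence $\mathbb{H}_{\mathbb{T}}$ and its local test $T_{\mathbb{T}}$ are well defined and $\mathrm{Q}\in\mathcal{M}(\mathbb{H}_{\mathbb{T}})$, because every $\mathbb{H}_j$ with $j\in\mathbb{T}$ holds under $\mathrm{Q}$.

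The crux of the argument is a deterministic (pointwise) event inclusion. Suppose the closed procedure commits at least one false rejection, i.e., it rejects some elementary $\mathbb{H}_j$ with $j\in\mathbb{T}$. By the very definition of the closure, rejecting $\mathbb{H}_j$ requires $T_{\mathbb{J}}$ to reject $\mathbb{H}_{\mathbb{J}}$ for \emph{every} $\mathbb{J}\ni j$. Since $j\in\mathbb{T}$, the set $\mathbb{T}$ is itself one such $\mathbb{J}$, and therefore $T_{\mathbb{T}}=1$ necessarily holds. Thus the event ``at least one true hypothesis is rejected'' is contained in the event $\{T_{\mathbb{T}}=1\}$.

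I would then conclude by monotonicity of probability together with the level guarantee on the local test:
\[
\mathrm{Q}\bigl(\exists j\in[m]:\mathbb{H}_j\text{ true and rejected}\bigr)=\mathrm{Q}\bigl(\exists j\in\mathbb{T}:\mathbb{H}_j\text{ rejected}\bigr)\le\mathrm{Q}\bigl(T_{\mathbb{T}}=1\bigr)\le\sup_{\mathrm{Q}'\in\mathcal{M}(\mathbb{H}_{\mathbb{T}})}\mathrm{Q}'\bigl(T_{\mathbb{T}}=1\bigr)\le\alpha,
\]
where the final inequality is exactly the hypothesis that each local test has level $\alpha$. Since $\mathrm{Q}$ was arbitrary, strong FWER control follows.

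The main obstacle---really the single idea on which everything rests---is recognising the event inclusion of the second paragraph: that a false rejection of any true elementary hypothesis forces rejection of the intersection $\mathbb{H}_{\mathbb{T}}$ of all true nulls. Once this containment is identified, no union bound and no assumption on the joint law of the local tests is required, which is precisely why the closure controls the FWER strongly and without structural conditions. I would take care to emphasise that $\mathbb{T}$ depends on $\mathrm{Q}$ but not on the data, so that $T_{\mathbb{T}}$ is a single fixed local test to which the level-$\alpha$ bound applies directly.
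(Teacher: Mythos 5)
Your proof is correct and is precisely the classical closed-testing argument: the event inclusion $\{\exists j\in\mathbb{T}\ \text{rejected}\}\subseteq\{T_{\mathbb{T}}=1\}$, followed by the level-$\alpha$ guarantee applied to the single fixed local test $T_{\mathbb{T}}$, with the trivial case $\mathbb{T}=\varnothing$ handled separately. The paper does not supply its own proof of this theorem but cites it from \citet{MarcusPeritzGabriel1976} and \citet[Thm.~3.4]{Dickhaus2014SSI}, and the argument you give is exactly the standard one underlying those references.
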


\begin{thm}
\label{thm:Holm's}Let $\{P_{j}\}_{j\in[m]}$ be super-uniform $p$-values
for the hypotheses $\{\mathbb{H}_{j}\}_{j\in[m]}$. For each nonempty
$\mathbb{J}\subset[m]$, define the Bonferroni local test: 
\[
T_{\mathbb{J}}^{\mathrm{Bon}}=\mathbf{1}_{\{\min_{j\in\mathbb{J}}P_{j}\le\alpha/|\mathbb{J}|\}}.
\]
Then the closed testing procedure generated by $\{T_{\mathbb{J}}^{\mathrm{Bon}}\}_{\mathbb{J}\subset[m]}$
is exactly Holm's step-down procedure: i.e., if $P_{(1)}\le\dots\le P_{(m)}$
are order statistics of $\{P_{j}\}_{j\in[m]}$, with indices $j_{(1)},\dots,j_{(m)}$,
then reject $\mathbb{H}_{j_{(1)}},\dots,\mathbb{H}_{j_{(k^{*})}}$,
where $k^{*}$ is the largest $k\in[m]$ such that $P_{(i)}\le\alpha/(m-i+1)$
for all $i\in[k]$. 
\end{thm}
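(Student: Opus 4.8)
The plan is to prove Theorem~\ref{thm:Holm's} by establishing the set-theoretic identity between the rejection set of the Bonferroni closure and the rejection set of Holm's step-down rule. I would argue the two inclusions separately. Throughout, fix the order statistics $P_{(1)}\le\dots\le P_{(m)}$ with associated indices $j_{(1)},\dots,j_{(m)}$, and let $k^{*}$ be the Holm cut-off, i.e.\ the largest $k\in[m]$ with $P_{(i)}\le\alpha/(m-i+1)$ for all $i\in[k]$. The key combinatorial observation, which I would isolate as the engine of both inclusions, is this: for any nonempty $\mathbb{J}\subseteq[m]$ of cardinality $|\mathbb{J}|=r$, the local Bonferroni test $T_{\mathbb{J}}^{\mathrm{Bon}}$ rejects $\mathbb{H}_{\mathbb{J}}$ if and only if $\min_{j\in\mathbb{J}}P_{j}\le\alpha/r$, so that whether a given elementary $\mathbb{H}_{j}$ survives the closure is governed entirely by the \emph{largest} intersections containing $j$ on which the minimum $p$-value is still small.

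For the first inclusion (Holm rejects $\Rightarrow$ closure rejects), suppose $j=j_{(\ell)}$ with $\ell\le k^{*}$. I must show that every nonempty $\mathbb{J}\ni j$ has $T_{\mathbb{J}}^{\mathrm{Bon}}=1$. The crucial step is to bound $\min_{i\in\mathbb{J}}P_{i}$ from above by $P_{(\ell)}$ and to bound $|\mathbb{J}|$ appropriately. Concretely, if $|\mathbb{J}|=r$, then because $\mathbb{J}$ contains $j=j_{(\ell)}$ together with at most $r-1$ other indices, at least one element of $\mathbb{J}$ has rank $\le m-r+1$; combined with $\ell\le k^{*}$ and the monotonicity of the Holm thresholds, one shows $\min_{i\in\mathbb{J}}P_{i}\le P_{(\min\{\ell,\,m-r+1\})}\le\alpha/r$, which is exactly the condition for $T_{\mathbb{J}}^{\mathrm{Bon}}$ to reject. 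I would lay this out by splitting on whether $r\le m-\ell+1$ or $r>m-\ell+1$, using the defining property of $k^{*}$ in the former case and a pigeonhole argument on ranks in the latter.

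For the reverse inclusion (closure rejects $\Rightarrow$ Holm rejects), I argue the contrapositive: if $j=j_{(\ell)}$ with $\ell>k^{*}$, I exhibit a specific witness set $\mathbb{J}^{*}\ni j$ on which the local test \emph{fails} to reject, so $j$ is not rejected by the closure. The natural candidate is $\mathbb{J}^{*}=\{j_{(k^{*}+1)},\dots,j_{(m)}\}$, the indices of the $m-k^{*}$ largest $p$-values, which has cardinality $m-k^{*}$ and contains $j$ since $\ell>k^{*}$. By the definition of $k^{*}$ as the \emph{largest} index with the Holm property, the threshold must be violated at position $k^{*}+1$, giving $\min_{i\in\mathbb{J}^{*}}P_{i}=P_{(k^{*}+1)}>\alpha/(m-k^{*})=\alpha/|\mathbb{J}^{*}|$, so $T_{\mathbb{J}^{*}}^{\mathrm{Bon}}=0$ and the closure does not reject $\mathbb{H}_{j}$.

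The main obstacle I anticipate is the rank-counting in the first inclusion, specifically verifying that the relevant minimum over $\mathbb{J}$ really is controlled by a sufficiently small order statistic in \emph{all} cases, rather than just the easy case $\mathbb{J}=\{j_{(1)},\dots,j_{(\ell)}\}$. The subtlety is that $\mathbb{J}$ may be large and contain high-rank indices, which loosens the threshold $\alpha/|\mathbb{J}|$; I must confirm that the threshold loosens at least as fast as the attainable minimum grows. I would also remark that Theorem~\ref{thm:closure} is not needed internally here, since the claim is a purely deterministic identity of rejection sets; rather, Theorem~\ref{thm:closure} combined with this identity is what later certifies the FWER guarantee \eqref{eq:FWER_control-1} for the directional procedure.
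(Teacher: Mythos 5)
Your proof is correct. Note first that the paper does not actually prove this statement: Theorem~\ref{thm:Holm's} is \emph{recalled} as a known equivalence, with the proof delegated to the citation of \citet{Holm1979}, so there is no in-paper argument to compare against. Your self-contained verification is the standard one and all the steps check out. In the first inclusion, the pigeonhole bound (some element of a set $\mathbb{J}$ of size $r$ has rank at most $m-r+1$) combined with $j_{(\ell)}\in\mathbb{J}$ gives $\min_{i\in\mathbb{J}}P_{i}\le P_{(\min\{\ell,\,m-r+1\})}$, and in both of your cases the relevant index is at most $k^{*}$, so the Holm property yields the bound $\alpha/r$; the case split on $r\le m-\ell+1$ versus $r>m-\ell+1$ is exactly the right way to handle the worry you raise about large $\mathbb{J}$. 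In the second inclusion, the witness $\mathbb{J}^{*}=\{j_{(k^{*}+1)},\dots,j_{(m)}\}$ works because maximality of $k^{*}$ forces the violation to occur precisely at position $k^{*}+1$, giving $P_{(k^{*}+1)}>\alpha/(m-k^{*})=\alpha/|\mathbb{J}^{*}|$ (and the degenerate cases $k^{*}=m$ and $k^{*}=0$ are handled vacuously or by taking $\mathbb{J}^{*}=[m]$). Your closing remark is also right: the identity of rejection sets is purely deterministic, and Theorem~\ref{thm:closure} enters only afterwards, when the paper combines the two theorems to certify the FWER guarantee~(\ref{eq:FWER_control-1}) in the proof of Proposition~\ref{prop:Directional_correct_FWER}.
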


We now proceed to prove that (\ref{eq:FWER_control-1}) holds. For
each fixed $t\in\mathbb{N}$, coordinate $j\in[d]$ and direction
$\square\in\{\le,\ge\}$, we recall that the $p$-value $P_{t}^{(j,\square)}$
is super-uniform under the null $\mathrm{H}_{0}^{(j,\square)}$. Let
$\mathrm{P}_{0,t}^{j}$ be a probability measure on $(\Omega,\mathfrak{A})$
such that the push-forward measures $\mathrm{P}_{X,s}$ satisfy $\mathrm{H}_{0}^{j}:\ \theta_{j}=\theta_{0,j}$
for all $s\in[t]$. Then $\mathrm{P}_{0,t}^{j}(P_{t}^{j}\le\alpha)\le\alpha$,
for every $\alpha\in[0,1]$. To see this, observe that 
\begin{equation}
\mathrm{P}_{0,t}^{j}\bigl(2\min\{P_{t}^{(j,\le)},P_{t}^{(j,\ge)}\}\le\alpha\bigr)\ \le\ \mathrm{P}_{0,t}^{j}\bigl(P_{t}^{(j,\le)}\le\alpha/2\bigr)+\mathrm{P}_{0,t}^{j}\bigl(P_{t}^{(j,\ge)}\le\alpha/2\bigr)\ \le\ \alpha,\label{eq:Union_bound}
\end{equation}
via the union bound and super-uniformity of the $p$-values. The validity
of our directional procedure is then stated as follows. 
\begin{prop}
\label{prop:Directional_correct_FWER}The procedure from Section \ref{subsec:Directional chart}
satisfies (\ref{eq:FWER_control-1}) at each time $t\in\mathbb{N}$. 
\end{prop}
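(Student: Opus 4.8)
The plan is to fix the time index $t$ and the true parameter $\theta$, and to recast (\ref{eq:FWER_control-1}) as a family-wise error statement anchored on the closed-testing principle of Theorem \ref{thm:closure}. Write $\mathbb{T}=\{(j,\square):\theta_{0,j}\ \square\ \theta_{j}\}$ for the fixed collection of directional nulls that are true at $\theta$: every coordinate with $\theta_{j}=\theta_{0,j}$ contributes both $(j,\le)$ and $(j,\ge)$, while a coordinate with a genuine effect contributes exactly the one direction opposite to its effect. Since $(j,\square_{j})\in\mathbb{D}_{t}$ precisely when $j\in\mathbb{J}_{t}$ and $\square_{j}=\arg\min_{\square}P_{t}^{(j,\square)}$, the event in (\ref{eq:FWER_control-1}) is exactly $\{\mathbb{D}_{t}\cap\mathbb{T}\neq\emptyset\}$, i.e.\ the event that Holm's procedure rejects some coordinate whose claimed direction is a true null. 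By (\ref{eq:Union_bound}) each $P_{t}^{j}$ is super-uniform under $\mathrm{H}_{0}^{j}$, so Theorems \ref{thm:Holm's} and \ref{thm:closure} already deliver family-wise control of the \emph{coordinate} nulls $\{\mathrm{H}_{0}^{j}\}$; the real work is to upgrade this to the larger directional family $\mathbb{T}$.

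The first key observation is that an erroneous directional claim is always backed by a super-uniform directional $p$-value. If $(j,\square_{j})\in\mathbb{D}_{t}\cap\mathbb{T}$, then $\square_{j}$ is simultaneously the minimizing direction and a true null, so $P_{t}^{j}=2P_{t}^{(j,\square_{j})}$ with $P_{t}^{(j,\square_{j})}$ super-uniform. I would therefore introduce, for each coordinate, the certified $p$-value $\hat{P}_{j}=\min\{1,\,2\,m_{j}^{\mathrm{TN}}\}$, where $m_{j}^{\mathrm{TN}}$ is the smallest directional $p$-value among the true-null directions at $j$. A union bound of the form (\ref{eq:Union_bound}) shows each $\hat{P}_{j}$ is super-uniform; one checks $P_{t}^{j}\le\hat{P}_{j}$ for every $j$, with equality exactly on the error coordinates (those whose minimizing direction is a true null). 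The factor $2$ in $P_{t}^{j}=2\min\{P_{t}^{(j,\le)},P_{t}^{(j,\ge)}\}$ is precisely what pays for the two directions available at each coordinate.

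The core of the argument is a step-down ``first-error'' analysis in the spirit of the proof of Theorem \ref{thm:Holm's}. On $\{\mathbb{D}_{t}\cap\mathbb{T}\neq\emptyset\}$, let $j^{\ast}$ be the error coordinate rejected earliest in the Holm ordering of $\{P_{t}^{j}\}$. Because $j^{\ast}$ is rejected, every coordinate ahead of it in the ordering is also rejected, and being ahead of the first error it is necessarily a correct-sign detection at a genuine-effect coordinate. Hence the surviving set $\mathbb{G}^{\ast}=\{i:P_{t}^{i}\ge P_{t}^{j^{\ast}}\}$ contains every error-eligible coordinate, and the Holm critical value at $j^{\ast}$ gives $P_{t}^{j^{\ast}}\le\alpha/|\mathbb{G}^{\ast}|$, that is $P_{t}^{(j^{\ast},\square_{j^{\ast}})}\le\alpha/(2|\mathbb{G}^{\ast}|)$. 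Since each of the $|\mathbb{G}^{\ast}|$ surviving coordinates carries at most two members of $\mathbb{T}$, a union bound over the at most $2|\mathbb{G}^{\ast}|$ true directional nulls supported on $\mathbb{G}^{\ast}$, each evaluated at threshold $\alpha/(2|\mathbb{G}^{\ast}|)$, returns total probability at most $\alpha$, again mirroring the two-direction balance in (\ref{eq:Union_bound}).

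The step I expect to be the main obstacle is making this last union bound rigorous, since both the survivor count $|\mathbb{G}^{\ast}|$ and the set of error-eligible coordinates are \emph{random}: which effect coordinates are claimed with the wrong sign is data-dependent. This is the familiar directional (``Type~III'') difficulty, in which wrong-sign errors at real-effect coordinates must be controlled jointly with null-coordinate errors, and a naive union bound over all of $\mathbb{T}$ overcounts (it would weaken the $\le\alpha$ bound). I would discharge it exactly as in the standard proof of Holm's theorem, summing over the survivor count so that the coordinate ranks are matched against the at-most-two-directions-per-coordinate bookkeeping. An equivalent and conceptually cleaner route, available when the directional $p$-values are stochastically monotone in $\theta$, is to note that the least-favorable configuration pushes every effect coordinate to its null boundary; there the directional family collapses onto the coordinate family, and (\ref{eq:FWER_control-1}) reduces to the coordinate-level control already supplied by Theorems \ref{thm:closure} and \ref{thm:Holm's} together with (\ref{eq:Union_bound}).
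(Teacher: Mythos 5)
You have set out to prove strong directional FWER control for an arbitrary true parameter $\theta$, but the statement (\ref{eq:FWER_control-1}) is formulated under $\mathrm{P}_{0}$, which throughout the paper denotes a measure under which the IC hypothesis $\theta=\theta_{0}$ holds at time $t$. The paper's proof exploits exactly this restriction: under $\mathrm{P}_{0}$ every directional null $\mathrm{H}_{0}^{(j,\square)}$ is true, so the error event in (\ref{eq:FWER_control-1}) is simply $\{\mathbb{D}_{t}\neq\varnothing\}\subseteq\{\mathbb{J}_{t}\neq\varnothing\}$, and since every coordinate null $\mathrm{H}_{0}^{j}$ is true and each $P_{t}^{j}$ is super-uniform by (\ref{eq:Union_bound}), Holm's strong FWER control (Theorems \ref{thm:closure} and \ref{thm:Holm's}) gives $\mathrm{P}_{0}(\mathbb{J}_{t}\neq\varnothing)\le\alpha$. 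Your opening paragraph already contains everything needed for this: once you observe that under the global null any rejection is an error and that coordinate-level control is supplied by the cited theorems, you are done. The ``upgrade to the larger directional family'' that you identify as the real work is not required for the claim as stated.

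The general-$\theta$ argument you then attempt has a genuine gap at precisely the step you flag. Your ``first-error'' analysis gives $P_{t}^{j^{\ast}}\le\alpha/|\mathbb{G}^{\ast}|$ with $|\mathbb{G}^{\ast}|\ge d_{0}$ (the number of null coordinates), since the coordinates rejected ahead of the first error must be correctly signed effect coordinates. But the first error $j^{\ast}$ can be either a null coordinate (carrying two true directional nulls) or a wrongly signed effect coordinate (carrying one), so the deterministic union bound over all true directional nulls that could realise the first error covers $2d_{0}+d_{1}$ hypotheses at threshold $\alpha/(2d_{0})$, giving $(2d_{0}+d_{1})\alpha/(2d_{0})>\alpha$ whenever $d_{1}>0$. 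The random-set version over ``nulls supported on $\mathbb{G}^{\ast}$'' cannot rescue this, because both the index set and the threshold are data-dependent. This is the well-known obstruction to mixed directional (Type~III) error control for step-down procedures, which is not resolvable ``exactly as in the standard proof of Holm's theorem'' and in general requires extra distributional assumptions (such as the independence and monotone-likelihood-ratio conditions in the directional-inference literature the paper cites). Your fallback via stochastic monotonicity and least-favourable configurations likewise imports hypotheses not present in the proposition. In short: restrict to $\mathrm{P}_{0}$ and the proof collapses to the paper's two-line reduction; the stronger statement you pursue is not established by your sketch.
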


While preparing this manuscript, we became aware that the procedure
in Section~\ref{subsec:Directional chart} is a special case of the
simultaneous directional inference framework of \citet{HellerSolari2024SDI}.
This observation suggests that broader classes of localisation-type
charts, beyond our proposal, can be developed along the same lines. 

\section{Example applications}

\label{sec:Example-applications}

We demonstrate the correctness of our theory and the usefulness of
our results through the following examples and simulations. All \textsf{R}
scripts for these simulations are available at \url{https://github.com/hiendn/Pvalue_SPC}.

\subsection{Elementary examples}

\label{subsec:Elementary-examples}

We start by considering the one-phase design where we operate under
the global assumption that $X$ is normally distributed with law $\mathrm{N}\left(\mu,\sigma^{2}\right)$,
for some $\mu\in\mathbb{R}$ and $\sigma^{2}>0$. If we consider a
one-phase design where we take $\mathrm{H}_{0}$ to be the hypothesis
that $X\sim\mathrm{N}\left(0,1\right)$ and observe a single observation
$X_{t}$ at each time $t$, where $\left(X_{t}\right)_{t}$ is an
IID sequence, then the natural test statistic is $Z_{t}=X_{t}$. By
the probability integral transform and our observation about the tightness
of the bound $\mathrm{E}_{0}R\ge1/\alpha$, it follows that we can
exactly control the $k$-ARL for this procedure at level $k/\alpha$
by raising an alarm whenever $\left|Z_{t}\right|\ge\zeta_{1-\alpha/2}$,
where $\zeta_{u}$ is the $u$th quantile of the standard normal distribution.
However, in this scenario, the $p$-value for $\mathrm{H}_{0}$ is
exactly $P_{t}=2\left\{ 1-\Phi\left(\left|X_{t}\right|\right)\right\} $
(satisfying (\ref{eq:Conditional_Super_Uni})), where $\Phi$ is the
standard normal CDF, which yields exactly the same decision rule if
we raise an alarm whenever $P_{t}\le\alpha$, and thus in this elementary
example, the chart based on either the sequence $\left(P_{t}\right)_{t}$
or $\left(Z_{t}\right)_{t}$ will perform exactly the same.

We note that we can similarly propose an elementary two-phase design,
whereupon in Phase I, we observe a single observation $X_{0}\sim\mathrm{N}\left(\mu_{0},1\right)$
and in Phase II we observe $X_{t}\sim\mathrm{N}\left(\mu_{1},1\right)$
at each time $t$, where $\left(X_{t}\right)_{t}$ is an IID sequence.
To test the hypothesis $\mathrm{H}_{0}:\mu_{0}=\mu_{1}$ at each time
$t$, we can use the test statistic $Z_{t}=\left(X_{t}-X_{0}\right)/\sqrt{2}$
and raise an alarm whenever $\left|Z_{t}\right|\ge\zeta_{1-\alpha/2}$.
Equivalently, we can compute the $p$-value $P_{t}=2\left\{ 1-\Phi\left(\left|Z_{t}\right|\right)\right\} $
and raise an alarm under the usual rule that $P_{t}\le\alpha$. In
the one-phase IID case above, the $k$-ARL is controlled at exactly
$k/\alpha$. However, in the present two-phase baseline-with-reuse
setting, the $p$-values are (unconditionally) super-uniform but $Z_{t}$
is not independent of ${\cal F}_{t-1}$, so (\ref{eq:Conditional_Super_Uni})
need not hold and exact equality at $k/\alpha$ need not occur. To
assess the tightness of the bounds from Propositions \ref{prop:ARLnoconditions}
and \ref{prop:kARL_superuniform} for $\left(P_{t}\right)_{t}$ in
this setting, we consider a short simulation using alarm rules with
$\alpha\in\left\{ 0.01,0.05\right\} $ assessing the $k$-ARL for
$k\in\left\{ 1,5\right\} $. We conduct $100$ repetitions for each
setting and report the results in Table \ref{tab:Elementary_normal_sim}.

\begin{table}
\caption{Average $R_{k}$ versus $k$-ARL bounds from Proposition \ref{prop:kARL_superuniform}
for two-phase normal example $p$-value charts, with 100 simulation
repetitions.}
\label{tab:Elementary_normal_sim}

\centering{}%
\begin{tabular}{cccccc}
\hline 
$\alpha$  & $k$  & mean $R_{k}$  & st. err. $R_{k}$  & lower bound  & ratio\tabularnewline
\hline 
0.01  & 1  & 1206.47  & 204.01  & 50.5  & 23.89\tabularnewline
0.01  & 5  & 5255.38  & 592.48  & 250.5  & 20.98\tabularnewline
0.05  & 1  & 74.4  & 12.29  & 10.5  & 7.09\tabularnewline
0.05  & 5  & 316.37  & 33.12  & 50.5  & 6.26\tabularnewline
\hline 
\end{tabular}
\end{table}

Let us consider a slightly less elementary two-phase design where
in Phase I, we observe $X_{0}\sim\mathrm{N}\left(0,1\right)$, and
in Phase II, we observe the autoregressive data $X_{t}=\delta+\beta X_{t-1}+\varepsilon_{t}$,
where $\varepsilon_{t}\sim\mathrm{N}\left(0,\sigma^{2}\right)$, $\left|\beta\right|<1$
and $\sigma^{2}=1-\beta^{2}$ (i.e., the process $\left(X_{t}\right)_{t}$
is stationary with marginal law $X_{t}\sim\mathrm{N}\left(\delta/\left(1-\beta\right),1\right)$).
Here, $\beta\ne0$ is unknown. In this situation $X_{t}$ is not independent
of the history ${\cal F}_{t-1}=\sigma\left(X_{0},\dots,X_{t-1}\right)$.
Naturally, to test the null $\mathrm{H}_{0}:\delta=0$, we can use
the marginal test statistics $Z_{t}=X_{t}$, which generate the $p$-values
$P_{t}^{\prime}=2\left\{ 1-\Phi\left(\left|X_{t}\right|\right)\right\} $,
which satisfy (\ref{eq:superuniformity}) but not (\ref{eq:Conditional_Super_Uni}),
and thus only the bounds of Propositions \ref{prop:ARLnoconditions}
and \ref{prop:kARL_superuniform} apply.

Alternatively, consider that when $\beta$ is known to equal $b\in\left(-1,1\right)$,
we can take $Z_{t}\left(b\right)=\left\{ X_{t}-bX_{t-1}\right\} /\sqrt{1-b^{2}}$,
and compute its $p$-value as $P_{t}\left(b\right)=2\left\{ 1-\Phi\left(\left|Z_{t}\left(b\right)\right|\right)\right\} $.
Indeed, if $\beta=b$, then under $\mathrm{H}_{0}$, conditional on
${\cal F}_{t-1}$, $X_{t}\sim\mathrm{N}\left(bX_{t-1},1-b^{2}\right)$.
However, since $\beta$ is not known, we can instead compute $P_{t}^{*}=\sup_{b\in\left(-1,1\right)}P_{t}\left(b\right)$,
and note that since $P_{t}^{*}\ge P_{t}\left(\beta\right)$ and $P_{t}\left(\beta\right)\,|\,{\cal F}_{t-1}\sim\mathrm{Unif}\left(0,1\right)$,
it holds that $\mathrm{P}_{0}\left(P_{t}^{*}\le\alpha\,\big|\,{\cal F}_{t-1}\right)\le\alpha$,
for each $\alpha\in\left[0,1\right]$, and thus satisfies (\ref{eq:Conditional_Super_Uni})
(cf. \citealp{BergerBoos1994}). In fact, we have the closed form
$P_{t}^{*}=2\left\{ 1-\Phi\left(\sqrt{\left[X_{t}^{2}-X_{t-1}^{2}\right]_{+}}\right)\right\} $,
since, over $\left(-1,1\right)$, the function $g\left(b\right)=\left(y-bx\right)^{2}/\left(1-b^{2}\right)$
achieves its minimum at $b=y/x$ when $\left|y\right|\le\left|x\right|$,
yielding value $0$, and at the unique stationary point $b=x/y$ otherwise,
yielding value $y^{2}-x^{2}$; thus $\inf_{b\in\left(-1,1\right)}\sqrt{g\left(b\right)}=\sqrt{\left[y^{2}-x^{2}\right]_{+}}$.
We conclude that the sequence $\left(P_{t}^{*}\right)_{t}$ satisfies
(\ref{eq:Conditional_Super_Uni}) but is not generated by a probability
integral transformation and thus the bounds of Propositions \ref{prop:ARL_with_conditional_indep}
and \ref{prop:Conditional_Super_Unif_kARL} apply but may not be tight.

To assess the $k$-ARL performance of the two charts $\left(P_{t}^{\prime}\right)_{t}$
and $\left(P_{t}^{*}\right)_{t}$ and the tightness of the theoretical
bounds, we simulate the $k$-ARL values of both charts for the scenario
where $\beta\in\left\{ 0.1,0.5\right\} $, under the null hypothesis
(i.e., $\delta=0$), for alarm rules based on $\alpha\in\left\{ 0.01,0.05\right\} $
and $k\in\left\{ 1,5\right\} $. We conduct $100$ repetitions of
each scenario and report the results in Table \ref{tab:AR1_Sim}. 

\begin{table}
\caption{Average $R_{k}$ versus $k$-ARL bounds from Propositions \ref{prop:kARL_superuniform}
and \ref{prop:Conditional_Super_Unif_kARL} for autoregressive example
$p$-value charts $\left(P_{t}^{\prime}\right)_{t}$ and $\left(P_{t}^{*}\right)_{t}$,
respectively, with 100 simulation repetitions.}\label{tab:AR1_Sim}

\centering{}%
\begin{tabular}{cccccccc}
\hline 
Chart & $\beta$ & $\alpha$ & $k$ & mean $R_{k}$ & st. err. $R_{k}$ & lower bound & ratio\tabularnewline
\hline 
$\left(P_{t}^{\prime}\right)_{t}$ & 0.1 & 0.01 & 1 & 84.93 & 8.17 & 50.5 & 1.68\tabularnewline
 & 0.1 & 0.01 & 5 & 453.21 & 21.20 & 250.5 & 1.81\tabularnewline
 & 0.1 & 0.05 & 1 & 22.6 & 2.03 & 10.5 & 2.15\tabularnewline
 & 0.1 & 0.05 & 5 & 93.61 & 3.63 & 50.5 & 1.85\tabularnewline
 & 0.5 & 0.01 & 1 & 119.79 & 13.65 & 50.5 & 2.37\tabularnewline
 & 0.5 & 0.01 & 5 & 502.01 & 24.31 & 250.5 & 2.00\tabularnewline
 & 0.5 & 0.05 & 1 & 22.54 & 2.06 & 10.5 & 2.15\tabularnewline
 & 0.5 & 0.05 & 5 & 112.04 & 5.37 & 50.5 & 2.22\tabularnewline
\hline 
$\left(P_{t}^{*}\right)_{t}$ & 0.1 & 0.01 & 1 & 123.77 & 13.23 & 100 & 1.24\tabularnewline
 & 0.1 & 0.01 & 5 & 718.19 & 37.09 & 500 & 1.44\tabularnewline
 & 0.1 & 0.05 & 1 & 30.26 & 2.71 & 20 & 1.51\tabularnewline
 & 0.1 & 0.05 & 5 & 135.94 & 5.20 & 100 & 1.36\tabularnewline
 & 0.5 & 0.01 & 1 & 279.61 & 29.07 & 100 & 2.80\tabularnewline
 & 0.5 & 0.01 & 5 & 1311.44 & 59.53 & 500 & 2.62\tabularnewline
 & 0.5 & 0.05 & 1 & 38.97 & 3.20 & 20 & 1.95\tabularnewline
 & 0.5 & 0.05 & 5 & 222.31 & 8.60 & 100 & 2.22\tabularnewline
\hline 
\end{tabular}
\end{table}

\subsection{Kolmogorov--Smirnov-based EWMA charts}

\label{subsec:Kolmogorov=00003D002013Smirnov-based-EWMA-ch}

We now consider a two-phase setting for EWMA charts, whereby in Phase
I one observes IID data $\mathbf{X}_{0}=\left(X_{0,i}\right)_{i=1}^{n_{0}}$
of sample size $n_{0}$, where each $X_{0,i}$ is a replicate of $X_{0}\colon\Omega\to\mathbb{R}$
with probability measure $\mathrm{P}_{X,0}$. Then, in Phase II, we
monitor data sets $\mathbf{X}_{t}=\left(X_{t,i}\right)_{i=1}^{n_{t}}$
of sizes $n_{t}\in\mathbb{N}$, corresponding to a VSS setting. Here,
the process is in control (IC) at time $t$ if the null hypothesis
that $\mathrm{P}_{X,s}=\mathrm{P}_{X,0}$ for all $s\in\left[t\right]$
holds, and out of control (OOC) otherwise. At time $t$, we test $\mathrm{H}_{0}:\mathrm{P}_{X,t}=\mathrm{P}_{X,0}$
using the classical two-sample Kolmogorov--Smirnov (KS) test statistic,
where 
\[
Z_{t}=\sup_{x\in\mathbb{R}}\left|\mathrm{F}_{0,n_{0}}\left(x\right)-\mathrm{F}_{t,n_{t}}\left(x\right)\right|,
\]
with $\mathrm{F}_{0,n_{0}}\left(x\right)=n_{0}^{-1}\sum_{i=1}^{n_{0}}\mathbf{1}_{\left(-\infty,x\right]}\left(X_{0,i}\right)$
and $\mathrm{F}_{t,n_{t}}\left(x\right)=n_{t}^{-1}\sum_{i=1}^{n_{t}}\mathbf{1}_{\left(-\infty,x\right]}\left(X_{t,i}\right)$
the empirical CDFs of $\mathbf{X}_{0}$ and $\mathbf{X}_{t}$, respectively.
The KS test statistic has a well-known finite-sample distribution
under the null hypothesis and thus we can compute a sequence of $p$-values
$\left(P_{t}\right)_{t}$ using standard implementations such as the
\texttt{ks.test}() function in \textsf{R} (see e.g., \citealp{Nikiforov1994AS288}
and cf. \citealp[Sec. 6.3]{GibbonsChakraborti2010}).

We note that the sequence $\left(P_{t}\right)_{t}$ satisfies the
super-uniformity condition (\ref{eq:superuniformity}). Thus, if we
raise an alarm whenever $P_{t}\le\alpha$, Proposition \ref{prop:kARL_superuniform}
implies that we should expect the $k$-ARL values to be bounded below
by $k/(2\alpha)+1/2$ for each $k\ge1$. To assess the tightness of
this bound, we simulate the following setting. We let $n_{0}\in\left\{ 20,50,100\right\} $,
$n_{t}=N_{t}\sim\mathrm{DiscUnif}\left[n_{0}-10,n_{0}+10\right]$,
$\alpha\in\left\{ 0.01,0.05\right\} $, $k\in\left\{ 1,5\right\} $
and take $X_{0}\sim\mathrm{N}\left(0,1\right)$. Here, we note that
the distribution of the KS test statistic under a continuous null
distribution is the same for all choices and thus any continuous distribution
on $X_{0}$ will yield the same simulation outcomes. The resulting
average times to first $k$ alarms $R_{k}$ from 100 replications
of each scenario are provided in Table \ref{tab:KS_no_EWMA_sim}. 

\begin{table}
\caption{Average $R_{k}$ versus $k$-ARL bounds from Proposition \ref{prop:kARL_superuniform}
for KS test statistic $p$-value charts $\left(P_{t}\right)_{t}$,
using 100 simulation repetitions.}
\label{tab:KS_no_EWMA_sim}

\centering{}%
\begin{tabular}{ccccccc}
\hline 
$n_{0}$  & $\alpha$  & $k$  & mean $R_{k}$  & st. err. $R_{k}$  & lower bound  & ratio\tabularnewline
\hline 
20  & 0.01  & 1  & 1318.30  & 307.79  & 50.5  & 26.10\tabularnewline
20  & 0.01  & 5  & 7822.24  & 1136.65  & 250.5  & 31.23\tabularnewline
20  & 0.05  & 1  & 68.38  & 13.19  & 10.5  & 6.51\tabularnewline
20  & 0.05  & 5  & 410.77  & 54.22  & 50.5  & 8.13\tabularnewline
50  & 0.01  & 1  & 1465.94  & 288.67  & 50.5  & 29.03\tabularnewline
50  & 0.01  & 5  & 5672.49  & 818.32  & 250.5  & 22.64\tabularnewline
50  & 0.05  & 1  & 85.55  & 15.01  & 10.5  & 8.15\tabularnewline
50  & 0.05  & 5  & 397.58  & 41.43  & 50.5  & 7.87\tabularnewline
100  & 0.01  & 1  & 1663.56  & 316.09  & 50.5  & 32.94\tabularnewline
100  & 0.01  & 5  & 4005.76  & 610.87  & 250.5  & 15.99\tabularnewline
100  & 0.05  & 1  & 68.63  & 10.41  & 10.5  & 6.54\tabularnewline
100  & 0.05  & 5  & 394.16  & 41.78  & 50.5  & 7.81\tabularnewline
\hline 
\end{tabular}
\end{table}

Next, to assess the correctness of Propositions \ref{prop:EWMA-like-charts}
and \ref{prop:EWMA-like-charts-conditional} we consider the application
of the charts $\left(\tilde{Q}_{\lambda,t}^{\left(r\right)}\right)_{t}$
and $\left(\bar{Q}_{\lambda,t}^{\left(r\right)}\right)_{t}$ from
Section \ref{subsec:EWMA-like-charts-p-avg}, using the base sequence
$\left(P_{t}\right)_{t}$. We take $n_{0}\in\left\{ 50,100,200\right\} $
and $n_{t}$ as above, and we assess the cases when $\alpha\in\left\{ 0.05,0.1\right\} $,
$k\in\left\{ 1,5\right\} $, and $X_{0}\sim\mathrm{N}\left(0,1\right)$.
For $\left(\tilde{Q}_{\lambda,t}^{\left(r\right)}\right)_{t}$ we
choose $\lambda=1/2$ and $r\in\left\{ -0.9,-0.8\right\} $, while
for $\left(\bar{Q}_{\lambda,t}^{\left(r\right)}\right)_{t}$ we consider
$\lambda\in\left\{ 0.8,0.9\right\} $ and $r=1$. The average $R_{k}$
values from 100 replications of each setting are provided in Tables
\ref{tab:KS_IC_EWMA_sim} and \ref{tab:KS_IC_EWMA_sim_part2}. 

\begin{table}
\caption{Average $R_{k}$ while IC versus $k$-ARL bounds from Proposition
\ref{prop:kARL_superuniform} for KS EWMA-like $p$-value charts $\left(\tilde{Q}_{\lambda,t}^{\left(r\right)}\right)_{t}$
and $\left(\bar{Q}_{\lambda,t}^{\left(r\right)}\right)_{t}$, using
100 simulation repetitions. This table contains results for $n_{0}=50$,
while Appendix Table \ref{tab:KS_IC_EWMA_sim_part2} contains results
for $n_{0}\in\left\{ 100,200\right\} $.}
\label{tab:KS_IC_EWMA_sim}

\centering{}%
\begin{tabular}{cccccccccc}
\hline 
Chart  & $n_{0}$  & $\alpha$  & $k$  & $\lambda$  & $r$  & mean $R_{k}$  & st. err. $R_{k}$  & lower bound  & ratio\tabularnewline
\hline 
$\tilde{Q}_{\lambda,t}^{\left(r\right)}$  & 50  & 0.05  & 1  & 0.5  & -0.9  & 32846.69  & 7587.56  & 10.5  & 3128.2562\tabularnewline
 & 50  & 0.05  & 1  & 0.5  & -0.8  & 12091.79  & 3184.22  & 10.5  & 1151.5990\tabularnewline
 & 50  & 0.05  & 5  & 0.5  & -0.9  & 75643.87  & 14807.62  & 50.5  & 1497.8984\tabularnewline
 & 50  & 0.05  & 5  & 0.5  & -0.8  & 36112.13  & 5902.91  & 50.5  & 715.0917\tabularnewline
 & 50  & 0.1  & 1  & 0.5  & -0.9  & 7334.45  & 1907.45  & 5.5  & 1333.5364\tabularnewline
 & 50  & 0.1  & 1  & 0.5  & -0.8  & 2566.96  & 588.93  & 5.5  & 466.72\tabularnewline
 & 50  & 0.1  & 5  & 0.5  & -0.9  & 35103.81  & 9423.84  & 25.5  & 1376.62\tabularnewline
 & 50  & 0.1  & 5  & 0.5  & -0.8  & 12640.32  & 2615.45  & 25.5  & 495.6988\tabularnewline
$\bar{Q}_{\lambda,t}^{\left(r\right)}$  & 50  & 0.05  & 1  & 0.9  & 1  & 2005.07  & 515.08  & 10.5  & 190.9590\tabularnewline
 & 50  & 0.05  & 1  & 0.95  & 1  & 192.31  & 27.32  & 10.5  & 18.3152\tabularnewline
 & 50  & 0.05  & 5  & 0.9  & 1  & 6851.53  & 1628.88  & 50.5  & 135.6739\tabularnewline
 & 50  & 0.05  & 5  & 0.95  & 1  & 1417.72  & 261.55  & 50.5  & 28.0737\tabularnewline
 & 50  & 0.1  & 1  & 0.9  & 1  & 111.16  & 18.43  & 5.5  & 20.2109\tabularnewline
 & 50  & 0.1  & 1  & 0.95  & 1  & 44.14  & 7.03  & 5.5  & 8.0255\tabularnewline
 & 50  & 0.1  & 5  & 0.9  & 1  & 480.42  & 73.81  & 25.5  & 18.84\tabularnewline
 & 50  & 0.1  & 5  & 0.95  & 1  & 222.61  & 24.91  & 25.5  & 8.7298\tabularnewline
\hline 
\end{tabular}
\end{table}

To obtain some insights regarding the performance of these charts
for detecting OOC processes, we will consider again taking $n_{0}\in\left\{ 50,100,200\right\} $,
with $n_{t}$ as in Table \ref{tab:KS_no_EWMA_sim}. We assess the
performance of the raw $\left(P_{t}\right)_{t}$ sequence, along with
the charts $\left(\tilde{Q}_{\lambda,t}^{\left(r\right)}\right)_{t}$
and $\left(\bar{Q}_{\lambda,t}^{\left(r\right)}\right)_{t}$. Here
we consider alarm rules based on $\alpha\in\left\{ 0.01,0.05\right\} $,
where $X_{0}\sim\mathrm{N}\left(0,1\right)$ but we take two settings
for the OOC $X_{t}$. In the first case the process is persistently
OOC in the sense that $X_{t}$ is distributed as $\mathrm{N}\left(1/2,1\right)$,
$\mathrm{N}\left(1,1\right)$, $\mathrm{N}\left(0,2\right)$, and
$\mathrm{Cauchy}$ for every $t\in\mathbb{N}$. In the second case,
the process is OOC with $X_{t}$ distributed dynamically as $\mathrm{N}\left(\mu_{t},1\right)$
or $\mathrm{N}\left(0,\sigma_{t}^{2}\right)$, where $\mu_{t}\sim\mathrm{N}\left(0,1/2\right)$
or $\mu_{t}\sim\mathrm{N}\left(0,1/4\right)$, and $\sigma_{t}^{2}\sim\chi_{1}^{2}$
or $\sigma_{t}^{2}\sim\chi_{2}^{2}$. For $\left(\tilde{Q}_{\lambda,t}^{\left(r\right)}\right)_{t}$
we choose $\lambda=1/2$ and $r\in\left\{ -0.9,-0.8\right\} $, while
for $\left(\bar{Q}_{\lambda,t}^{\left(r\right)}\right)_{t}$ we consider
$\lambda\in\left\{ 0.8,0.9\right\} $ and $r=1$. To assess the relative
performances, we compute the average time to the first $k$ alarms
$R_{k}$ for $k\in\left\{ 1,5\right\} $ from 100 replicates and report
the persistent OOC outcomes in Tables \ref{tab:KS_OC_EWMA_sim}, \ref{tab:KS_OC_EWMA_sim_part2}
and \ref{tab:KS_OC_EWMA_sim_part3}, and the dynamic OOC outcomes
in Tables \ref{tab:KS_OC_EWMA_sim_dynamic}, \ref{tab:KS_OC_EWMA_sim_dynamic_part2},
and \ref{tab:KS_OC_EWMA_sim_dynamic_part3}. 

\begin{table}
\caption{Average $R_{k}$ under persistent OOC processes for KS $p$-value
charts $\left(P_{t}\right)_{t}$, $\left(\tilde{Q}_{\lambda,t}^{\left(r\right)}\right)_{t}$
and $\left(\bar{Q}_{\lambda,t}^{\left(r\right)}\right)_{t}$, using
100 simulation repetitions. This table contains results for $n_{0}=50$,
while Appendix Tables \ref{tab:KS_OC_EWMA_sim_part2} and \ref{tab:KS_OC_EWMA_sim_part3}
contains results for $n_{0}=100$ and $n_{0}=200$, respectively.}\label{tab:KS_OC_EWMA_sim}

\centering{}{\footnotesize{}%
\begin{tabular}{cccccccccc}
\hline 
{\footnotesize OOC process} & {\footnotesize$n_{0}$} & {\footnotesize$\alpha$} & {\footnotesize Chart} & {\footnotesize$r$} & {\footnotesize$\lambda$} & {\footnotesize mean $R_{1}$} & {\footnotesize st. err. $R_{1}$} & {\footnotesize mean $R_{5}$} & {\footnotesize st. err. $R_{5}$}\tabularnewline
\hline 
{\footnotesize$\text{N}\left(1/2,1\right)$} & {\footnotesize 50} & {\footnotesize 0.01} & {\footnotesize$P_{t}$} &  &  & {\footnotesize 36.06} & {\footnotesize 23.53} & {\footnotesize 185.58} & {\footnotesize 90.08}\tabularnewline
 & {\footnotesize 50} & {\footnotesize 0.01} & {\footnotesize$\tilde{Q}_{\lambda,t}^{\left(r\right)}$} & {\footnotesize -0.9} & {\footnotesize 0.5} & {\footnotesize 692.27} & {\footnotesize 300.34} & {\footnotesize 3515.62} & {\footnotesize 1359.13}\tabularnewline
 & {\footnotesize 50} & {\footnotesize 0.01} & {\footnotesize$\tilde{Q}_{\lambda,t}^{\left(r\right)}$} & {\footnotesize -0.8} & {\footnotesize 0.5} & {\footnotesize 1707.76} & {\footnotesize 1346.76} & {\footnotesize 3517.78} & {\footnotesize 2157.76}\tabularnewline
 & {\footnotesize 50} & {\footnotesize 0.01} & {\footnotesize$\bar{Q}_{\lambda,t}^{\left(r\right)}$} & {\footnotesize 1} & {\footnotesize 0.9} & {\footnotesize 54842.44} & {\footnotesize 54471.80} & {\footnotesize 252708.58} & {\footnotesize 250810.10}\tabularnewline
 & {\footnotesize 50} & {\footnotesize 0.01} & {\footnotesize$\bar{Q}_{\lambda,t}^{\left(r\right)}$} & {\footnotesize 1} & {\footnotesize 0.95} & {\footnotesize 759.31} & {\footnotesize 488.57} & {\footnotesize 4318.98} & {\footnotesize 3164.00}\tabularnewline
 & {\footnotesize 50} & {\footnotesize 0.05} & {\footnotesize$P_{t}$} &  &  & {\footnotesize 3.25} & {\footnotesize 0.56} & {\footnotesize 13.33} & {\footnotesize 1.76}\tabularnewline
 & {\footnotesize 50} & {\footnotesize 0.05} & {\footnotesize$\tilde{Q}_{\lambda,t}^{\left(r\right)}$} & {\footnotesize -0.9} & {\footnotesize 0.5} & {\footnotesize 85.23} & {\footnotesize 37.17} & {\footnotesize 305.66} & {\footnotesize 101.19}\tabularnewline
 & {\footnotesize 50} & {\footnotesize 0.05} & {\footnotesize$\tilde{Q}_{\lambda,t}^{\left(r\right)}$} & {\footnotesize -0.8} & {\footnotesize 0.5} & {\footnotesize 125.01} & {\footnotesize 65.85} & {\footnotesize 351.70} & {\footnotesize 158.42}\tabularnewline
 & {\footnotesize 50} & {\footnotesize 0.05} & {\footnotesize$\bar{Q}_{\lambda,t}^{\left(r\right)}$} & {\footnotesize 1} & {\footnotesize 0.9} & {\footnotesize 5.78} & {\footnotesize 1.87} & {\footnotesize 27.35} & {\footnotesize 8.29}\tabularnewline
 & {\footnotesize 50} & {\footnotesize 0.05} & {\footnotesize$\bar{Q}_{\lambda,t}^{\left(r\right)}$} & {\footnotesize 1} & {\footnotesize 0.95} & {\footnotesize 5.38} & {\footnotesize 1.64} & {\footnotesize 26.12} & {\footnotesize 7.81}\tabularnewline
{\footnotesize$\text{N}(1,1)$} & {\footnotesize 50} & {\footnotesize 0.01} & {\footnotesize$P_{t}$} &  &  & {\footnotesize 1.10} & {\footnotesize 0.04} & {\footnotesize 5.27} & {\footnotesize 0.08}\tabularnewline
 & {\footnotesize 50} & {\footnotesize 0.01} & {\footnotesize$\tilde{Q}_{\lambda,t}^{\left(r\right)}$} & {\footnotesize -0.9} & {\footnotesize 0.5} & {\footnotesize 1.53} & {\footnotesize 0.22} & {\footnotesize 6.15} & {\footnotesize 0.75}\tabularnewline
 & {\footnotesize 50} & {\footnotesize 0.01} & {\footnotesize$\tilde{Q}_{\lambda,t}^{\left(r\right)}$} & {\footnotesize -0.8} & {\footnotesize 0.5} & {\footnotesize 1.24} & {\footnotesize 0.06} & {\footnotesize 5.55} & {\footnotesize 0.17}\tabularnewline
 & {\footnotesize 50} & {\footnotesize 0.01} & {\footnotesize$\bar{Q}_{\lambda,t}^{\left(r\right)}$} & {\footnotesize 1} & {\footnotesize 0.9} & {\footnotesize 1.14} & {\footnotesize 0.09} & {\footnotesize 5.62} & {\footnotesize 0.27}\tabularnewline
 & {\footnotesize 50} & {\footnotesize 0.01} & {\footnotesize$\bar{Q}_{\lambda,t}^{\left(r\right)}$} & {\footnotesize 1} & {\footnotesize 0.95} & {\footnotesize 1.07} & {\footnotesize 0.03} & {\footnotesize 5.27} & {\footnotesize 0.06}\tabularnewline
 & {\footnotesize 50} & {\footnotesize 0.05} & {\footnotesize$P_{t}$} &  &  & {\footnotesize 1.00} & {\footnotesize 0.00} & {\footnotesize 5.04} & {\footnotesize 0.02}\tabularnewline
 & {\footnotesize 50} & {\footnotesize 0.05} & {\footnotesize$\tilde{Q}_{\lambda,t}^{\left(r\right)}$} & {\footnotesize -0.9} & {\footnotesize 0.5} & {\footnotesize 1.07} & {\footnotesize 0.04} & {\footnotesize 5.15} & {\footnotesize 0.08}\tabularnewline
 & {\footnotesize 50} & {\footnotesize 0.05} & {\footnotesize$\tilde{Q}_{\lambda,t}^{\left(r\right)}$} & {\footnotesize -0.8} & {\footnotesize 0.5} & {\footnotesize 1.07} & {\footnotesize 0.04} & {\footnotesize 5.09} & {\footnotesize 0.04}\tabularnewline
 & {\footnotesize 50} & {\footnotesize 0.05} & {\footnotesize$\bar{Q}_{\lambda,t}^{\left(r\right)}$} & {\footnotesize 1} & {\footnotesize 0.9} & {\footnotesize 1.00} & {\footnotesize 0.00} & {\footnotesize 5.01} & {\footnotesize 0.01}\tabularnewline
 & {\footnotesize 50} & {\footnotesize 0.05} & {\footnotesize$\bar{Q}_{\lambda,t}^{\left(r\right)}$} & {\footnotesize 1} & {\footnotesize 0.95} & {\footnotesize 1.04} & {\footnotesize 0.02} & {\footnotesize 5.08} & {\footnotesize 0.04}\tabularnewline
{\footnotesize$\text{N}(0,2)$} & {\footnotesize 50} & {\footnotesize 0.01} & {\footnotesize$P_{t}$} &  &  & {\footnotesize 373.91} & {\footnotesize 142.99} & {\footnotesize 1197.07} & {\footnotesize 308.02}\tabularnewline
 & {\footnotesize 50} & {\footnotesize 0.01} & {\footnotesize$\tilde{Q}_{\lambda,t}^{\left(r\right)}$} & {\footnotesize -0.9} & {\footnotesize 0.5} & {\footnotesize 36498.91} & {\footnotesize 13402.09} & {\footnotesize 122395.82} & {\footnotesize 28771.52}\tabularnewline
 & {\footnotesize 50} & {\footnotesize 0.01} & {\footnotesize$\tilde{Q}_{\lambda,t}^{\left(r\right)}$} & {\footnotesize -0.8} & {\footnotesize 0.5} & {\footnotesize 13889.67} & {\footnotesize 4649.08} & {\footnotesize 104257.62} & {\footnotesize 68852.89}\tabularnewline
 & {\footnotesize 50} & {\footnotesize 0.01} & {\footnotesize$\bar{Q}_{\lambda,t}^{\left(r\right)}$} & {\footnotesize 1} & {\footnotesize 0.9} & {\footnotesize 120454.57} & {\footnotesize 59829.87} & {\footnotesize 1047349.86} & {\footnotesize 528364.60}\tabularnewline
 & {\footnotesize 50} & {\footnotesize 0.01} & {\footnotesize$\bar{Q}_{\lambda,t}^{\left(r\right)}$} & {\footnotesize 1} & {\footnotesize 0.95} & {\footnotesize 3955.17} & {\footnotesize 2125.32} & {\footnotesize 18878.65} & {\footnotesize 6016.02}\tabularnewline
 & {\footnotesize 50} & {\footnotesize 0.05} & {\footnotesize$P_{t}$} &  &  & {\footnotesize 19.31} & {\footnotesize 4.05} & {\footnotesize 89.55} & {\footnotesize 10.52}\tabularnewline
 & {\footnotesize 50} & {\footnotesize 0.05} & {\footnotesize$\tilde{Q}_{\lambda,t}^{\left(r\right)}$} & {\footnotesize -0.9} & {\footnotesize 0.5} & {\footnotesize 1549.00} & {\footnotesize 441.42} & {\footnotesize 6552.55} & {\footnotesize 1558.61}\tabularnewline
 & {\footnotesize 50} & {\footnotesize 0.05} & {\footnotesize$\tilde{Q}_{\lambda,t}^{\left(r\right)}$} & {\footnotesize -0.8} & {\footnotesize 0.5} & {\footnotesize 493.85} & {\footnotesize 81.41} & {\footnotesize 3406.35} & {\footnotesize 561.96}\tabularnewline
 & {\footnotesize 50} & {\footnotesize 0.05} & {\footnotesize$\bar{Q}_{\lambda,t}^{\left(r\right)}$} & {\footnotesize 1} & {\footnotesize 0.9} & {\footnotesize 143.39} & {\footnotesize 44.38} & {\footnotesize 520.81} & {\footnotesize 132.94}\tabularnewline
 & {\footnotesize 50} & {\footnotesize 0.05} & {\footnotesize$\bar{Q}_{\lambda,t}^{\left(r\right)}$} & {\footnotesize 1} & {\footnotesize 0.95} & {\footnotesize 35.93} & {\footnotesize 6.71} & {\footnotesize 178.34} & {\footnotesize 23.72}\tabularnewline
{\footnotesize Cauchy} & {\footnotesize 50} & {\footnotesize 0.01} & {\footnotesize$P_{t}$} &  &  & {\footnotesize 59.00} & {\footnotesize 8.73} & {\footnotesize 420.38} & {\footnotesize 55.63}\tabularnewline
 & {\footnotesize 50} & {\footnotesize 0.01} & {\footnotesize$\tilde{Q}_{\lambda,t}^{\left(r\right)}$} & {\footnotesize -0.9} & {\footnotesize 0.5} & {\footnotesize 7138.85} & {\footnotesize 1220.77} & {\footnotesize 28983.64} & {\footnotesize 4323.72}\tabularnewline
 & {\footnotesize 50} & {\footnotesize 0.01} & {\footnotesize$\tilde{Q}_{\lambda,t}^{\left(r\right)}$} & {\footnotesize -0.8} & {\footnotesize 0.5} & {\footnotesize 5510.04} & {\footnotesize 1460.69} & {\footnotesize 23253.50} & {\footnotesize 5417.67}\tabularnewline
 & {\footnotesize 50} & {\footnotesize 0.01} & {\footnotesize$\bar{Q}_{\lambda,t}^{\left(r\right)}$} & {\footnotesize 1} & {\footnotesize 0.9} & {\footnotesize 7362.64} & {\footnotesize 2945.71} & {\footnotesize 42590.51} & {\footnotesize 17517.25}\tabularnewline
 & {\footnotesize 50} & {\footnotesize 0.01} & {\footnotesize$\bar{Q}_{\lambda,t}^{\left(r\right)}$} & {\footnotesize 1} & {\footnotesize 0.95} & {\footnotesize 1877.77} & {\footnotesize 562.83} & {\footnotesize 5772.51} & {\footnotesize 1262.26}\tabularnewline
 & {\footnotesize 50} & {\footnotesize 0.05} & {\footnotesize$P_{t}$} &  &  & {\footnotesize 10.92} & {\footnotesize 1.56} & {\footnotesize 46.41} & {\footnotesize 4.18}\tabularnewline
 & {\footnotesize 50} & {\footnotesize 0.05} & {\footnotesize$\tilde{Q}_{\lambda,t}^{\left(r\right)}$} & {\footnotesize -0.9} & {\footnotesize 0.5} & {\footnotesize 704.47} & {\footnotesize 94.75} & {\footnotesize 3067.70} & {\footnotesize 397.01}\tabularnewline
 & {\footnotesize 50} & {\footnotesize 0.05} & {\footnotesize$\tilde{Q}_{\lambda,t}^{\left(r\right)}$} & {\footnotesize -0.8} & {\footnotesize 0.5} & {\footnotesize 331.04} & {\footnotesize 41.87} & {\footnotesize 1206.27} & {\footnotesize 144.66}\tabularnewline
 & {\footnotesize 50} & {\footnotesize 0.05} & {\footnotesize$\bar{Q}_{\lambda,t}^{\left(r\right)}$} & {\footnotesize 1} & {\footnotesize 0.9} & {\footnotesize 34.00} & {\footnotesize 6.69} & {\footnotesize 134.56} & {\footnotesize 19.90}\tabularnewline
 & {\footnotesize 50} & {\footnotesize 0.05} & {\footnotesize$\bar{Q}_{\lambda,t}^{\left(r\right)}$} & {\footnotesize 1} & {\footnotesize 0.95} & {\footnotesize 13.20} & {\footnotesize 2.09} & {\footnotesize 73.73} & {\footnotesize 9.92}\tabularnewline
\hline 
\end{tabular}}{\footnotesize\par}
\end{table}

\begin{table}
\caption{Average $R_{k}$ under dynamic OOC processes for KS $p$-value charts
$\left(P_{t}\right)_{t}$, $\left(\tilde{Q}_{\lambda,t}^{\left(r\right)}\right)_{t}$
and $\left(\bar{Q}_{\lambda,t}^{\left(r\right)}\right)_{t}$, using
100 simulation repetitions. This table contains results for $n_{0}=50$,
while Appendix Tables \ref{tab:KS_OC_EWMA_sim_dynamic_part2} and
\ref{tab:KS_OC_EWMA_sim_dynamic_part3} contains results for $n_{0}=100$
and $n_{0}=200$, respectively.}\label{tab:KS_OC_EWMA_sim_dynamic}

\centering{}{\footnotesize{}%
\begin{tabular}{cccccccccc}
\hline 
{\footnotesize OOC process} & {\footnotesize$n_{0}$} & {\footnotesize$\alpha$} & {\footnotesize Chart} & {\footnotesize$r$} & {\footnotesize$\lambda$} & {\footnotesize mean $R_{1}$} & {\footnotesize st. err. $R_{1}$} & {\footnotesize mean $R_{5}$} & {\footnotesize st. err. $R_{5}$}\tabularnewline
\hline 
{\footnotesize$\text{N}\left(\mu_{t},1\right)$} & {\footnotesize 50} & {\footnotesize 0.01} & {\footnotesize$P_{t}$} &  &  & {\footnotesize 2.58} & {\footnotesize 0.22} & {\footnotesize 12.39} & {\footnotesize 0.46}\tabularnewline
{\footnotesize$\mu_{t}\sim\text{N}\left(0,1/2\right)$} & {\footnotesize 50} & {\footnotesize 0.01} & {\footnotesize$\tilde{Q}_{\lambda,t}^{\left(r\right)}$} & {\footnotesize -0.9} & {\footnotesize 0.5} & {\footnotesize 4.32} & {\footnotesize 0.42} & {\footnotesize 9.14} & {\footnotesize 0.48}\tabularnewline
 & {\footnotesize 50} & {\footnotesize 0.01} & {\footnotesize$\tilde{Q}_{\lambda,t}^{\left(r\right)}$} & {\footnotesize -0.8} & {\footnotesize 0.5} & {\footnotesize 3.29} & {\footnotesize 0.30} & {\footnotesize 8.39} & {\footnotesize 0.42}\tabularnewline
 & {\footnotesize 50} & {\footnotesize 0.01} & {\footnotesize$\bar{Q}_{\lambda,t}^{\left(r\right)}$} & {\footnotesize 1} & {\footnotesize 0.9} & {\footnotesize 4.83} & {\footnotesize 0.56} & {\footnotesize 25.04} & {\footnotesize 1.50}\tabularnewline
 & {\footnotesize 50} & {\footnotesize 0.01} & {\footnotesize$\bar{Q}_{\lambda,t}^{\left(r\right)}$} & {\footnotesize 1} & {\footnotesize 0.95} & {\footnotesize 3.55} & {\footnotesize 0.39} & {\footnotesize 20.24} & {\footnotesize 0.99}\tabularnewline
 & {\footnotesize 50} & {\footnotesize 0.05} & {\footnotesize$P_{t}$} &  &  & {\footnotesize 1.91} & {\footnotesize 0.17} & {\footnotesize 9.37} & {\footnotesize 0.29}\tabularnewline
 & {\footnotesize 50} & {\footnotesize 0.05} & {\footnotesize$\tilde{Q}_{\lambda,t}^{\left(r\right)}$} & {\footnotesize -0.9} & {\footnotesize 0.5} & {\footnotesize 3.10} & {\footnotesize 0.28} & {\footnotesize 7.68} & {\footnotesize 0.36}\tabularnewline
 & {\footnotesize 50} & {\footnotesize 0.05} & {\footnotesize$\tilde{Q}_{\lambda,t}^{\left(r\right)}$} & {\footnotesize -0.8} & {\footnotesize 0.5} & {\footnotesize 2.33} & {\footnotesize 0.17} & {\footnotesize 6.76} & {\footnotesize 0.20}\tabularnewline
 & {\footnotesize 50} & {\footnotesize 0.05} & {\footnotesize$\bar{Q}_{\lambda,t}^{\left(r\right)}$} & {\footnotesize 1} & {\footnotesize 0.9} & {\footnotesize 2.22} & {\footnotesize 0.18} & {\footnotesize 11.65} & {\footnotesize 0.49}\tabularnewline
 & {\footnotesize 50} & {\footnotesize 0.05} & {\footnotesize$\bar{Q}_{\lambda,t}^{\left(r\right)}$} & {\footnotesize 1} & {\footnotesize 0.95} & {\footnotesize 1.83} & {\footnotesize 0.12} & {\footnotesize 10.65} & {\footnotesize 0.44}\tabularnewline
{\footnotesize$\text{N}\left(\mu_{t},1\right)$} & {\footnotesize 50} & {\footnotesize 0.01} & {\footnotesize$P_{t}$} &  &  & {\footnotesize 3.98} & {\footnotesize 0.33} & {\footnotesize 19.18} & {\footnotesize 0.86}\tabularnewline
{\footnotesize$\mu_{t}\sim\text{N}\left(0,1/4\right)$} & {\footnotesize 50} & {\footnotesize 0.01} & {\footnotesize$\tilde{Q}_{\lambda,t}^{\left(r\right)}$} & {\footnotesize -0.9} & {\footnotesize 0.5} & {\footnotesize 9.82} & {\footnotesize 1.48} & {\footnotesize 19.16} & {\footnotesize 1.84}\tabularnewline
 & {\footnotesize 50} & {\footnotesize 0.01} & {\footnotesize$\tilde{Q}_{\lambda,t}^{\left(r\right)}$} & {\footnotesize -0.8} & {\footnotesize 0.5} & {\footnotesize 7.26} & {\footnotesize 0.74} & {\footnotesize 17.95} & {\footnotesize 1.42}\tabularnewline
 & {\footnotesize 50} & {\footnotesize 0.01} & {\footnotesize$\bar{Q}_{\lambda,t}^{\left(r\right)}$} & {\footnotesize 1} & {\footnotesize 0.9} & {\footnotesize 10.58} & {\footnotesize 1.27} & {\footnotesize 52.62} & {\footnotesize 2.71}\tabularnewline
 & {\footnotesize 50} & {\footnotesize 0.01} & {\footnotesize$\bar{Q}_{\lambda,t}^{\left(r\right)}$} & {\footnotesize 1} & {\footnotesize 0.95} & {\footnotesize 7.66} & {\footnotesize 0.84} & {\footnotesize 39.93} & {\footnotesize 2.40}\tabularnewline
 & {\footnotesize 50} & {\footnotesize 0.05} & {\footnotesize$P_{t}$} &  &  & {\footnotesize 2.50} & {\footnotesize 0.20} & {\footnotesize 12.77} & {\footnotesize 0.43}\tabularnewline
 & {\footnotesize 50} & {\footnotesize 0.05} & {\footnotesize$\tilde{Q}_{\lambda,t}^{\left(r\right)}$} & {\footnotesize -0.9} & {\footnotesize 0.5} & {\footnotesize 5.23} & {\footnotesize 0.45} & {\footnotesize 11.70} & {\footnotesize 0.80}\tabularnewline
 & {\footnotesize 50} & {\footnotesize 0.05} & {\footnotesize$\tilde{Q}_{\lambda,t}^{\left(r\right)}$} & {\footnotesize -0.8} & {\footnotesize 0.5} & {\footnotesize 4.61} & {\footnotesize 0.49} & {\footnotesize 11.96} & {\footnotesize 0.85}\tabularnewline
 & {\footnotesize 50} & {\footnotesize 0.05} & {\footnotesize$\bar{Q}_{\lambda,t}^{\left(r\right)}$} & {\footnotesize 1} & {\footnotesize 0.9} & {\footnotesize 3.40} & {\footnotesize 0.36} & {\footnotesize 18.10} & {\footnotesize 0.83}\tabularnewline
 & {\footnotesize 50} & {\footnotesize 0.05} & {\footnotesize$\bar{Q}_{\lambda,t}^{\left(r\right)}$} & {\footnotesize 1} & {\footnotesize 0.95} & {\footnotesize 2.90} & {\footnotesize 0.28} & {\footnotesize 14.36} & {\footnotesize 0.57}\tabularnewline
{\footnotesize$\text{N}\left(\mu_{t},1\right)$} & {\footnotesize 50} & {\footnotesize 0.01} & {\footnotesize$P_{t}$} &  &  & {\footnotesize 3.61} & {\footnotesize 0.35} & {\footnotesize 18.02} & {\footnotesize 0.82}\tabularnewline
{\footnotesize$\sigma_{t}^{2}\sim\chi_{1}^{2}$} & {\footnotesize 50} & {\footnotesize 0.01} & {\footnotesize$\tilde{Q}_{\lambda,t}^{\left(r\right)}$} & {\footnotesize -0.9} & {\footnotesize 0.5} & {\footnotesize 6.17} & {\footnotesize 0.71} & {\footnotesize 13.43} & {\footnotesize 1.00}\tabularnewline
 & {\footnotesize 50} & {\footnotesize 0.01} & {\footnotesize$\tilde{Q}_{\lambda,t}^{\left(r\right)}$} & {\footnotesize -0.8} & {\footnotesize 0.5} & {\footnotesize 6.15} & {\footnotesize 0.72} & {\footnotesize 12.85} & {\footnotesize 1.16}\tabularnewline
 & {\footnotesize 50} & {\footnotesize 0.01} & {\footnotesize$\bar{Q}_{\lambda,t}^{\left(r\right)}$} & {\footnotesize 1} & {\footnotesize 0.9} & {\footnotesize 9.62} & {\footnotesize 1.07} & {\footnotesize 48.12} & {\footnotesize 3.59}\tabularnewline
 & {\footnotesize 50} & {\footnotesize 0.01} & {\footnotesize$\bar{Q}_{\lambda,t}^{\left(r\right)}$} & {\footnotesize 1} & {\footnotesize 0.95} & {\footnotesize 6.52} & {\footnotesize 0.72} & {\footnotesize 30.31} & {\footnotesize 1.75}\tabularnewline
 & {\footnotesize 50} & {\footnotesize 0.05} & {\footnotesize$P_{t}$} &  &  & {\footnotesize 2.07} & {\footnotesize 0.21} & {\footnotesize 11.68} & {\footnotesize 0.46}\tabularnewline
 & {\footnotesize 50} & {\footnotesize 0.05} & {\footnotesize$\tilde{Q}_{\lambda,t}^{\left(r\right)}$} & {\footnotesize -0.9} & {\footnotesize 0.5} & {\footnotesize 4.20} & {\footnotesize 0.40} & {\footnotesize 10.49} & {\footnotesize 0.67}\tabularnewline
 & {\footnotesize 50} & {\footnotesize 0.05} & {\footnotesize$\tilde{Q}_{\lambda,t}^{\left(r\right)}$} & {\footnotesize -0.8} & {\footnotesize 0.5} & {\footnotesize 3.60} & {\footnotesize 0.34} & {\footnotesize 8.59} & {\footnotesize 0.49}\tabularnewline
 & {\footnotesize 50} & {\footnotesize 0.05} & {\footnotesize$\bar{Q}_{\lambda,t}^{\left(r\right)}$} & {\footnotesize 1} & {\footnotesize 0.9} & {\footnotesize 3.06} & {\footnotesize 0.26} & {\footnotesize 16.18} & {\footnotesize 0.81}\tabularnewline
 & {\footnotesize 50} & {\footnotesize 0.05} & {\footnotesize$\bar{Q}_{\lambda,t}^{\left(r\right)}$} & {\footnotesize 1} & {\footnotesize 0.95} & {\footnotesize 2.48} & {\footnotesize 0.22} & {\footnotesize 13.23} & {\footnotesize 0.64}\tabularnewline
{\footnotesize$\text{N}\left(\mu_{t},1\right)$} & {\footnotesize 50} & {\footnotesize 0.01} & {\footnotesize$P_{t}$} &  &  & {\footnotesize 8.10} & {\footnotesize 0.80} & {\footnotesize 38.57} & {\footnotesize 1.91}\tabularnewline
{\footnotesize$\sigma_{t}^{2}\sim\chi_{2}^{2}$} & {\footnotesize 50} & {\footnotesize 0.01} & {\footnotesize$\tilde{Q}_{\lambda,t}^{\left(r\right)}$} & {\footnotesize -0.9} & {\footnotesize 0.5} & {\footnotesize 47.59} & {\footnotesize 5.25} & {\footnotesize 113.35} & {\footnotesize 13.21}\tabularnewline
 & {\footnotesize 50} & {\footnotesize 0.01} & {\footnotesize$\tilde{Q}_{\lambda,t}^{\left(r\right)}$} & {\footnotesize -0.8} & {\footnotesize 0.5} & {\footnotesize 30.55} & {\footnotesize 3.66} & {\footnotesize 76.98} & {\footnotesize 7.22}\tabularnewline
 & {\footnotesize 50} & {\footnotesize 0.01} & {\footnotesize$\bar{Q}_{\lambda,t}^{\left(r\right)}$} & {\footnotesize 1} & {\footnotesize 0.9} & {\footnotesize 55.59} & {\footnotesize 8.91} & {\footnotesize 254.77} & {\footnotesize 24.81}\tabularnewline
 & {\footnotesize 50} & {\footnotesize 0.01} & {\footnotesize$\bar{Q}_{\lambda,t}^{\left(r\right)}$} & {\footnotesize 1} & {\footnotesize 0.95} & {\footnotesize 21.14} & {\footnotesize 2.57} & {\footnotesize 103.50} & {\footnotesize 6.81}\tabularnewline
 & {\footnotesize 50} & {\footnotesize 0.05} & {\footnotesize$P_{t}$} &  &  & {\footnotesize 3.14} & {\footnotesize 0.24} & {\footnotesize 18.04} & {\footnotesize 0.88}\tabularnewline
 & {\footnotesize 50} & {\footnotesize 0.05} & {\footnotesize$\tilde{Q}_{\lambda,t}^{\left(r\right)}$} & {\footnotesize -0.9} & {\footnotesize 0.5} & {\footnotesize 18.31} & {\footnotesize 1.95} & {\footnotesize 39.48} & {\footnotesize 3.10}\tabularnewline
 & {\footnotesize 50} & {\footnotesize 0.05} & {\footnotesize$\tilde{Q}_{\lambda,t}^{\left(r\right)}$} & {\footnotesize -0.8} & {\footnotesize 0.5} & {\footnotesize 15.54} & {\footnotesize 1.65} & {\footnotesize 32.88} & {\footnotesize 2.91}\tabularnewline
 & {\footnotesize 50} & {\footnotesize 0.05} & {\footnotesize$\bar{Q}_{\lambda,t}^{\left(r\right)}$} & {\footnotesize 1} & {\footnotesize 0.9} & {\footnotesize 4.23} & {\footnotesize 0.45} & {\footnotesize 27.51} & {\footnotesize 1.62}\tabularnewline
 & {\footnotesize 50} & {\footnotesize 0.05} & {\footnotesize$\bar{Q}_{\lambda,t}^{\left(r\right)}$} & {\footnotesize 1} & {\footnotesize 0.95} & {\footnotesize 4.07} & {\footnotesize 0.45} & {\footnotesize 22.92} & {\footnotesize 1.29}\tabularnewline
\hline 
\end{tabular}}{\footnotesize\par}
\end{table}

Finally we note that we are not the first to consider the use of KS
tests in the SPC literature, and our work is preceded by the texts
of \citet{bakir2012nonparametric}, \citet{ross2012two}, \citet{ross2015cpm},
and \citet{khrueasom2016integrated}. In particular, we note that
\citet{ross2012two} use the asymptotic distribution of the KS statistic
to deduce a comprehensive table of time-dependent thresholds for SPC
via normalised forms of KS statistics $\left(Z_{t}\right)_{t}$ for
calibrating charts to achieve some specific ARL values. These dynamic
thresholds were then implemented in the \textsf{R} package \textsf{cpm}
in \citet{ross2015cpm}. ARL estimation is also performed in \citet{khrueasom2016integrated}
for one-sample KS charts using the fact that the empirical CDF multiplied
by $n$ is binomial pointwise, although the use of this fact is not
made theoretically rigorous. Lastly, in these previous works, there
is no treatment of EWMA or EWMA-like charts based on KS statistics,
nor are general formulas available for control of the ARL or $k$-ARL
under fixed alarm rules, for finite $n_{0}$ and $\left(n_{t}\right)_{t}$.

\subsection{The distribution of uniform EWMA charts }

\label{subsec:Uniform_EWMA_sim}

To assess the veracity of the theory from Section \ref{subsec:Uniform_EWMA_Theorems},
we simulate $10000$ replicates of the random variable $\tilde{U}_{\lambda,t}$
for $\lambda\in\left\{ 0.3,0.5,0.7\right\} $ up to times $t\in\left\{ 2,3,4\right\} $,
taking $u_{0}=1/2$ in every case. Histograms for each of the $\left(t,\lambda\right)$
combinations are plotted and displayed against the corresponding theoretical
PDF for the statistic $\tilde{U}_{\lambda,t}$ as predicted by Proposition
\ref{prop:PDF_of_Ut} in Figure \ref{fig:Plots-of-PDFs}. Furthermore,
we demonstrate that for each of the scenarios above, the statistic
$\tilde{U}_{\lambda,t}$ is left-tail super-uniform as predicted by
Proposition \ref{prop:Ut_left_super_unif} by plotting the CDFs against
that of the uniform distribution, in Figure \ref{fig:Plots-of-CDFs}. 

\begin{figure}
\begin{centering}
\includegraphics[width=15cm]{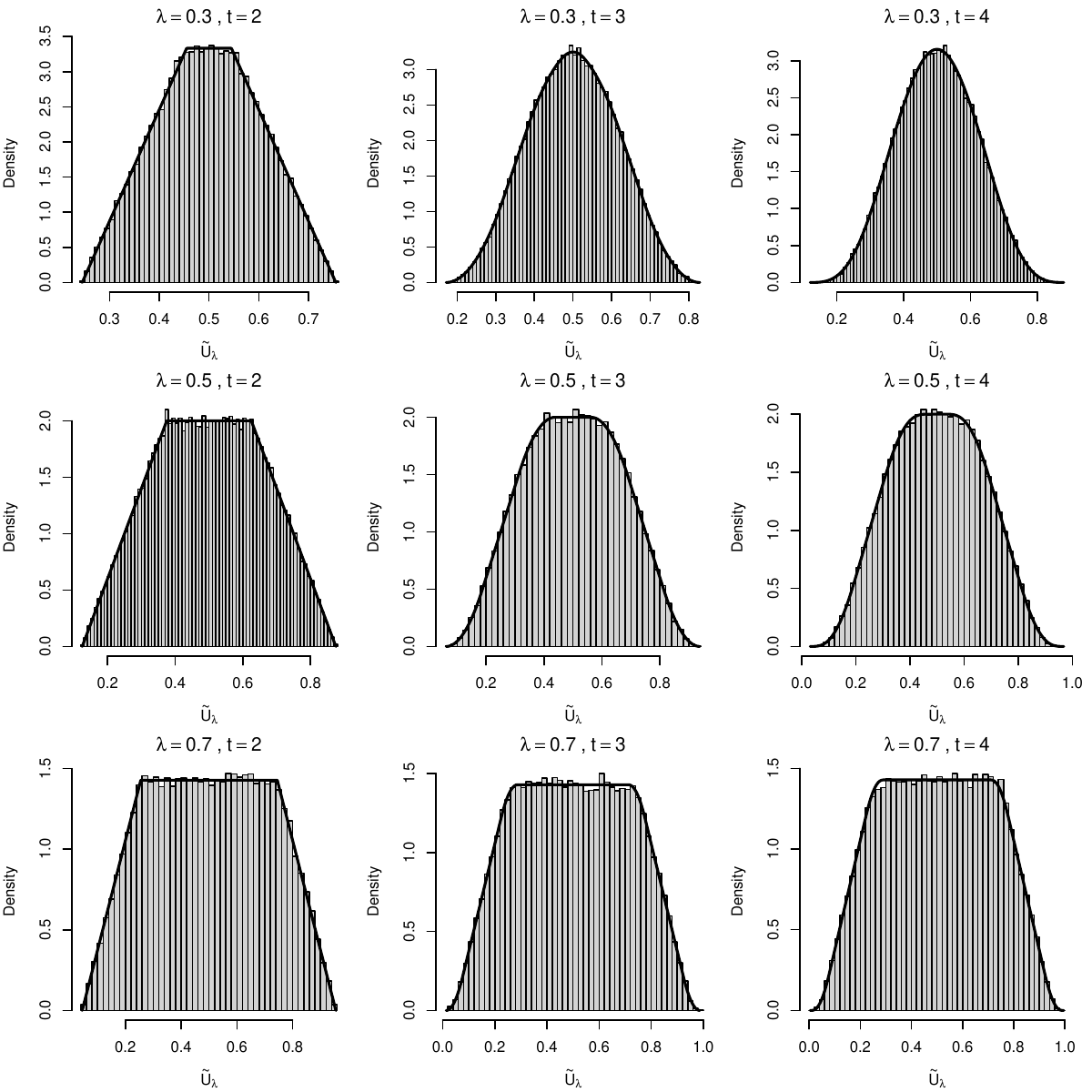}
\par\end{centering}
\caption{Plots of PDFs of the random variables $\tilde{U}_{\lambda,t}$ with
initialisation $u_{0}=1/2$, for $\lambda\in\left\{ 0.3,0.5,0.7\right\} $
and $t\in\left\{ 2,3,4\right\} $ along with histograms of 10000 replicates
of the corresponding variable.}\label{fig:Plots-of-PDFs}

\end{figure}

\subsection{Directional and coordinate localisation charts}

\label{subsec:Directional-and-coordinate-localisation-charts}

To demonstrate the results of Section \ref{subsec:Directional chart},
we consider a pair of examples. In the first case, we consider a one-phase
setting where $X_{0}\sim\mathrm{N}\left(0,\Sigma\right)$ is a $d=3$
dimensional normal random vector with $\Sigma=\left(\Sigma_{jk}\right)_{j,k\in\left[d\right]}$,
where $\Sigma_{jj}=1$ for each $j\in\left[d\right]$ and $\Sigma_{jk}=\rho>0$
for $j\ne k$. For each time point $t\in\mathbb{N}$, we observe the
OOC random variable $X_{t}\sim\mathrm{N}\left(\mu,\Sigma\right)$,
where $\mu=\left(\delta,0,-\delta\right)$ for $\delta>0$. Noting
that for each $t$ and $j$, $Z_{t,j}=X_{t,j}/\sqrt{\Sigma_{jj}}\sim\mathrm{N}\left(0,1\right)$
under the IC hypothesis, we can implement the method from Section
\ref{subsec:Directional chart} via the $p$-value sequences $\left(P_{t}^{\left(j,\square\right)}\right)_{t}$,
where $P_{t}^{\left(j,\ge\right)}=1-\Phi\left(Z_{t,j}\right)$ and
$P_{t}^{\left(j,\le\right)}=\Phi\left(Z_{t,j}\right)$, and combinations
are obtained via the Bonferroni rule. We assess the power of our directional
and coordinate localisation charts via implementation with control
limits $\alpha\in\left\{ 0.01,0.05\right\} $, and consider scenarios
$\delta\in\left\{ 0.5,1\right\} $ and $\rho\in\left\{ 0,0.5,0.9\right\} $.
To assess the power, we conduct $100$ replications of simulation
runs whereupon we compute the average run times $R$, and the number
of detected OOC directions (\#OOC; i.e., whether both OOC directions
are detected, only one, or neither) at the time of alarm. To address
the FWER guarantee of Proposition \ref{prop:Directional_correct_FWER},
we conduct $10000$ replicates of a one time point run (i.e., $t=1$)
and compute the average number of family-wise errors (FWEs) over those
runs. The results of these simulations appear in Table \ref{tab:Normal_directional}. 

\begin{table}
\caption{Average run lengths $R$, numbers of detected OOC directions at time
of alarm \#OOC, and numbers of single time point family wise errors
(at time $t=1$) for $d=3$ variable normal model tested using the
procedure from Section \ref{subsec:Directional chart} with $Z$-tests.
Average run lengths and number of true positives are reported via
100 simulation replications, while single time point family wise errors
are obtained from 10000 simulations.}\label{tab:Normal_directional}

\centering{}%
\begin{tabular}{cccccc}
\hline 
$\delta$ & $\rho$ & $\alpha$ & Mean $R$ & Mean \#OOC  & Mean FWEs\tabularnewline
\hline 
0.5 & 0 & 0.01 & 53.72 & 0.81 & 0.0044\tabularnewline
0.5 & 0 & 0.05 & 13.55 & 0.78 & 0.0230\tabularnewline
0.5 & 0.5 & 0.01 & 70.98 & 0.87 & 0.0037\tabularnewline
0.5 & 0.5 & 0.05 & 12.67 & 0.85 & 0.0201\tabularnewline
0.5 & 0.9 & 0.01 & 66.05 & 0.99 & 0.0037\tabularnewline
0.5 & 0.9 & 0.05 & 15.47 & 1.00 & 0.0245\tabularnewline
1 & 0 & 0.01 & 15.38 & 0.99 & 0.0038\tabularnewline
1 & 0 & 0.05 & 5.55 & 1.06 & 0.0198\tabularnewline
1 & 0.5 & 0.01 & 17.7 & 0.96 & 0.0037\tabularnewline
1 & 0.5 & 0.05 & 5.99 & 0.96 & 0.0220\tabularnewline
1 & 0.9 & 0.01 & 14.8 & 1.00 & 0.0047\tabularnewline
1 & 0.9 & 0.05 & 6.57 & 1.00 & 0.0259\tabularnewline
\hline 
\end{tabular}
\end{table}

In our second scenario, we consider instead that $X_{0}\sim\mathrm{Cauchy}\left(\mu_{0},\Sigma\right)$
is a $d=3$ dimensional random vector whose distribution is multivariate
Cauchy with location parameter $\mu_{0}=0$ and scale parameter $\Sigma$
(see, e.g., \citealp{KotzNadarajah2004}), where $\Sigma$ is defined
as above. At time $t\in\mathbb{N}$, we then consider $X_{t}\sim\mathrm{Cauchy}\left(\mu,\Sigma\right)$,
where $\mu$ is the centroid parameter for the multivariate Cauchy
and is defined as above.We will consider a two-phase design where
in Phase I we observe an IID sample $\mathbf{X}_{0}$ of size $n_{0}\in\left\{ 20,50,100\right\} $
and in Phase II, we observe an IID sample $\mathbf{X}_{t}$ of size
$n_{t}=N_{t}\sim\mathrm{DiscUnif}\left[n_{0}-10,n_{0}+10\right]$.
We wish to test the directional hypotheses at each time $t$ that
the median of the distribution at time $t$ for each coordinate $j\in\left[d\right]$
has experienced an upward or downward shift (i.e., $\mu_{j}-\mu_{0,j}>0$
or $\mu_{j}-\mu_{0,j}<0$) from that of the Phase I process. To this
end, we employ one-sided Mann--Whitney $U$-tests (see, e.g., \citealp[Sec. 5.7]{GibbonsChakraborti2010})
to generate our sequences of $p$-values $\left(P_{t}^{\left(j,\square\right)}\right)_{t}$.
We assess the performance of the procedure from Section \ref{subsec:Directional chart}
with these $p$-values, over the same configurations as above. To
assess the power, $100$ replications of simulation runs are used
to compute the average $R$ and \#OOC at the time of alarm. Simulation
average FWEs are computed via $10000$ replicates of a one time-point
run (i.e., $t=1$), and the results for $n_{0}=50$ are reported in
Table \ref{tab:Cauchy_50}, with the results for $n_{0}=20$ and $100$
reported in Tables \ref{tab:Cauchy_20} and \ref{tab:Cauchy_100},
respectively. 

\begin{table}
\caption{Average run lengths $R$, numbers of detected OOC directions at time
of alarm \#OOC, and numbers of single time point family wise errors
(at time $t=1$) for $d=3$ variable Cauchy model with $n_{0}=50$
tested using the procedure from Section \ref{subsec:Directional chart}
with Mann--Whitney $U$-tests. Average run lengths and number of
true positives are reported via 100 simulation replications, while
single time point family wise errors are obtained from 10000 simulations.}\label{tab:Cauchy_50}

\centering{}%
\begin{tabular}{cccccc}
\hline 
$\delta$ & $\rho$ & $\alpha$ & Mean $R$ & Mean \#OOC  & Mean FWEs\tabularnewline
\hline 
0.5 & 0 & 0.01 & 224.49 & 0.92 & 0.0023\tabularnewline
0.5 & 0 & 0.05 & 11.14 & 0.97 & 0.0195\tabularnewline
0.5 & 0.5 & 0.01 & 71.7 & 0.98 & 0.004\tabularnewline
0.5 & 0.5 & 0.05 & 9.7 & 1 & 0.0212\tabularnewline
0.5 & 0.9 & 0.01 & 33.39 & 1 & 0.0036\tabularnewline
0.5 & 0.9 & 0.05 & 6.21 & 0.99 & 0.0252\tabularnewline
1 & 0 & 0.01 & 2.86 & 1.2 & 0.0044\tabularnewline
1 & 0 & 0.05 & 1.76 & 1.47 & 0.0319\tabularnewline
1 & 0.5 & 0.01 & 1.98 & 1.13 & 0.0053\tabularnewline
1 & 0.5 & 0.05 & 1.13 & 1.36 & 0.0289\tabularnewline
1 & 0.9 & 0.01 & 1.85 & 1.06 & 0.0054\tabularnewline
1 & 0.9 & 0.05 & 1.04 & 1.3 & 0.0231\tabularnewline
\hline 
\end{tabular}
\end{table}

\subsection{Results}

In Tables~\ref{tab:Elementary_normal_sim} and~\ref{tab:AR1_Sim},
we observe that the IC bounds on the ARL and $k$-ARL from Propositions
\ref{prop:ARLnoconditions}--\ref{prop:Conditional_Super_Unif_kARL}
are faithful albeit somewhat conservative. In particular, the ratio
between the realised mean $R_{k}$ and the corresponding lower bound
can be as high as $23.89$, although Table~\ref{tab:AR1_Sim} also
shows that in some non-trivial situations the bounds can be quite
sharp, with ratios as low as $1.24$ between the realised mean ARL
and the lower bound. Furthermore, Table~\ref{tab:AR1_Sim} shows
that the bounds behave as predicted in both the (marginal) super-uniformity
and conditional super-uniformity settings.

In Section~\ref{subsec:Kolmogorov=00003D002013Smirnov-based-EWMA-ch},
we demonstrate the construction of a nontrivial EWMA-like chart for
KS tests under VSS, whose theoretical properties could not be analysed
without the techniques from this work. From Table~\ref{tab:KS_no_EWMA_sim},
we observe that the $p$-value charts without weighted averaging realise
the bounds of Propositions \ref{prop:ARLnoconditions} and \ref{prop:kARL_superuniform}.
As with the results from Section~\ref{subsec:Elementary-examples},
these bounds are realised somewhat conservatively, with ratios between
the realised mean and the bound as high as $32.94$ and as low as
$6.51$. We also observe that larger $\alpha$ levels tend to decrease
the discrepancy in the bound.

In Tables~\ref{tab:KS_IC_EWMA_sim} and~\ref{tab:KS_IC_EWMA_sim_part2},
we demonstrate that the EWMA-like $p$-value charts from Section~\ref{subsec:EWMA-like-charts-p-avg}
have correct lower bounds on the $k$-ARL. In particular, the $k$-ARL
lower bounds can be extremely conservative, even in comparison to
the non-averaged charts, with ratios as high as $\approx3500$ for
some combinations of $\lambda$ and $r$, although the ratio can be
as low as $7.63$. We further note that the charts with negative $r$
tended to produce larger $R_{k}$ values than those with positive
$r$. We emphasise that we only tested a limited number of combinations
of $\lambda$ and $r$, exactly because for some choices, even at
the modest $\alpha$ levels assessed, the resulting $R_{k}$ values
can be arbitrarily large. We note that although these charts can be
conservative with respect to IC $k$-ARL control, they are nevertheless
powerful across a range of OOC processes, as observed in Tables~\ref{tab:KS_OC_EWMA_sim}--\ref{tab:KS_OC_EWMA_sim_part3}
and~\ref{tab:KS_OC_EWMA_sim_dynamic}--\ref{tab:KS_OC_EWMA_sim_dynamic_part3},
where we assess performance under both the first alarm rule and the
first $k$ alarms rule for OOC detection. 

From Tables~\ref{tab:KS_OC_EWMA_sim}, \ref{tab:KS_OC_EWMA_sim_part2},
and~\ref{tab:KS_OC_EWMA_sim_part3}, we observe that in the persistent
OOC setting the original $p$-value chart (without averaging) typically
attains the smallest detection delays. Nevertheless, the penalty from
averaging, a slightly larger time to detection in many scenarios,
can be seen as a worthy trade in the face of the dramatic increases
in the IC $k$-ARL levels that we observe.

The fact that the EWMA-like charts raise a first alarm no earlier
than the $p$-value chart (i.e., the $k=1$ setting) at the same level
$\alpha$ comes down to the simple fact that for every $r\ne0$, $\lambda\in(0,1)$,
and $t\ge1$, 
\[
\tilde{Q}_{\lambda,t}^{(r)}\;\ge\;\min_{s\in[t]}P_{s}.
\]
However, this type of ordering does not necessarily extend to the
$k$ alarms rule. Empirically, in the persistent OOC setting we still
observe that the $p$-value chart raises an alarm before the EWMA-like
charts in every tested scenario. The situation changes when considering
the dynamic OOC settings in Tables~\ref{tab:KS_OC_EWMA_sim_dynamic},
\ref{tab:KS_OC_EWMA_sim_dynamic_part2}, and~\ref{tab:KS_OC_EWMA_sim_dynamic_part3}.
Here we observe that in many cases it is possible for the EWMA-like
charts with $r<0$ to raise, for example, a fifth alarm before the
$p$-value chart does at the same $\alpha$ level. This is quite atypical,
since in most EWMA charts the extra power from moving averages is
obtained by modifying the control limits, which require additional
calibration and simulation to set (see, e.g., \citealp{LucasSaccucci1990}
and \citealp{ReynoldsStoumbos2005ShewhartLimits}). Here, we observe
that in the dynamic setting it is possible to gain additional power
relative to the raw $p$-value chart under the $k$ alarms rule, while
simultaneously increasing the IC average time to the $k$th alarm,
without any additional overhead of control-limit calibration.

The results from Figures~\ref{fig:Plots-of-PDFs} and~\ref{fig:Plots-of-CDFs}
demonstrate directly the functional forms of the distributions arising
from uniform EWMA charts, and verify the claims of Propositions \ref{prop:PDF_of_Ut}
and~\ref{prop:Ut_left_super_unif}. Lastly, Tables~\ref{tab:Normal_directional},
\ref{tab:Cauchy_50}, \ref{tab:Cauchy_20}, and~\ref{tab:Cauchy_100}
provide evidence towards the correctness of Proposition~\ref{prop:Directional_correct_FWER}
regarding the directional and coordinate localisation charts from
Section~\ref{subsec:Directional chart}. In all cases, we observe
that the average FWE is controlled at the time of alarm, and is somewhat
conservative as would be expected from the use of Bonferroni procedures.
We also observe that at the time of alarm, usually only one direction
(and typically only one coordinate) is identified at a time, and that
power increases with the size of the OOC deviation and the degree
of correlation between the coordinates. In addition, these results
provide a demonstration of a viable localisation approach in the face
of pathological Cauchy data with strong theoretical guarantees. 

\section{Concluding remarks}

\label{sec:concluding-remarks}

In this work, we have developed a general framework for SPC based
on charting $p$-values. The key requirement underlying our IC guarantees
is the super-uniformity property~(\ref{eq:superuniformity}), which
is satisfied by any valid $p$-value construction, irrespective of
the underlying DGP and without any assumption on stochastic dependence.
Within this framework, the simple Shewhart-type rule that raises an
alarm whenever $P_{t}\le\alpha$ admits universal lower bounds on
the ARL and its natural generalisation, the $k$-ARL. In particular,
Propositions~\ref{prop:ARLnoconditions} and \ref{prop:kARL_superuniform}
provide worst-case IC guarantees under (\ref{eq:superuniformity}),
and Propositions~\ref{prop:ARL_with_conditional_indep} and~\ref{prop:Conditional_Super_Unif_kARL}
show that these bounds improve to the familiar $\alpha^{-1}$ and
$k\alpha^{-1}$ sharp rates when conditional super-uniformity~(\ref{eq:Conditional_Super_Uni})
holds. Taken together, these results permit calibration of $p$-value
charts in a manner that is valid for all DGPs for which $\mathrm{H}_{0}$
holds, and mitigate against the reliance on simulation-based ARL calibration
tied to any parametric assumptions.

Further to these ARL results, we have demonstrated that $p$-values
can support meaningful weighted averaging schemes, despite the fact
that naive averaging of $p$-values generally fails to preserve validity.
By leveraging recent results on merging functions for dependent $p$-values,
we constructed EWMA-like schemes that produce, at each time $t$,
a new $p$-value for the hypothesis that $\mathrm{H}_{0}$ holds for
all time points up to time $t$ (Propositions~\ref{prop:EWMA-like-charts}
and \ref{prop:EWMA-like-charts-conditional}). This provides a generic
route to constructing smoothed charts whose IC behaviour remains theoretically
controlled, without the additional overhead of control-limit calibration.
We also studied the distribution theory of uniform EWMA processes,
both to clarify why naive uniform EWMA charts do not generally yield
super-uniform tails, and to obtain explicit distributional formulae
and left-tail guarantees (Propositions~\ref{prop:PDF_of_Ut} and~\ref{prop:Ut_left_super_unif}).
Complementary constructions based on $e$-values and $p$-to-$e$
calibrators appear in the Appendix, providing additional flexibility
in how one may design valid averaged charts.

We have further shown that the $p$-value framework extends naturally
to the problem of directional and coordinate localisation in multivariate
monitoring. Using a $p$-value-based closed-testing perspective and
a Holm-type shortcut method, we obtained a procedure that can both
raise a global alarm and then identify the set of affected coordinates
and their corresponding OOC directions, while maintaining FWER control
at level $\alpha$ at the time of alarm. This yields a localisation
approach that is modular in the choice of coordinate-wise test, and
remains applicable in both one- and two-phase designs.

The examples and simulations in Section~\ref{sec:Example-applications}
illustrate several practical features of the methodology. First, the
IC ARL and $k$-ARL lower bounds are faithful and, in some settings,
sharp, while in others they can be conservative, which is the natural
price of working without dependence assumptions. Second, the EWMA-like
charts can trade additional IC conservativeness for OOC power across
a range of alternatives, with particularly interesting behaviour under
the $k$-alarms rule in dynamic OOC scenarios. Lastly, the localisation
results provide a concrete demonstration that strong inferential guarantees
remain attainable even in challenging multivariate settings, including
situations with heavy-tailed data. Within these constructions, we
present two novel charts: an EWMA-like $p$-value chart based on KS
statistics that can be used in two-phase designs for monitoring deviations
from the IC distribution, and a nonparametric localisation chart,
based on Mann--Whitney $U$-statistics for non-standard scenarios
such as heavy-tailed settings.

There are a number of natural directions for further work. It would
be useful to develop principled guidance for selecting the averaging
parameters (such as $\lambda$ and $r$) to balance OOC detection
and IC conservativeness, and to study whether alternative merging
functions can reduce conservativeness while retaining the same level
of robustness. On the multivariate side, it is of interest to investigate
localisation procedures that exploit dependence structure to improve
power while still providing FWER control at the time of alarm. More
broadly, the $p$-value charting perspective suggests a modular approach
to SPC: given any setting where one can construct valid $p$-values
for $\mathrm{H}_{0}$, one can immediately obtain charts with explicit
IC guarantees, and layer on additional structure, including smoothing
and localisation, without sacrificing theoretical control of ARL and
$k$-ARL. 

\subsubsection*{Data availability statement}

The data and code that support the findings of this study are openly
available at \url{https://github.com/hiendn/Pvalue_SPC}.

\bibliographystyle{apalike2}
\bibliography{bib}

\section*{Appendix}

\subsection*{EWMA charts based on $e$-values}

We say that a random variable $E:\Omega\to\left[0,\infty\right]$
is an $e$-value with respect to the probability measure $\mathrm{P}_{0}$
if $\mathrm{E}_{0}\left[E\right]\le1$. Further, a function $\gamma:\left[0,1\right]\to\mathbb{R}_{\ge0}\cup\left\{ \infty\right\} $
is a $p$-to-$e$ calibrator if, for each $p$-value $P$ satisfying
$\mathrm{P}_{0}\left(P\le\alpha\right)\le\alpha$ for every $\alpha\in\left[0,1\right]$,
it holds that $\mathrm{E}_{0}\left[\gamma\left(P\right)\right]\le1$.
The following result appears as Proposition 2.1 of \citet{Vovk2021}. 
\begin{lem}
\label{lem:p-to-e=00003D00003D00003D00003D000020calibrator-1-1} A
decreasing function $\gamma:\left[0,1\right]\to\mathbb{R}_{\ge0}\cup\left\{ \infty\right\} $
is a $p$-to-$e$ calibrator if and only if $\int_{0}^{1}\gamma\left(p\right)\,\mathrm{d}p\le1$. 
\end{lem}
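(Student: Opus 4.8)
The plan is to prove the two implications separately, with the reverse (``only if'') direction being essentially immediate and the forward (``if'') direction carrying all the content. For necessity, I would observe that the uniform variable $P\sim\mathrm{Unif}(0,1)$ is itself a valid super-uniform $p$-value under $\mathrm{P}_0$, since $\mathrm{P}_0(P\le\alpha)=\alpha$ for every $\alpha\in[0,1]$. As a monotone function, $\gamma$ is Borel measurable, so $\gamma(P)$ is a well-defined $[0,\infty]$-valued random variable with $\mathrm{E}_0[\gamma(P)]=\int_0^1\gamma(p)\,\mathrm{d}p$. If $\gamma$ is a calibrator, this expectation is at most $1$, which gives $\int_0^1\gamma(p)\,\mathrm{d}p\le1$ directly.

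For sufficiency, suppose $\gamma$ is decreasing with $\int_0^1\gamma(p)\,\mathrm{d}p\le1$, and let $P$ be an arbitrary super-uniform $p$-value under $\mathrm{P}_0$. The main tool is the layer-cake (tail-integral) representation $\mathrm{E}_0[\gamma(P)]=\int_0^\infty\mathrm{P}_0(\gamma(P)>y)\,\mathrm{d}y$, valid for the nonnegative variable $\gamma(P)$. The crucial structural observation is that, because $\gamma$ is nonincreasing, each superlevel set $\{p\in[0,1]:\gamma(p)>y\}$ is a lower interval of the form $[0,c(y))$ or $[0,c(y)]$, where $c(y)=\sup\{p\in[0,1]:\gamma(p)>y\}$ is its right endpoint, and its Lebesgue measure is exactly $c(y)$. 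Consequently the event $\{\gamma(P)>y\}$ is contained in $\{P\le c(y)\}$, and super-uniformity yields $\mathrm{P}_0(\gamma(P)>y)\le\mathrm{P}_0(P\le c(y))\le c(y)$. Integrating this bound over $y$ and applying the same layer-cake identity to the deterministic function $\gamma$ under Lebesgue measure gives $\mathrm{E}_0[\gamma(P)]\le\int_0^\infty c(y)\,\mathrm{d}y=\int_0^1\gamma(p)\,\mathrm{d}p\le1$, which is exactly the $e$-value bound $\mathrm{E}_0[\gamma(P)]\le1$.

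The main obstacle I anticipate is the measure-theoretic bookkeeping in the sufficiency step rather than any deep idea. One must treat the generalised right endpoint $c(y)$ carefully, verify that the endpoint ambiguity between $\{P<c(y)\}$ and $\{P\le c(y)\}$ does not affect the bound (it does not, since super-uniformity controls $\mathrm{P}_0(P\le c(y))$ and, by left-continuity of the distribution function, also $\mathrm{P}_0(P<c(y))$), and accommodate the possibility that $\gamma$ equals $\infty$ on an initial segment, whose contribution is still captured correctly by $\int_0^1\gamma(p)\,\mathrm{d}p$ (whose finiteness is forced by the hypothesis). The interchange of integration and the passage from $\mathrm{P}_0$-probabilities to Lebesgue measure are both justified by Tonelli's theorem, since all integrands are nonnegative. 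It is precisely the monotonicity of $\gamma$ that aligns the superlevel sets with the direction of the super-uniformity inequality $\mathrm{P}_0(P\le\alpha)\le\alpha$; without it, these sets need not be lower intervals and the bound fails, which also clarifies why the equivalence is specific to decreasing calibrators.
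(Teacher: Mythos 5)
Your proof is correct. Note that the paper does not actually prove this lemma: it is stated as a citation of Proposition~2.1 of Vovk and Wang (2021), so there is no in-paper argument to compare against. Your two directions are both sound. The necessity step (take $P\sim\mathrm{Unif}(0,1)$, which is super-uniform, so $\mathrm{E}_{0}[\gamma(P)]=\int_{0}^{1}\gamma(p)\,\mathrm{d}p\le1$) is the standard one and uses monotonicity only for measurability. Your sufficiency step via the layer-cake representation is a correct, fully unpacked version of the usual one-line argument that a super-uniform $P$ stochastically dominates a uniform variable and $\gamma$ is decreasing, hence $\mathrm{E}_{0}[\gamma(P)]\le\int_{0}^{1}\gamma(p)\,\mathrm{d}p$: the key containment $\{\gamma(P)>y\}\subseteq\{P\le c(y)\}$ with $c(y)=\sup\{p:\gamma(p)>y\}\in[0,1]$, the bound $\mathrm{P}_{0}(P\le c(y))\le c(y)$ from super-uniformity, and the identification $\int_{0}^{\infty}c(y)\,\mathrm{d}y=\int_{0}^{1}\gamma(p)\,\mathrm{d}p$ by Tonelli are all handled correctly, and your remarks about endpoint ambiguity and the possible $+\infty$ values of $\gamma$ dispose of the only genuine measure-theoretic issues. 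The layer-cake route is slightly longer than invoking stochastic dominance directly, but it is self-contained and makes transparent exactly where the monotonicity of $\gamma$ interacts with the direction of the super-uniformity inequality.
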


We now leverage Lemma \ref{lem:p-to-e=00003D00003D00003D00003D000020calibrator-1-1}
to obtain EWMA-like charts. To do so, we first note that given a set
of $e$-values $E_{1},\dots,E_{m}:\Omega\to\left[0,\infty\right]$,
it holds that any convex combination of these $e$-values remains
an $e$-value. That is, for any set of weights $w_{1},\dots,w_{m}\in\left[0,1\right]$
such that $\sum_{t=1}^{m}w_{t}=1$, the inequality 
\[
\mathrm{E}_{0}\left[\sum_{t=1}^{m}w_{t}E_{t}\right]=\sum_{t=1}^{m}w_{t}\mathrm{E}_{0}\left[E_{t}\right]\le\sum_{t=1}^{m}w_{t}=1\text{,}
\]
holds regardless of what the dependence behaviour between the $e$-values
is. Another simple fact is that, given an $e$-value $E$, the random
variable $P=\min\left\{ 1,1/E\right\} $ is a $p$-value, since Markov's
inequality implies that, for $\alpha\in\left(0,1\right]$, 
\[
\mathrm{P}_{0}\left(P\le\alpha\right)=\mathrm{P}_{0}\left(1/E\le\alpha\right)=\mathrm{P}_{0}\left(E\ge\frac{1}{\alpha}\right)\le\alpha\,\mathrm{E}_{0}\left[E\right]\le\alpha\text{.}
\]

Putting these two pieces together, to generate a sequence of $p$-values
that averages information across the history, we first convert the
$p$-values $\left(P_{t}\right)_{t}$ into corresponding $e$-values
$\left(E_{t}\right)_{t}$, defined by $E_{t}=\gamma\left(P_{t}\right)$,
where $\gamma$ satisfies the conditions of Lemma \ref{lem:p-to-e=00003D00003D00003D00003D000020calibrator-1-1}.
Then, we generate an EWMA sequence $\left(\tilde{E}_{\lambda,t}\right)_{t}$
using these $e$-values by computing 
\[
\tilde{E}_{\lambda,t}=\lambda E_{t}+\left(1-\lambda\right)\tilde{E}_{\lambda,t-1}\text{,}
\]
for $t\ge2$, where $\tilde{E}_{\lambda,1}=E_{1}$ and $\lambda\in\left(0,1\right)$.
We then obtain a super-uniform sequence of $p$-values with time-averaged
information $\left(Q_{\lambda,t}^{\mathrm{e}}\right)_{t}$, defined
by 
\[
Q_{\lambda,t}^{\mathrm{e}}=\min\left\{ 1,1/\tilde{E}_{\lambda,t}\right\} \text{.}
\]

\begin{prop}
\label{prop:E_value_averaging_super_unif-1}For each $\lambda\in\left(0,1\right)$,
if $\left(P_{t}\right)_{t}$ satisfies \emph{(\ref{eq:superuniformity})},
then $\left(Q_{\lambda,t}^{\mathrm{e}}\right)_{t}$ is a super-uniform
sequence under $\mathrm{P}_{0}$; i.e., for each $\alpha\in\left[0,1\right]$,
\[
\mathrm{P}_{0}\left(Q_{\lambda,t}^{\mathrm{e}}\le\alpha\right)\le\alpha\quad\mathrm{for\ all\ }t\in\mathbb{N}.
\]
\end{prop}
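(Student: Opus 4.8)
The plan is to observe that the construction of $\left(Q_{\lambda,t}^{\mathrm{e}}\right)_{t}$ simply chains together three facts already recorded in the discussion preceding the statement, so the proof reduces to checking that the $e$-value property is preserved along the EWMA recursion. First I would note that, since $\left(P_{t}\right)_{t}$ satisfies (\ref{eq:superuniformity}) and $\gamma$ is a $p$-to-$e$ calibrator in the sense of Lemma~\ref{lem:p-to-e=00003D00003D00003D00003D000020calibrator-1-1}, each $E_{t}=\gamma\left(P_{t}\right)$ satisfies $\mathrm{E}_{0}\left[E_{t}\right]\le1$; that is, every $E_{t}$ is an $e$-value under $\mathrm{P}_{0}$.

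Next I would unfold the $e$-value EWMA recursion. Since $\tilde{E}_{\lambda,1}=E_{1}$ and $\tilde{E}_{\lambda,t}=\lambda E_{t}+\left(1-\lambda\right)\tilde{E}_{\lambda,t-1}$ for $t\ge2$ has exactly the same linear form as the recursion defining $S_{\lambda,t}^{\left(r\right)}$, the identical induction used to establish (\ref{eq:EWMA-no-P0-weights}) yields
\[
\tilde{E}_{\lambda,t}=\sum_{s=1}^{t}w_{t,s}E_{s},
\]
with the same weights $w_{t,s}\ge0$ satisfying $\sum_{s=1}^{t}w_{t,s}=1$. Because a convex combination of $e$-values is again an $e$-value regardless of the dependence structure --- as recorded in the text via linearity of expectation --- it follows that $\mathrm{E}_{0}\left[\tilde{E}_{\lambda,t}\right]=\sum_{s=1}^{t}w_{t,s}\,\mathrm{E}_{0}\left[E_{s}\right]\le\sum_{s=1}^{t}w_{t,s}=1$, so $\tilde{E}_{\lambda,t}$ is itself an $e$-value for every $t$.

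Finally I would convert back to a $p$-value. Applying the Markov-inequality fact noted in the text --- that $\min\left\{ 1,1/E\right\} $ is super-uniform whenever $E$ is an $e$-value --- to $E=\tilde{E}_{\lambda,t}$ gives $\mathrm{P}_{0}\left(Q_{\lambda,t}^{\mathrm{e}}\le\alpha\right)\le\alpha$ for every $\alpha\in\left(0,1\right]$, which is the desired conclusion. The only step requiring slight care is the endpoint $\alpha=0$: here $\left\{ Q_{\lambda,t}^{\mathrm{e}}\le0\right\} =\left\{ \tilde{E}_{\lambda,t}=\infty\right\} $, and since $\mathrm{E}_{0}\left[\tilde{E}_{\lambda,t}\right]\le1<\infty$ forces $\tilde{E}_{\lambda,t}<\infty$ almost surely, this event has probability zero, so super-uniformity also holds at $\alpha=0$. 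Because every structural ingredient --- the calibrator bound, the convexity of the $e$-value property, and the Markov conversion --- is already available, the argument is essentially bookkeeping; the only mild obstacle is tracking these edge cases rather than any substantive difficulty.
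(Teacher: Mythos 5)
Your proposal is correct and follows exactly the route the paper intends: the paper's own justification is precisely the chain of facts recorded in the discussion preceding the proposition (calibrator gives $e$-values, convex combinations of $e$-values are $e$-values, and $\min\{1,1/E\}$ is super-uniform by Markov's inequality), applied after unfolding the EWMA recursion into a convex combination with the weights of (\ref{eq:EWMA-no-P0-weights}). Your additional care at the endpoint $\alpha=0$ is a harmless refinement that the paper leaves implicit.
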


We note that there are many choices for the $p$-to-$e$ calibrator
$\gamma$, and $P=\min\{1,1/E\}$ is not the only method for generating
a $p$-value from an $e$-value. However, the choice $P=\min\{1,1/E\}$
is admissible in the sense that it is pointwise no larger than every
other transformation of $E$ that yields a super-uniform $p$-value.
On the other hand, there is no best choice for $\gamma$, although
it is known that $\gamma$ is admissible in the sense that there is
no other calibrator that yields a valid $e$-value and is pointwise
larger than $\gamma(p)$, if and only if $\lim_{p\to0}\gamma(p)=\infty$
and $\int_{0}^{1}\gamma(p)\,\mathrm{d}p=1$. Thus, our suggested calibrator
$\gamma(p)=\beta p^{\beta-1}$, for $\beta\in(0,1)$, is an admissible
and simple choice. These optimality results can be obtained from \citet[Sec. 2]{Vovk2021}.

Lastly, we do not know of a useful method for producing time-averaged
$p$-values using the $e$-value approach that yields a sequence of
conditionally super-uniform $p$-values from an initial sequence $\left(P_{t}\right)_{t}$
satisfying (\ref{eq:Conditional_Super_Uni}). 

\subsection*{Proofs of technical results}

\subsubsection*{Proof of Proposition \ref{prop:ARLnoconditions}}

Let $\nu=\left\lfloor 1/\alpha\right\rfloor $. Then $1-\left(m-1\right)\alpha>0$,
for $m\in\left[\nu\right]$ and $1-\nu\alpha\ge0$ at $m=\nu+1$.
Thus, 
\[
\sum_{m=1}^{\infty}\bigl[1-\left(m-1\right)\alpha\bigr]_{+}=\sum_{m=1}^{\nu+1}\bigl(1-\left(m-1\right)\alpha\bigr)=\left(\nu+1\right)\left(1-\frac{\alpha\nu}{2}\right).
\]
Next, write $u=1/\alpha$. Then 
\[
\left(\nu+1\right)\left(1-\frac{\alpha\nu}{2}\right)=\left(\nu+1\right)\left(1-\frac{\nu}{2u}\right)=\frac{1}{2}\left\{ 2\left(\nu+1\right)-\frac{\nu\left(\nu+1\right)}{u}\right\} .
\]
Observe that 
\[
2u\left(\nu+1\right)-\nu\left(\nu+1\right)\ge u^{2}+u\ \text{iff}\ \left(u-\nu\right)\left(u-(\nu+1)\right)\le0,
\]
which holds for each $\nu\le u<\nu+1$ (by the definition of $\nu$).
Therefore, from (\ref{eq:ARL-expression}) we obtain the desired conclusion.

\subsubsection*{Proof of Proposition \ref{prop:ARL_with_conditional_indep}}

Let us write $I_{t}=\mathbf{1}_{\left\{ P_{t}\le\alpha\right\} }$
and note that the run length can be written as $R=\inf\left\{ t\ge1:I_{t}=1\right\} $.
Then, with 
\[
\mathbb{S}_{t}=\left\{ \sum_{s=1}^{t}I_{s}=0\right\} =\left\{ I_{1}=\dots=I_{t}=0\right\} \text{,}
\]
we can write $\left\{ R>t\right\} =\mathbb{S}_{t}$.

Since $I_{t}:\Omega\to\left\{ 0,1\right\} $, (\ref{eq:Conditional_Super_Uni})
gives $\mathrm{E}_{0}\left(I_{t}\mid{\cal F}_{t-1}\right)\le\alpha$,
and hence 
\[
\mathrm{P}_{0}\left(I_{t}=0\mid{\cal F}_{t-1}\right)=\mathrm{E}_{0}\left(1-I_{t}\mid{\cal F}_{t-1}\right)\ge1-\alpha\text{, }\mathrm{a.s.}
\]
Thus, by the tower property and since $\mathbb{S}_{t-1}\in{\cal F}_{t-1}$,
\begin{align*}
\mathrm{P}_{0}\left(\mathbb{S}_{t}\right) & =\mathrm{E}_{0}\mathbf{1}_{\mathbb{S}_{t}}=\mathrm{E}_{0}\left[\mathbf{1}_{\mathbb{S}_{t-1}}\left(1-I_{t}\right)\right]\\
 & =\mathrm{E}_{0}\left[\mathbf{1}_{\mathbb{S}_{t-1}}\mathrm{E}_{0}\left(1-I_{t}\mid{\cal F}_{t-1}\right)\right]\\
 & \ge\left(1-\alpha\right)\mathrm{E}_{0}\mathbf{1}_{\mathbb{S}_{t-1}}=\left(1-\alpha\right)\mathrm{P}_{0}\left(\mathbb{S}_{t-1}\right)\text{.}
\end{align*}
Iterated applications of this bound then yield 
\[
\mathrm{P}_{0}\left(R>t\right)=\mathrm{P}_{0}\left(\mathbb{S}_{t}\right)\ge\left(1-\alpha\right)^{t}\text{,}
\]
noting that $\mathrm{P}_{0}\left(\mathbb{S}_{0}\right)=1$. The survival
function expression for the expectation formula then gives 
\[
\mathrm{E}_{0}R=\sum_{t=0}^{\infty}\mathrm{P}_{0}\left(R>t\right)\ge\sum_{t=0}^{\infty}\left(1-\alpha\right)^{t}=\frac{1}{\alpha}\text{,}
\]
as required, by geometric summation.

\subsubsection*{Proof of Proposition \ref{prop:kARL_superuniform}}

For any $m\ge1$, by Markov's inequality, 
\[
\mathrm{P}_{0}\left(R_{k}\le m\right)=\mathrm{P}_{0}\left(\sum_{t=1}^{m}\mathbf{1}_{\{P_{t}\le\alpha\}}\ge k\right)\le\frac{\sum_{t=1}^{m}\mathrm{E}_{0}\mathbf{1}_{\{P_{t}\le\alpha\}}}{k}\le\frac{\alpha m}{k}\text{.}
\]
Thus $\mathrm{P}_{0}\left(R_{k}\ge m+1\right)\ge1-\alpha m/k$. Using
the survival function representation for the expectation, $\mathrm{E}_{0}R_{k}=\sum_{m=0}^{\infty}\mathrm{P}_{0}\left(R_{k}\ge m+1\right)$,
we obtain 
\[
\mathrm{E}_{0}R_{k}\ge\sum_{m=0}^{\nu}\left(1-\frac{\alpha m}{k}\right)=\left(\nu+1\right)\left(1-\frac{\alpha\nu}{2k}\right)\text{,}
\]
upon discarding the negative terms. The second inequality follows
by an argument analogous to the proof of Proposition \ref{prop:ARLnoconditions}.

\subsubsection*{Proof of Proposition \ref{prop:Conditional_Super_Unif_kARL}}

Recall the notation $I_{t}=\mathbf{1}_{\{P_{t}\le\alpha\}}$ and define
$M_{t}=\sum_{s=1}^{t}\bigl(I_{s}-\alpha\bigr)$, $N_{t}=\sum_{s=1}^{t}I_{s}=M_{t}+\alpha t$,
and $M_{0}=0$. Since $\mathrm{E}_{0}\left[I_{t}\mid{\cal F}_{t-1}\right]\le\alpha$,
we have $\mathrm{E}_{0}\left[M_{t}\mid{\cal F}_{t-1}\right]\le M_{t-1}$,
so $\left(M_{t}\right)_{t}$ is a supermartingale adapted to $\left({\cal F}_{t}\right)_{t}$.

Let $T_{m}=\min\left\{ R_{k},m\right\} $ be a bounded stopping time.
Note that 
\[
R_{k}=\inf\left\{ t\ge1:N_{t}\ge k\right\} ,
\]
and, because $N_{t}$ increases by at most one at each step, $N_{R_{k}}=k$
on $\{R_{k}<\infty\}$. By Doob's optional stopping theorem for bounded
stopping times (see, e.g., \citealp[Thm. 3.8]{LattimoreSzepesvari2020}),
$\mathrm{E}_{0}M_{T_{m}}\le\mathrm{E}_{0}M_{0}=0$, implying 
\begin{equation}
\mathrm{E}_{0}N_{T_{m}}\le\alpha\mathrm{E}_{0}T_{m}.\label{eq:NTmvsTm}
\end{equation}
Since $T_{m}\nearrow R_{k}$ and $T_{m}\ge0$, the monotone convergence
theorem yields $\lim_{m\to\infty}\mathrm{E}_{0}T_{m}=\mathrm{E}_{0}R_{k}$.
For the left-hand side, $N_{T_{m}}\nearrow N_{R_{k}}$ almost surely;
hence 
\[
\lim_{m\to\infty}\mathrm{E}_{0}N_{T_{m}}=\mathrm{E}_{0}N_{R_{k}}=k\,\mathrm{P}_{0}\left(R_{k}<\infty\right)+\mathrm{E}_{0}\left[\lim_{t\to\infty}N_{t}\,\mathbf{1}_{\{R_{k}=\infty\}}\right]\le k,
\]
since on $\{R_{k}=\infty\}$ we have $\lim_{t\to\infty}N_{t}\le k-1$.
If $\mathrm{P}_{0}\left(R_{k}=\infty\right)>0$, then $\mathrm{E}_{0}R_{k}=\infty$
and the desired bound $\mathrm{E}_{0}R_{k}\ge k/\alpha$ is trivial.
Otherwise, $\mathrm{P}_{0}\left(R_{k}<\infty\right)=1$ and $\lim_{m\to\infty}\mathrm{E}_{0}N_{T_{m}}=k$.
Taking limits in (\ref{eq:NTmvsTm}) gives $k\le\alpha\mathrm{E}_{0}R_{k}$,
as required.

\subsubsection*{Proof of Proposition \ref{prop:EWMA-like-charts}}

Fix $\lambda\in(0,1)$, $r>-1$ with $r\neq0$ and $t\in\mathbb{N}$.
By (\ref{eq:EWMA-no-P0-weights}) we can write 
\[
S_{\lambda,t}^{\left(r\right)}=\sum_{s=1}^{t}w_{t,s}P_{s}^{r},
\]
where the nonnegative weights $w_{t,1},\dots,w_{t,t}$ sum to $1$
and have maximum $w_{t,\max}=\max\{(1-\lambda)^{t-1},\lambda\}$.
The super-uniformity condition (\ref{eq:superuniformity}) implies
that $P_{1},\dots,P_{t}$ are all $p$-values under $\mathrm{P}_{0}$.

If $r\ge1$, then by Lemma \ref{lem:p-value=00003D00003D00003D00003D000020averaging=00003D00003D00003D00003D000020Lemma}(2),
applied to $P_{1},\dots,P_{t}$ with weights $w_{t,1},\dots,w_{t,t}$,
\[
q_{t}=\Bigl(\min\{1+r,w_{t,\max}^{-1}\}\Bigr)^{1/r}\Bigl(\sum_{s=1}^{t}w_{t,s}P_{s}^{r}\Bigr)^{1/r}=Q_{\lambda,t}^{\left(r\right)}
\]
is a valid $p$-value under $\mathrm{P}_{0}$. If $r\in(-1,1)\setminus\{0\}$,
then Lemma \ref{lem:p-value=00003D00003D00003D00003D000020averaging=00003D00003D00003D00003D000020Lemma}(1)
yields 
\[
q_{t}=\left(1+r\right)^{1/r}\Bigl(\sum_{s=1}^{t}w_{t,s}P_{s}^{r}\Bigr)^{1/r}=Q_{\lambda,t}^{\left(r\right)},
\]
which is again a valid $p$-value under $\mathrm{P}_{0}$. Therefore,
in either case 
\[
\mathrm{P}_{0}\left(Q_{\lambda,t}^{\left(r\right)}\le\alpha\right)=\mathrm{P}_{0}\left(q_{t}\le\alpha\right)\le\alpha
\]
for every $\alpha\in[0,1]$. Since $t$ was arbitrary, the result
follows.

\subsubsection*{Proof of Proposition \ref{prop:EWMA-like-charts-conditional}}

Fix $\lambda\in(0,1)$, $r\ge1$ and $t\ge1$. By construction, $S_{\lambda,t-1}^{\left(r\right)}$
is a measurable function of $(P_{1},\dots,P_{t-1})$, hence $\left(1-\lambda\right)S_{\lambda,t-1}^{\left(r\right)}$
is ${\cal F}_{t-1}$-measurable and nonnegative. Therefore 
\[
\bar{Q}_{\lambda,t}^{\left(r\right)}=\lambda^{-1/r}\Bigl\{\lambda P_{t}^{r}+\left(1-\lambda\right)S_{\lambda,t-1}^{\left(r\right)}\Bigr\}^{1/r}\ge\lambda^{-1/r}\bigl(\lambda P_{t}^{r}\bigr)^{1/r}=P_{t},
\]
so that 
\[
\left\{ \bar{Q}_{\lambda,t}^{\left(r\right)}\le\alpha\right\} \subseteq\left\{ P_{t}\le\alpha\right\} \text{, }\mathrm{a.s.}
\]
Taking conditional probabilities given ${\cal F}_{t-1}$ and using
(\ref{eq:Conditional_Super_Uni}) yields 
\[
\mathrm{P}_{0}\left(\bar{Q}_{\lambda,t}^{\left(r\right)}\le\alpha\,\big|\,{\cal F}_{t-1}\right)\le\mathrm{P}_{0}\left(P_{t}\le\alpha\,\big|\,{\cal F}_{t-1}\right)\le\alpha\text{, }\mathrm{a.s.},
\]
which proves the claim.

\subsubsection*{Proof of Proposition \ref{prop:PDF_of_Ut}}

To obtain the probability density function (PDF) of $\tilde{U}_{\lambda,t}$,
observe that for each $t\ge1$, with $a_{t,s}=\lambda\left(1-\lambda\right)^{t-s}$,
we can write 
\[
\tilde{U}_{\lambda,t}=\left(1-\lambda\right)^{t}u_{0}+\sum_{s=1}^{t}a_{t,s}U_{s}\text{.}
\]
Thus, $\tilde{U}_{\lambda,t}=\left(1-\lambda\right)^{t}u_{0}+W_{t}$,
where $W_{t}=\sum_{s=1}^{t}a_{t,s}U_{s}$ and $U_{s}\sim\mathrm{Unif}\left(0,1\right)$
are IID, so that $a_{t,s}U_{s}\sim\mathrm{Unif}\left(0,a_{t,s}\right)$.
Let 
\[
\mathrm{F}_{t,u_{0}}\left(u\right)=\mathrm{P}_{0}\left(\tilde{U}_{\lambda,t}\le u\right)=\mathrm{P}_{0}\left(W_{t}\le u-\left(1-\lambda\right)^{t}u_{0}\right)
\]
be the CDF of $\tilde{U}_{\lambda,t}$, and note that $\mathrm{F}_{t,u_{0}}\left(u\right)=\mathrm{G}_{t}\left(u-\left(1-\lambda\right)^{t}u_{0}\right)$,
where $\mathrm{G}_{t}$ is the CDF of $W_{t}$. We write the respective
PDFs of $\tilde{U}_{\lambda,t}$ and $W_{t}$ as $f_{t,u_{0}}$ and
$g_{t}$, respectively, and note that $f_{t,u_{0}}\left(u\right)=g_{t}\left(u-\left(1-\lambda\right)^{t}u_{0}\right)$.

For $t=1$, observe that $W_{1}=a_{t,1}U_{1}\sim\mathrm{Unif}\left(0,a_{t,1}\right)$,
thus 
\[
g_{1}\left(w\right)=\frac{1}{a_{t,1}}\mathbf{1}_{\left[0,a_{t,1}\right]}\left(w\right)\text{, and }\mathrm{G}_{1}\left(w\right)=\frac{\left[w\right]_{+}-\left[w-a_{t,1}\right]_{+}}{a_{t,1}}\text{.}
\]
When $t\ge2$, define $\tilde{W}_{t}=a_{t,t}U_{t}\sim\mathrm{Unif}\left(0,a_{t,t}\right)$,
independent of $W_{t-1}^{(t)}=\sum_{s=1}^{t-1}a_{t,s}U_{s}$. Then
\[
W_{t}=W_{t-1}^{(t)}+\tilde{W}_{t},
\]
and the convolution identity for the PDF of sums of independent random
variables yields 
\[
g_{t}\left(w\right)=\int_{\mathbb{R}}g_{t-1}^{(t)}\left(w-\tilde{w}\right)h_{t}\left(\tilde{w}\right)\,\mathrm{d}\tilde{w}\text{,}
\]
where $h_{t}\left(\tilde{w}\right)=a_{t,t}^{-1}\mathbf{1}_{\left[0,a_{t,t}\right]}\left(\tilde{w}\right)$
is the PDF of $\tilde{W}_{t}$ and $g_{t-1}^{(t)}$ denotes the PDF
of $W_{t-1}^{(t)}$. Hence 
\begin{equation}
g_{t}\left(w\right)=\frac{1}{a_{t,t}}\int_{0}^{a_{t,t}}g_{t-1}^{(t)}\left(w-\tilde{w}\right)\,\mathrm{d}\tilde{w}\text{.}\label{eq:convolution_identity_correct}
\end{equation}
We need the following lemma. 
\begin{lem}
\label{lem:TP_lemma-1}For each $m\in\mathbb{N}$, $a>0$, and $x\in\mathbb{R}$,
\[
\int_{0}^{a}\left[x-y\right]_{+}^{m-1}\,\mathrm{d}y=\frac{\left[x\right]_{+}^{m}-\left[x-a\right]_{+}^{m}}{m}\text{.}
\]
\end{lem}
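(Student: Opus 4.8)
The plan is to reduce the integral to a single truncated-power antiderivative via the change of variables $z = x - y$. Under this substitution $\mathrm{d}z = -\mathrm{d}y$, and as $y$ ranges over $[0,a]$ the variable $z$ ranges over $[x-a,x]$, so that
\[
\int_{0}^{a}[x-y]_{+}^{m-1}\,\mathrm{d}y=\int_{x-a}^{x}[z]_{+}^{m-1}\,\mathrm{d}z.
\]
It therefore suffices to integrate the one-sided power $z\mapsto[z]_{+}^{m-1}$ over an arbitrary interval, which is a purely one-dimensional calculus fact independent of the EWMA context.

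First I would exhibit an antiderivative. Setting $\Phi(z)=\tfrac{1}{m}[z]_{+}^{m}$, a direct computation gives $\Phi'(z)=[z]_{+}^{m-1}$ for every $z\neq0$: on $\{z>0\}$ this is the power rule, while on $\{z<0\}$ both sides vanish. For $m\ge2$ the identity $\Phi'(z)=[z]_{+}^{m-1}$ also holds at $z=0$ and $\Phi$ is $C^{1}$, so the fundamental theorem of calculus applies immediately and yields
\[
\int_{x-a}^{x}[z]_{+}^{m-1}\,\mathrm{d}z=\Phi(x)-\Phi(x-a)=\frac{[x]_{+}^{m}-[x-a]_{+}^{m}}{m},
\]
which is precisely the claimed formula.

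The only delicate case is $m=1$, where the integrand $[z]_{+}^{0}$ is discontinuous at $z=0$ and $\Phi(z)=[z]_{+}$ has a corner there. I would handle this by noting that $\Phi$ remains locally Lipschitz, hence absolutely continuous, with $\Phi'(z)=\mathbf{1}_{\{z>0\}}$ for almost every $z$; since altering the integrand on the single point $z=0$ leaves the integral unchanged, the fundamental theorem of calculus for absolutely continuous functions again delivers the same endpoint evaluation. As a robustness check that simultaneously disposes of any ambiguity in interpreting $0^{0}$, I would alternatively verify the identity directly by splitting into the three regimes $x\le0$, $0<x\le a$, and $x>a$; in each regime the integral collapses to an elementary expression, namely $0$, $x^{m}/m$, and $(x^{m}-(x-a)^{m})/m$, respectively, and these coincide term by term with the right-hand side. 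I expect this $m=1$ boundary behaviour, rather than the antiderivative computation, to be the only point requiring genuine care, as everything else is routine; the explicit three-case evaluation is the safe fallback I would record to settle it unambiguously.
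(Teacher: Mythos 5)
Your proof is correct. Your primary route differs in presentation from the paper's: you substitute $z=x-y$ and invoke the fundamental theorem of calculus for the antiderivative $\Phi(z)=\frac{1}{m}\left[z\right]_{+}^{m}$, taking care with the $m=1$ case via absolute continuity of $\left[z\right]_{+}$. The paper instead argues directly by splitting into the three cases $x\le0$, $0<x<a$, and $x\ge a$ and evaluating the integral elementarily in each, which is exactly the fallback argument you record at the end; so your proposal subsumes the paper's proof. The antiderivative framing is slightly more economical and makes transparent why the truncated power $\left[\,\cdot\,\right]_{+}^{m}/m$ appears (a point the paper reuses implicitly when it integrates $f_{t,u_{0}}$ termwise to obtain the CDF), while the paper's case split avoids any discussion of differentiability at the kink. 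Either argument fully establishes the lemma.
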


\begin{proof}
If $x\le0$, then both sides are equal to zero. If $0<x<a$, then
\[
\int_{0}^{a}\left[x-y\right]_{+}^{m-1}\,\mathrm{d}y=\int_{0}^{x}\left(x-y\right)^{m-1}\,\mathrm{d}y=\left[-\frac{\left(x-y\right)^{m}}{m}\right]_{y=0}^{y=x}=\frac{x^{m}}{m}\text{.}
\]
Finally, if $x\ge a$, then 
\[
\int_{0}^{a}\left[x-y\right]_{+}^{m-1}\,\mathrm{d}y=\int_{0}^{a}\left(x-y\right)^{m-1}\,\mathrm{d}y=\left[-\frac{\left(x-y\right)^{m}}{m}\right]_{y=0}^{y=a}=\frac{x^{m}-\left(x-a\right)^{m}}{m}\text{.}
\]
The result follows by combining the cases. 
\end{proof}
We proceed by induction on $t$. The case $t=1$ has been verified.
For $t-1\ge1$, make the inductive hypothesis 
\[
g_{t-1}^{(t)}\left(w\right)=\frac{1}{\left(t-2\right)!\prod_{s=1}^{t-1}a_{t,s}}\sum_{\mathbb{S}\subseteq\left[t-1\right]}\left(-1\right)^{\left|\mathbb{S}\right|}\left[w-a_{\mathbb{S}}\right]_{+}^{t-2}\text{,}
\]
where $a_{\mathbb{S}}=\sum_{s\in\mathbb{S}}a_{t,s}$. Then (\ref{eq:convolution_identity_correct})
and Lemma \ref{lem:TP_lemma-1} give 
\begin{align*}
g_{t}\left(w\right) & =\frac{1}{a_{t,t}}\int_{0}^{a_{t,t}}\frac{1}{\left(t-2\right)!\prod_{s=1}^{t-1}a_{t,s}}\sum_{\mathbb{S}\subseteq\left[t-1\right]}\left(-1\right)^{\left|\mathbb{S}\right|}\left[w-\tilde{w}-a_{\mathbb{S}}\right]_{+}^{t-2}\,\mathrm{d}\tilde{w}\\
 & =\frac{1}{\left(t-1\right)!\prod_{s=1}^{t}a_{t,s}}\sum_{\mathbb{S}\subseteq\left[t-1\right]}\left(-1\right)^{\left|\mathbb{S}\right|}\left(\left[w-a_{\mathbb{S}}\right]_{+}^{t-1}-\left[w-a_{\mathbb{S}}-a_{t,t}\right]_{+}^{t-1}\right)\\
 & =\frac{1}{\left(t-1\right)!\prod_{s=1}^{t}a_{t,s}}\sum_{\mathbb{S}\subseteq\left[t\right]}\left(-1\right)^{\left|\mathbb{S}\right|}\left[w-a_{\mathbb{S}}\right]_{+}^{t-1}\text{.}
\end{align*}
Therefore, 
\begin{align*}
f_{t,u_{0}}\left(u\right) & =g_{t}\left(u-\left(1-\lambda\right)^{t}u_{0}\right)\\
 & =\frac{1}{\left(t-1\right)!\prod_{s=1}^{t}a_{t,s}}\sum_{\mathbb{S}\subseteq\left[t\right]}\left(-1\right)^{\left|\mathbb{S}\right|}\left[u-\left(1-\lambda\right)^{t}u_{0}-a_{\mathbb{S}}\right]_{+}^{t-1}\text{,}
\end{align*}
as claimed. Finally, for any $a\in\mathbb{R}$ and $m\ge1$, $(\mathrm{d}/\mathrm{d}u)\left[u-a\right]_{+}^{m}=m\left[u-a\right]_{+}^{m-1}$,
so the fundamental theorem of calculus yields 
\[
\int_{-\infty}^{u}\left[x-a\right]_{+}^{t-1}\,\mathrm{d}x=\frac{\left[u-a\right]_{+}^{t}}{t}\text{.}
\]
Applying this identity termwise to $f_{t,u_{0}}$ gives the CDF 
\[
\mathrm{F}_{t,u_{0}}\left(u\right)=\frac{1}{t!\prod_{s=1}^{t}a_{t,s}}\sum_{\mathbb{S}\subseteq\left[t\right]}\left(-1\right)^{\left|\mathbb{S}\right|}\left[u-\left(1-\lambda\right)^{t}u_{0}-a_{\mathbb{S}}\right]_{+}^{t}\text{,}
\]
which completes the proof.

\subsubsection*{Proof of Proposition \ref{prop:Ut_left_super_unif}}

Recall that, with $a_{t,s}=\lambda(1-\lambda)^{t-s}$, 
\[
\tilde{U}_{\lambda,t}=(1-\lambda)^{t}u_{0}+\sum_{s=1}^{t}a_{t,s}U_{s}\text{.}
\]
Define the centred sum $\tilde{U}_{\lambda,t}^{\mathrm{cent}}=\sum_{s=1}^{t}\{a_{t,s}U_{s}-a_{t,s}/2\}$
and $c_{t}(u_{0})=1/2+(1-\lambda)^{t}(u_{0}-1/2)$, so that 
\[
\tilde{U}_{\lambda,t}=c_{t}(u_{0})+\tilde{U}_{\lambda,t}^{\mathrm{cent}}\text{.}
\]
Since each $a_{t,s}U_{s}-a_{t,s}/2\sim\mathrm{Unif}\left(-a_{t,s}/2,a_{t,s}/2\right)$,
\[
\mathrm{supp}\bigl(\tilde{U}_{\lambda,t}^{\mathrm{cent}}\bigr)=\bigl[-\{1-(1-\lambda)^{t}\}/2,\;\{1-(1-\lambda)^{t}\}/2\bigr]\text{,}
\]
\[
\mathrm{supp}\bigl(\tilde{U}_{\lambda,t}\bigr)=\bigl[(1-\lambda)^{t}u_{0},1-(1-\lambda)^{t}(1-u_{0})\bigr]\text{.}
\]

For each $t\in\mathbb{N}$, let $f_{t}^{\mathrm{cent}}$ denote the
PDF of $\tilde{U}_{\lambda,t}^{\mathrm{cent}}$. To establish that
$f_{t}^{\mathrm{cent}}$ is even and nonincreasing on $[0,\infty)$,
fix $t$ and define the partial centred sums, with the fixed weights
$a_{t,1},\dots,a_{t,t}$,
\[
\tilde{U}_{\lambda,t,m}^{\mathrm{cent}}=\sum_{s=1}^{m}\{a_{t,s}U_{s}-a_{t,s}/2\}\text{,}\qquad m=1,\dots,t\text{,}
\]
so that $\tilde{U}_{\lambda,t,t}^{\mathrm{cent}}=\tilde{U}_{\lambda,t}^{\mathrm{cent}}$.
Let $f_{t,m}^{\mathrm{cent}}$ denote the PDF of $\tilde{U}_{\lambda,t,m}^{\mathrm{cent}}$.
For $m=1$, $f_{t,1}^{\mathrm{cent}}=1/a_{t,1}$ on $\mathrm{supp}(\tilde{U}_{\lambda,t,1}^{\mathrm{cent}})=[-a_{t,1}/2,a_{t,1}/2]$.
More generally, $\tilde{U}_{\lambda,t,m}^{\mathrm{cent}}=\tilde{U}_{\lambda,t,m-1}^{\mathrm{cent}}+V_{t,m}$,
where $V_{t,m}\sim\mathrm{Unif}\left(-a_{t,m}/2,a_{t,m}/2\right)$
is independent of $\tilde{U}_{\lambda,t,m-1}^{\mathrm{cent}}$. The
convolution identity gives 
\[
f_{t,m}^{\mathrm{cent}}(u)=\frac{1}{a_{t,m}}\int_{-a_{t,m}/2}^{a_{t,m}/2}f_{t,m-1}^{\mathrm{cent}}(u-v)\,\mathrm{d}v\text{.}
\]
Differentiating under the integral (Leibniz' rule) yields 
\[
\frac{\mathrm{d}}{\mathrm{d}u}f_{t,m}^{\mathrm{cent}}(u)=\frac{1}{a_{t,m}}\Bigl(f_{t,m-1}^{\mathrm{cent}}(u+a_{t,m}/2)-f_{t,m-1}^{\mathrm{cent}}(u-a_{t,m}/2)\Bigr)\text{.}
\]
If $f_{t,m-1}^{\mathrm{cent}}$ is nonincreasing on $[0,\infty)$
and even, then $f_{t,m}^{\mathrm{cent}}$ is nonincreasing on $[0,\infty)$:
i.e., when $u\ge a_{t,m}/2$, $u\pm a_{t,m}/2\ge0$ and nonincreasingness
gives $f_{t,m-1}^{\mathrm{cent}}(u+a_{t,m}/2)\le f_{t,m-1}^{\mathrm{cent}}(u-a_{t,m}/2)$;
when $u\in[0,a_{t,m}/2]$, evenness gives $f_{t,m-1}^{\mathrm{cent}}(u-a_{t,m}/2)=f_{t,m-1}^{\mathrm{cent}}(a_{t,m}/2-u)$
with $a_{t,m}/2-u\ge0$; and then $f_{t,m-1}^{\mathrm{cent}}(u+a_{t,m}/2)\le f_{t,m-1}^{\mathrm{cent}}(a_{t,m}/2-u)=f_{t,m-1}^{\mathrm{cent}}(u-a_{t,m}/2)$.
Moreover, evenness is preserved under convolution with the symmetric
uniform kernel, so $f_{t,m}^{\mathrm{cent}}$ is even whenever $f_{t,m-1}^{\mathrm{cent}}$
is even. Since $f_{t,1}^{\mathrm{cent}}$ is even and nonincreasing
on $[0,\infty)$, induction implies that $f_{t,t}^{\mathrm{cent}}$
is even and nonincreasing on $[0,\infty)$. Since $f_{t}^{\mathrm{cent}}$
is the PDF of $\tilde{U}_{\lambda,t}^{\mathrm{cent}}=\tilde{U}_{\lambda,t,t}^{\mathrm{cent}}$,
we conclude that $f_{t}^{\mathrm{cent}}$ is even and nonincreasing
on $[0,\infty)$.

If $g\ge0$ is nonincreasing on $[0,\bar{d}]$, then for each $d\in[0,\bar{d}]$,
\[
\frac{1}{d}\int_{0}^{d}g(u)\,\mathrm{d}u\ge\frac{1}{\bar{d}}\int_{0}^{\bar{d}}g(u)\,\mathrm{d}u\text{.}
\]
Applying this to $g=f_{t}^{\mathrm{cent}}$ with $\bar{d}=\{1-(1-\lambda)^{t}\}/2$
and using evenness, for each $d\le\{1-(1-\lambda)^{t}\}/2$, 
\[
\int_{0}^{d}f_{t}^{\mathrm{cent}}(u)\,\mathrm{d}u\ge\frac{d}{\bar{d}}\int_{0}^{\bar{d}}f_{t}^{\mathrm{cent}}(u)\,\mathrm{d}u=\frac{2d}{1-(1-\lambda)^{t}}\int_{0}^{\{1-(1-\lambda)^{t}\}/2}f_{t}^{\mathrm{cent}}(u)\,\mathrm{d}u=\frac{d}{1-(1-\lambda)^{t}}\text{,}
\]
so 
\[
\mathrm{P}_{0}\bigl(\tilde{U}_{\lambda,t}^{\mathrm{cent}}\le-d\bigr)=\frac{1}{2}-\int_{0}^{d}f_{t}^{\mathrm{cent}}(u)\,\mathrm{d}u\le\frac{1}{2}-\frac{d}{1-(1-\lambda)^{t}}\text{.}
\]

Fix $\alpha\in[0,1/2]$ and $u_{0}\ge1/2$, and set 
\[
d=c_{t}(u_{0})-\alpha=\bigl(1/2-\alpha\bigr)+(1-\lambda)^{t}\bigl(u_{0}-1/2\bigr)\ge0\text{.}
\]
Then $\mathrm{P}_{0}(\tilde{U}_{\lambda,t}\le\alpha)=\mathrm{P}_{0}(\tilde{U}_{\lambda,t}^{\mathrm{cent}}\le-d)$.
If $d>\{1-(1-\lambda)^{t}\}/2$, the probability is $0\le\alpha$
and we are done. Otherwise apply the bound above and use $1-(1-\lambda)^{t}\le1$:
\[
\mathrm{P}_{0}\bigl(\tilde{U}_{\lambda,t}\le\alpha\bigr)\le\frac{1}{2}-\frac{d}{\,1-(1-\lambda)^{t}\,}\le\frac{1}{2}-d\le\alpha\text{,}
\]
as required.

\subsubsection*{Proof of Proposition \ref{prop:Directional_correct_FWER}}

We work with fixed time $t$. Recall that $\mathrm{P}_{0}$ be a probability
measure under which the IC hypothesis $\mathrm{H}_{0}:\theta=\theta_{0}$
holds at time $t$. Then, for every $j\in[d]$ and $\square\in\{\le,\ge\}$,
the directional null $\mathrm{H}_{0}^{(j,\square)}$ is true, since
$\theta_{j}=\theta_{0,j}$ implies $\theta_{0,j}\ \square\ \theta_{j}$.
Consequently, 
\[
\bigl\{\exists s\in\mathbb{D}_{t}:\ \theta_{0,j}\ \square\ \theta_{j}\bigr\}=\bigl\{\mathbb{D}_{t}\neq\varnothing\bigr\}\text{.}
\]
By (\ref{eq:Union_bound}), for each $j\in[d]$ the two-sided $p$-value
$P_{t}^{j}$ is super-uniform under the equality hypothesis $\mathrm{H}_{0}^{j}:\theta_{j}=\theta_{0,j}$.
Applying Holm's step-down procedure at level $\alpha$ to $\{P_{t}^{j}\}_{j\in[d]}$
produces the rejection set $\mathbb{J}_{t}$. By Theorems \ref{thm:closure}
and \ref{thm:Holm's}, Holm's procedure has strong FWER control at
level $\alpha$ for the family $\{\mathrm{H}_{0}^{j}\}_{j\in[d]}$.
Since $\mathrm{H}_{0}^{j}$ is true for every $j\in[d]$ under $\mathrm{P}_{0}$,
we have 
\[
\mathrm{P}_{0}\bigl(\mathbb{J}_{t}\neq\varnothing\bigr)\le\alpha\text{.}
\]
By the reporting rule in Section~\ref{subsec:Directional chart},
the set $\mathbb{D}_{t}$ is constructed from $\mathbb{J}_{t}$ by
assigning, for each $j\in\mathbb{J}_{t}$, exactly one direction $\square\in\{\le,\ge\}$.
In particular, $\mathbb{D}_{t}\neq\varnothing$ implies $\mathbb{J}_{t}\neq\varnothing$,
therefore, 
\[
\mathrm{P}_{0}\bigl(\exists s\in\mathbb{D}_{t}:\ \theta_{0,j}\ \square\ \theta_{j}\bigr)\le\mathrm{P}_{0}\bigl(\mathbb{J}_{t}\neq\varnothing\bigr)\le\alpha\text{,}
\]
which is (\ref{eq:FWER_control-1}). 

\subsection*{Additional numerical results}

We provide further numerical results from Section \ref{sec:Example-applications}.
Tables \ref{tab:KS_IC_EWMA_sim_part2}, \ref{tab:KS_OC_EWMA_sim_part2},
\ref{tab:KS_OC_EWMA_sim_part3}, \ref{tab:KS_OC_EWMA_sim_dynamic_part2}
and \ref{tab:KS_OC_EWMA_sim_dynamic_part3} provide additional results
for Section \ref{subsec:Kolmogorov=00003D002013Smirnov-based-EWMA-ch}.
Figure \ref{fig:Plots-of-CDFs} provides additional visualisation
for Section \ref{subsec:Uniform_EWMA_sim}. Tables \ref{tab:Cauchy_20}
and \ref{tab:Cauchy_100} provide additional results for Section \ref{subsec:Directional-and-coordinate-localisation-charts}. 

\begin{table}
\caption{Average $R_{k}$ while IC versus $k$-ARL bounds from Proposition
\ref{prop:kARL_superuniform} for KS EWMA-like $p$-value charts $\left(\tilde{Q}_{\lambda,t}^{\left(r\right)}\right)_{t}$
and $\left(\bar{Q}_{\lambda,t}^{\left(r\right)}\right)_{t}$, using
100 simulation repetitions. This table contains results for $n_{0}\in\left\{ 100,200\right\} $,
while Table \ref{tab:KS_IC_EWMA_sim} contains results for $n_{0}=50$.}
\label{tab:KS_IC_EWMA_sim_part2}

\centering{}%
\begin{tabular}{cccccccccc}
\hline 
Chart  & $n_{0}$  & $\alpha$  & $k$  & $\lambda$  & $r$  & mean $R_{k}$  & st. err. $R_{k}$  & lower bound  & ratio\tabularnewline
\hline 
$\tilde{Q}_{\lambda,t}^{\left(r\right)}$  & 100  & 0.05  & 1  & 0.5  & -0.9  & 36300.60  & 8129.78  & 10.5  & 3457.2000\tabularnewline
 & 100  & 0.05  & 1  & 0.5  & -0.8  & 17899.07  & 4642.61  & 10.5  & 1704.6733\tabularnewline
 & 100  & 0.05  & 5  & 0.5  & -0.9  & 170624.82  & 37579.01  & 50.5  & 3378.7093\tabularnewline
 & 100  & 0.05  & 5  & 0.5  & -0.8  & 83157.22  & 16025.42  & 50.5  & 1646.6776\tabularnewline
 & 100  & 0.1  & 1  & 0.5  & -0.9  & 15112.70  & 3629.50  & 5.5  & 2747.7636\tabularnewline
 & 100  & 0.1  & 1  & 0.5  & -0.8  & 4915.30  & 1010.26  & 5.5  & 893.6909\tabularnewline
 & 100  & 0.1  & 5  & 0.5  & -0.9  & 38756.87  & 6254.42  & 25.5  & 1520.8576\tabularnewline
 & 100  & 0.1  & 5  & 0.5  & -0.8  & 19485.31  & 3483.18  & 25.5  & 764.1298\tabularnewline
 & 200  & 0.05  & 1  & 0.5  & -0.9  & 36438.61  & 12501.96  & 10.5  & 3470.3438\tabularnewline
 & 200  & 0.05  & 1  & 0.5  & -0.8  & 13673.33  & 3899.34  & 10.5  & 1302.2229\tabularnewline
 & 200  & 0.05  & 5  & 0.5  & -0.9  & 163773.92  & 40044.39  & 50.5  & 3243.0489\tabularnewline
 & 200  & 0.05  & 5  & 0.5  & -0.8  & 61254.49  & 9087.68  & 50.5  & 1212.9602\tabularnewline
 & 200  & 0.1  & 1  & 0.5  & -0.9  & 12153.63  & 3262.65  & 5.5  & 2209.751\tabularnewline
 & 200  & 0.1  & 1  & 0.5  & -0.8  & 3806.43  & 658.28  & 5.5  & 692.0782\tabularnewline
 & 200  & 0.1  & 5  & 0.5  & -0.9  & 34633.98  & 5038.05  & 25.5  & 1358.1953\tabularnewline
 & 200  & 0.1  & 5  & 0.5  & -0.8  & 11625.17  & 2128.99  & 25.5  & 455.8882\tabularnewline
\hline 
$\bar{Q}_{\lambda,t}^{\left(r\right)}$  & 100  & 0.05  & 1  & 0.9  & 1  & 2358.38  & 632.10  & 10.5  & 224.6076\tabularnewline
 & 100  & 0.05  & 1  & 0.95  & 1  & 584.18  & 110.88  & 10.5  & 55.6362\tabularnewline
 & 100  & 0.05  & 5  & 0.9  & 1  & 7574.50  & 1252.08  & 50.5  & 149.9901\tabularnewline
 & 100  & 0.05  & 5  & 0.95  & 1  & 2943.04  & 487.05  & 50.5  & 58.2770\tabularnewline
 & 100  & 0.1  & 1  & 0.9  & 1  & 234.15  & 47.66  & 5.5  & 42.5727\tabularnewline
 & 100  & 0.1  & 1  & 0.95  & 1  & 41.94  & 6.41  & 5.5  & 7.6255\tabularnewline
 & 100  & 0.1  & 5  & 0.9  & 1  & 765.87  & 125.16  & 25.5  & 30.0330\tabularnewline
 & 100  & 0.1  & 5  & 0.95  & 1  & 289.27  & 33.17  & 25.5  & 11.344\tabularnewline
 & 200  & 0.05  & 1  & 0.9  & 1  & 2229.21  & 607.40  & 10.5  & 212.3057\tabularnewline
 & 200  & 0.05  & 1  & 0.95  & 1  & 311.43  & 45.42  & 10.5  & 29.6590\tabularnewline
 & 200  & 0.05  & 5  & 0.9  & 1  & 9698.15  & 2254.83  & 50.5  & 192.0426\tabularnewline
 & 200  & 0.05  & 5  & 0.95  & 1  & 1703.18  & 230.54  & 50.5  & 33.7263\tabularnewline
 & 200  & 0.1  & 1  & 0.9  & 1  & 105.28  & 20.25  & 5.5  & 19.1418\tabularnewline
 & 200  & 0.1  & 1  & 0.95  & 1  & 49.17  & 8.48  & 5.5  & 8.9400\tabularnewline
 & 200  & 0.1  & 5  & 0.9  & 1  & 785.69  & 108.18  & 25.5  & 30.8114\tabularnewline
 & 200  & 0.1  & 5  & 0.95  & 1  & 218.23  & 24.79  & 25.5  & 8.5580\tabularnewline
\hline 
\end{tabular}
\end{table}

\begin{table}
\caption{Average $R_{k}$ under persistent OOC processes for KS $p$-value
charts $\left(P_{t}\right)_{t}$, $\left(\tilde{Q}_{\lambda,t}^{\left(r\right)}\right)_{t}$
and $\left(\bar{Q}_{\lambda,t}^{\left(r\right)}\right)_{t}$, using
100 simulation repetitions. This table contains results for $n_{0}=100$,
while Appendix Tables \ref{tab:KS_OC_EWMA_sim} and \ref{tab:KS_OC_EWMA_sim_part3}
contains results for $n_{0}=50$ and $n_{0}=200$, respectively.}\label{tab:KS_OC_EWMA_sim_part2}

\centering{}{\footnotesize{}%
\begin{tabular}{cccccccccc}
\hline 
{\footnotesize OOC process} & {\footnotesize$n_{0}$} & {\footnotesize$\alpha$} & {\footnotesize Chart} & {\footnotesize$r$} & {\footnotesize$\lambda$} & {\footnotesize mean $R_{1}$} & {\footnotesize st. err. $R_{1}$} & {\footnotesize mean $R_{5}$} & {\footnotesize st. err. $R_{5}$}\tabularnewline
\hline 
{\footnotesize$\text{N}\left(1/2,1\right)$} & {\footnotesize 100} & {\footnotesize 0.01} & {\footnotesize$P_{t}$} &  &  & {\footnotesize 2.27} & {\footnotesize 0.57} & {\footnotesize 10.28} & {\footnotesize 1.23}\tabularnewline
 & {\footnotesize 100} & {\footnotesize 0.01} & {\footnotesize$\tilde{Q}_{\lambda,t}^{\left(r\right)}$} & {\footnotesize -0.9} & {\footnotesize 0.5} & {\footnotesize 23.65} & {\footnotesize 6.93} & {\footnotesize 52.64} & {\footnotesize 12.82}\tabularnewline
 & {\footnotesize 100} & {\footnotesize 0.01} & {\footnotesize$\tilde{Q}_{\lambda,t}^{\left(r\right)}$} & {\footnotesize -0.8} & {\footnotesize 0.5} & {\footnotesize 13.49} & {\footnotesize 6.07} & {\footnotesize 56.24} & {\footnotesize 28.13}\tabularnewline
 & {\footnotesize 100} & {\footnotesize 0.01} & {\footnotesize$\bar{Q}_{\lambda,t}^{\left(r\right)}$} & {\footnotesize 1} & {\footnotesize 0.9} & {\footnotesize 9.46} & {\footnotesize 4.16} & {\footnotesize 49.26} & {\footnotesize 27.41}\tabularnewline
 & {\footnotesize 100} & {\footnotesize 0.01} & {\footnotesize$\bar{Q}_{\lambda,t}^{\left(r\right)}$} & {\footnotesize 1} & {\footnotesize 0.95} & {\footnotesize 3.63} & {\footnotesize 0.74} & {\footnotesize 25.29} & {\footnotesize 7.53}\tabularnewline
 & {\footnotesize 100} & {\footnotesize 0.05} & {\footnotesize$P_{t}$} &  &  & {\footnotesize 1.28} & {\footnotesize 0.08} & {\footnotesize 6.43} & {\footnotesize 0.30}\tabularnewline
 & {\footnotesize 100} & {\footnotesize 0.05} & {\footnotesize$\tilde{Q}_{\lambda,t}^{\left(r\right)}$} & {\footnotesize -0.9} & {\footnotesize 0.5} & {\footnotesize 6.71} & {\footnotesize 2.11} & {\footnotesize 28.66} & {\footnotesize 8.20}\tabularnewline
 & {\footnotesize 100} & {\footnotesize 0.05} & {\footnotesize$\tilde{Q}_{\lambda,t}^{\left(r\right)}$} & {\footnotesize -0.8} & {\footnotesize 0.5} & {\footnotesize 3.41} & {\footnotesize 0.84} & {\footnotesize 10.77} & {\footnotesize 1.98}\tabularnewline
 & {\footnotesize 100} & {\footnotesize 0.05} & {\footnotesize$\bar{Q}_{\lambda,t}^{\left(r\right)}$} & {\footnotesize 1} & {\footnotesize 0.9} & {\footnotesize 1.56} & {\footnotesize 0.13} & {\footnotesize 11.81} & {\footnotesize 3.24}\tabularnewline
 & {\footnotesize 100} & {\footnotesize 0.05} & {\footnotesize$\bar{Q}_{\lambda,t}^{\left(r\right)}$} & {\footnotesize 1} & {\footnotesize 0.95} & {\footnotesize 1.28} & {\footnotesize 0.08} & {\footnotesize 6.89} & {\footnotesize 0.50}\tabularnewline
{\footnotesize$\text{N}(1,1)$} & {\footnotesize 100} & {\footnotesize 0.01} & {\footnotesize$P_{t}$} &  &  & {\footnotesize 1.00} & {\footnotesize 0.00} & {\footnotesize 5.00} & {\footnotesize 0.00}\tabularnewline
 & {\footnotesize 100} & {\footnotesize 0.01} & {\footnotesize$\tilde{Q}_{\lambda,t}^{\left(r\right)}$} & {\footnotesize -0.9} & {\footnotesize 0.5} & {\footnotesize 1.00} & {\footnotesize 0.00} & {\footnotesize 5.00} & {\footnotesize 0.00}\tabularnewline
 & {\footnotesize 100} & {\footnotesize 0.01} & {\footnotesize$\tilde{Q}_{\lambda,t}^{\left(r\right)}$} & {\footnotesize -0.8} & {\footnotesize 0.5} & {\footnotesize 1.01} & {\footnotesize 0.01} & {\footnotesize 5.01} & {\footnotesize 0.01}\tabularnewline
 & {\footnotesize 100} & {\footnotesize 0.01} & {\footnotesize$\bar{Q}_{\lambda,t}^{\left(r\right)}$} & {\footnotesize 1} & {\footnotesize 0.9} & {\footnotesize 1.00} & {\footnotesize 0.00} & {\footnotesize 5.00} & {\footnotesize 0.00}\tabularnewline
 & {\footnotesize 100} & {\footnotesize 0.01} & {\footnotesize$\bar{Q}_{\lambda,t}^{\left(r\right)}$} & {\footnotesize 1} & {\footnotesize 0.95} & {\footnotesize 1.00} & {\footnotesize 0.00} & {\footnotesize 5.00} & {\footnotesize 0.00}\tabularnewline
 & {\footnotesize 100} & {\footnotesize 0.05} & {\footnotesize$P_{t}$} &  &  & {\footnotesize 1.00} & {\footnotesize 0.00} & {\footnotesize 5.00} & {\footnotesize 0.00}\tabularnewline
 & {\footnotesize 100} & {\footnotesize 0.05} & {\footnotesize$\tilde{Q}_{\lambda,t}^{\left(r\right)}$} & {\footnotesize -0.9} & {\footnotesize 0.5} & {\footnotesize 1.00} & {\footnotesize 0.00} & {\footnotesize 5.00} & {\footnotesize 0.00}\tabularnewline
 & {\footnotesize 100} & {\footnotesize 0.05} & {\footnotesize$\tilde{Q}_{\lambda,t}^{\left(r\right)}$} & {\footnotesize -0.8} & {\footnotesize 0.5} & {\footnotesize 1.00} & {\footnotesize 0.00} & {\footnotesize 5.00} & {\footnotesize 0.00}\tabularnewline
 & {\footnotesize 100} & {\footnotesize 0.05} & {\footnotesize$\bar{Q}_{\lambda,t}^{\left(r\right)}$} & {\footnotesize 1} & {\footnotesize 0.9} & {\footnotesize 1.00} & {\footnotesize 0.00} & {\footnotesize 5.00} & {\footnotesize 0.00}\tabularnewline
 & {\footnotesize 100} & {\footnotesize 0.05} & {\footnotesize$\bar{Q}_{\lambda,t}^{\left(r\right)}$} & {\footnotesize 1} & {\footnotesize 0.95} & {\footnotesize 1.00} & {\footnotesize 0.00} & {\footnotesize 5.00} & {\footnotesize 0.00}\tabularnewline
{\footnotesize$\text{N}(0,2)$} & {\footnotesize 100} & {\footnotesize 0.01} & {\footnotesize$P_{t}$} &  &  & {\footnotesize 73.99} & {\footnotesize 15.95} & {\footnotesize 328.15} & {\footnotesize 52.35}\tabularnewline
 & {\footnotesize 100} & {\footnotesize 0.01} & {\footnotesize$\tilde{Q}_{\lambda,t}^{\left(r\right)}$} & {\footnotesize -0.9} & {\footnotesize 0.5} & {\footnotesize 6354.41} & {\footnotesize 1947.68} & {\footnotesize 21834.59} & {\footnotesize 6659.35}\tabularnewline
 & {\footnotesize 100} & {\footnotesize 0.01} & {\footnotesize$\tilde{Q}_{\lambda,t}^{\left(r\right)}$} & {\footnotesize -0.8} & {\footnotesize 0.5} & {\footnotesize 2904.52} & {\footnotesize 815.86} & {\footnotesize 18591.21} & {\footnotesize 4835.06}\tabularnewline
 & {\footnotesize 100} & {\footnotesize 0.01} & {\footnotesize$\bar{Q}_{\lambda,t}^{\left(r\right)}$} & {\footnotesize 1} & {\footnotesize 0.9} & {\footnotesize 12367.56} & {\footnotesize 9731.56} & {\footnotesize 49943.42} & {\footnotesize 35538.48}\tabularnewline
 & {\footnotesize 100} & {\footnotesize 0.01} & {\footnotesize$\bar{Q}_{\lambda,t}^{\left(r\right)}$} & {\footnotesize 1} & {\footnotesize 0.95} & {\footnotesize 607.67} & {\footnotesize 234.54} & {\footnotesize 2346.59} & {\footnotesize 477.43}\tabularnewline
 & {\footnotesize 100} & {\footnotesize 0.05} & {\footnotesize$P_{t}$} &  &  & {\footnotesize 9.25} & {\footnotesize 1.43} & {\footnotesize 43.07} & {\footnotesize 4.30}\tabularnewline
 & {\footnotesize 100} & {\footnotesize 0.05} & {\footnotesize$\tilde{Q}_{\lambda,t}^{\left(r\right)}$} & {\footnotesize -0.9} & {\footnotesize 0.5} & {\footnotesize 797.49} & {\footnotesize 258.57} & {\footnotesize 3114.35} & {\footnotesize 731.10}\tabularnewline
 & {\footnotesize 100} & {\footnotesize 0.05} & {\footnotesize$\tilde{Q}_{\lambda,t}^{\left(r\right)}$} & {\footnotesize -0.8} & {\footnotesize 0.5} & {\footnotesize 330.37} & {\footnotesize 74.80} & {\footnotesize 1389.18} & {\footnotesize 274.72}\tabularnewline
 & {\footnotesize 100} & {\footnotesize 0.05} & {\footnotesize$\bar{Q}_{\lambda,t}^{\left(r\right)}$} & {\footnotesize 1} & {\footnotesize 0.9} & {\footnotesize 41.69} & {\footnotesize 16.66} & {\footnotesize 152.71} & {\footnotesize 38.84}\tabularnewline
 & {\footnotesize 100} & {\footnotesize 0.05} & {\footnotesize$\bar{Q}_{\lambda,t}^{\left(r\right)}$} & {\footnotesize 1} & {\footnotesize 0.95} & {\footnotesize 11.66} & {\footnotesize 1.93} & {\footnotesize 59.13} & {\footnotesize 7.83}\tabularnewline
{\footnotesize Cauchy} & {\footnotesize 100} & {\footnotesize 0.01} & {\footnotesize$P_{t}$} &  &  & {\footnotesize 15.72} & {\footnotesize 1.88} & {\footnotesize 82.13} & {\footnotesize 7.66}\tabularnewline
 & {\footnotesize 100} & {\footnotesize 0.01} & {\footnotesize$\tilde{Q}_{\lambda,t}^{\left(r\right)}$} & {\footnotesize -0.9} & {\footnotesize 0.5} & {\footnotesize 1272.56} & {\footnotesize 305.43} & {\footnotesize 3792.39} & {\footnotesize 672.21}\tabularnewline
 & {\footnotesize 100} & {\footnotesize 0.01} & {\footnotesize$\tilde{Q}_{\lambda,t}^{\left(r\right)}$} & {\footnotesize -0.8} & {\footnotesize 0.5} & {\footnotesize 585.77} & {\footnotesize 115.57} & {\footnotesize 2275.66} & {\footnotesize 384.01}\tabularnewline
 & {\footnotesize 100} & {\footnotesize 0.01} & {\footnotesize$\bar{Q}_{\lambda,t}^{\left(r\right)}$} & {\footnotesize 1} & {\footnotesize 0.9} & {\footnotesize 147.44} & {\footnotesize 37.90} & {\footnotesize 698.28} & {\footnotesize 205.49}\tabularnewline
 & {\footnotesize 100} & {\footnotesize 0.01} & {\footnotesize$\bar{Q}_{\lambda,t}^{\left(r\right)}$} & {\footnotesize 1} & {\footnotesize 0.95} & {\footnotesize 53.31} & {\footnotesize 9.13} & {\footnotesize 238.25} & {\footnotesize 30.51}\tabularnewline
 & {\footnotesize 100} & {\footnotesize 0.05} & {\footnotesize$P_{t}$} &  &  & {\footnotesize 3.52} & {\footnotesize 0.36} & {\footnotesize 16.04} & {\footnotesize 0.93}\tabularnewline
 & {\footnotesize 100} & {\footnotesize 0.05} & {\footnotesize$\tilde{Q}_{\lambda,t}^{\left(r\right)}$} & {\footnotesize -0.9} & {\footnotesize 0.5} & {\footnotesize 119.95} & {\footnotesize 16.23} & {\footnotesize 532.01} & {\footnotesize 92.31}\tabularnewline
 & {\footnotesize 100} & {\footnotesize 0.05} & {\footnotesize$\tilde{Q}_{\lambda,t}^{\left(r\right)}$} & {\footnotesize -0.8} & {\footnotesize 0.5} & {\footnotesize 96.66} & {\footnotesize 26.42} & {\footnotesize 294.28} & {\footnotesize 51.08}\tabularnewline
 & {\footnotesize 100} & {\footnotesize 0.05} & {\footnotesize$\bar{Q}_{\lambda,t}^{\left(r\right)}$} & {\footnotesize 1} & {\footnotesize 0.9} & {\footnotesize 4.05} & {\footnotesize 0.49} & {\footnotesize 20.80} & {\footnotesize 1.76}\tabularnewline
 & {\footnotesize 100} & {\footnotesize 0.05} & {\footnotesize$\bar{Q}_{\lambda,t}^{\left(r\right)}$} & {\footnotesize 1} & {\footnotesize 0.95} & {\footnotesize 3.44} & {\footnotesize 0.37} & {\footnotesize 17.55} & {\footnotesize 1.43}\tabularnewline
\hline 
\end{tabular}}{\footnotesize\par}
\end{table}

\begin{table}
\caption{Average $R_{k}$ under persistent OOC processes for KS $p$-value
charts $\left(P_{t}\right)_{t}$, $\left(\tilde{Q}_{\lambda,t}^{\left(r\right)}\right)_{t}$
and $\left(\bar{Q}_{\lambda,t}^{\left(r\right)}\right)_{t}$, using
100 simulation repetitions. This table contains results for $n_{0}=200$,
while Appendix Tables \ref{tab:KS_OC_EWMA_sim} and \ref{tab:KS_OC_EWMA_sim_part2}
contains results for $n_{0}=50$ and $n_{0}=100$, respectively.}\label{tab:KS_OC_EWMA_sim_part3}

\centering{}{\footnotesize{}%
\begin{tabular}{cccccccccc}
\hline 
{\footnotesize OOC process} & {\footnotesize$n_{0}$} & {\footnotesize$\alpha$} & {\footnotesize Chart} & {\footnotesize$r$} & {\footnotesize$\lambda$} & {\footnotesize mean $R_{1}$} & {\footnotesize st. err. $R_{1}$} & {\footnotesize mean $R_{5}$} & {\footnotesize st. err. $R_{5}$}\tabularnewline
\hline 
{\footnotesize$\text{N}\left(1/2,1\right)$} & {\footnotesize 200} & {\footnotesize 0.01} & {\footnotesize$P_{t}$} &  &  & {\footnotesize 1.04} & {\footnotesize 0.02} & {\footnotesize 5.22} & {\footnotesize 0.07}\tabularnewline
 & {\footnotesize 200} & {\footnotesize 0.01} & {\footnotesize$\tilde{Q}_{\lambda,t}^{\left(r\right)}$} & {\footnotesize -0.9} & {\footnotesize 0.5} & {\footnotesize 1.68} & {\footnotesize 0.34} & {\footnotesize 7.36} & {\footnotesize 1.61}\tabularnewline
 & {\footnotesize 200} & {\footnotesize 0.01} & {\footnotesize$\tilde{Q}_{\lambda,t}^{\left(r\right)}$} & {\footnotesize -0.8} & {\footnotesize 0.5} & {\footnotesize 1.28} & {\footnotesize 0.09} & {\footnotesize 5.44} & {\footnotesize 0.13}\tabularnewline
 & {\footnotesize 200} & {\footnotesize 0.01} & {\footnotesize$\bar{Q}_{\lambda,t}^{\left(r\right)}$} & {\footnotesize 1} & {\footnotesize 0.9} & {\footnotesize 1.07} & {\footnotesize 0.04} & {\footnotesize 5.21} & {\footnotesize 0.06}\tabularnewline
 & {\footnotesize 200} & {\footnotesize 0.01} & {\footnotesize$\bar{Q}_{\lambda,t}^{\left(r\right)}$} & {\footnotesize 1} & {\footnotesize 0.95} & {\footnotesize 1.13} & {\footnotesize 0.06} & {\footnotesize 5.36} & {\footnotesize 0.11}\tabularnewline
 & {\footnotesize 200} & {\footnotesize 0.05} & {\footnotesize$P_{t}$} &  &  & {\footnotesize 1.00} & {\footnotesize 0.00} & {\footnotesize 5.01} & {\footnotesize 0.01}\tabularnewline
 & {\footnotesize 200} & {\footnotesize 0.05} & {\footnotesize$\tilde{Q}_{\lambda,t}^{\left(r\right)}$} & {\footnotesize -0.9} & {\footnotesize 0.5} & {\footnotesize 1.22} & {\footnotesize 0.07} & {\footnotesize 5.25} & {\footnotesize 0.08}\tabularnewline
 & {\footnotesize 200} & {\footnotesize 0.05} & {\footnotesize$\tilde{Q}_{\lambda,t}^{\left(r\right)}$} & {\footnotesize -0.8} & {\footnotesize 0.5} & {\footnotesize 1.03} & {\footnotesize 0.02} & {\footnotesize 5.06} & {\footnotesize 0.03}\tabularnewline
 & {\footnotesize 200} & {\footnotesize 0.05} & {\footnotesize$\bar{Q}_{\lambda,t}^{\left(r\right)}$} & {\footnotesize 1} & {\footnotesize 0.9} & {\footnotesize 1.00} & {\footnotesize 0.00} & {\footnotesize 5.06} & {\footnotesize 0.03}\tabularnewline
 & {\footnotesize 200} & {\footnotesize 0.05} & {\footnotesize$\bar{Q}_{\lambda,t}^{\left(r\right)}$} & {\footnotesize 1} & {\footnotesize 0.95} & {\footnotesize 1.01} & {\footnotesize 0.01} & {\footnotesize 5.06} & {\footnotesize 0.02}\tabularnewline
{\footnotesize$\text{N}(1,1)$} & {\footnotesize 200} & {\footnotesize 0.01} & {\footnotesize$P_{t}$} &  &  & {\footnotesize 1.00} & {\footnotesize 0.00} & {\footnotesize 5.00} & {\footnotesize 0.00}\tabularnewline
 & {\footnotesize 200} & {\footnotesize 0.01} & {\footnotesize$\tilde{Q}_{\lambda,t}^{\left(r\right)}$} & {\footnotesize -0.9} & {\footnotesize 0.5} & {\footnotesize 1.00} & {\footnotesize 0.00} & {\footnotesize 5.00} & {\footnotesize 0.00}\tabularnewline
 & {\footnotesize 200} & {\footnotesize 0.01} & {\footnotesize$\tilde{Q}_{\lambda,t}^{\left(r\right)}$} & {\footnotesize -0.8} & {\footnotesize 0.5} & {\footnotesize 1.00} & {\footnotesize 0.00} & {\footnotesize 5.00} & {\footnotesize 0.00}\tabularnewline
 & {\footnotesize 200} & {\footnotesize 0.01} & {\footnotesize$\bar{Q}_{\lambda,t}^{\left(r\right)}$} & {\footnotesize 1} & {\footnotesize 0.9} & {\footnotesize 1.00} & {\footnotesize 0.00} & {\footnotesize 5.00} & {\footnotesize 0.00}\tabularnewline
 & {\footnotesize 200} & {\footnotesize 0.01} & {\footnotesize$\bar{Q}_{\lambda,t}^{\left(r\right)}$} & {\footnotesize 1} & {\footnotesize 0.95} & {\footnotesize 1.00} & {\footnotesize 0.00} & {\footnotesize 5.00} & {\footnotesize 0.00}\tabularnewline
 & {\footnotesize 200} & {\footnotesize 0.05} & {\footnotesize$P_{t}$} &  &  & {\footnotesize 1.00} & {\footnotesize 0.00} & {\footnotesize 5.00} & {\footnotesize 0.00}\tabularnewline
 & {\footnotesize 200} & {\footnotesize 0.05} & {\footnotesize$\tilde{Q}_{\lambda,t}^{\left(r\right)}$} & {\footnotesize -0.9} & {\footnotesize 0.5} & {\footnotesize 1.00} & {\footnotesize 0.00} & {\footnotesize 5.00} & {\footnotesize 0.00}\tabularnewline
 & {\footnotesize 200} & {\footnotesize 0.05} & {\footnotesize$\tilde{Q}_{\lambda,t}^{\left(r\right)}$} & {\footnotesize -0.8} & {\footnotesize 0.5} & {\footnotesize 1.00} & {\footnotesize 0.00} & {\footnotesize 5.00} & {\footnotesize 0.00}\tabularnewline
 & {\footnotesize 200} & {\footnotesize 0.05} & {\footnotesize$\bar{Q}_{\lambda,t}^{\left(r\right)}$} & {\footnotesize 1} & {\footnotesize 0.9} & {\footnotesize 1.00} & {\footnotesize 0.00} & {\footnotesize 5.00} & {\footnotesize 0.00}\tabularnewline
 & {\footnotesize 200} & {\footnotesize 0.05} & {\footnotesize$\bar{Q}_{\lambda,t}^{\left(r\right)}$} & {\footnotesize 1} & {\footnotesize 0.95} & {\footnotesize 1.00} & {\footnotesize 0.00} & {\footnotesize 5.00} & {\footnotesize 0.00}\tabularnewline
{\footnotesize$\text{N}(0,2)$} & {\footnotesize 200} & {\footnotesize 0.01} & {\footnotesize$P_{t}$} &  &  & {\footnotesize 16.72} & {\footnotesize 4.58} & {\footnotesize 82.08} & {\footnotesize 10.36}\tabularnewline
 & {\footnotesize 200} & {\footnotesize 0.01} & {\footnotesize$\tilde{Q}_{\lambda,t}^{\left(r\right)}$} & {\footnotesize -0.9} & {\footnotesize 0.5} & {\footnotesize 1045.80} & {\footnotesize 316.58} & {\footnotesize 2536.42} & {\footnotesize 623.14}\tabularnewline
 & {\footnotesize 200} & {\footnotesize 0.01} & {\footnotesize$\tilde{Q}_{\lambda,t}^{\left(r\right)}$} & {\footnotesize -0.8} & {\footnotesize 0.5} & {\footnotesize 499.64} & {\footnotesize 221.70} & {\footnotesize 2085.12} & {\footnotesize 887.32}\tabularnewline
 & {\footnotesize 200} & {\footnotesize 0.01} & {\footnotesize$\bar{Q}_{\lambda,t}^{\left(r\right)}$} & {\footnotesize 1} & {\footnotesize 0.9} & {\footnotesize 246.16} & {\footnotesize 106.39} & {\footnotesize 990.62} & {\footnotesize 347.18}\tabularnewline
 & {\footnotesize 200} & {\footnotesize 0.01} & {\footnotesize$\bar{Q}_{\lambda,t}^{\left(r\right)}$} & {\footnotesize 1} & {\footnotesize 0.95} & {\footnotesize 43.49} & {\footnotesize 8.84} & {\footnotesize 237.56} & {\footnotesize 57.17}\tabularnewline
 & {\footnotesize 200} & {\footnotesize 0.05} & {\footnotesize$P_{t}$} &  &  & {\footnotesize 2.40} & {\footnotesize 0.26} & {\footnotesize 13.12} & {\footnotesize 0.91}\tabularnewline
 & {\footnotesize 200} & {\footnotesize 0.05} & {\footnotesize$\tilde{Q}_{\lambda,t}^{\left(r\right)}$} & {\footnotesize -0.9} & {\footnotesize 0.5} & {\footnotesize 71.54} & {\footnotesize 15.40} & {\footnotesize 230.44} & {\footnotesize 36.36}\tabularnewline
 & {\footnotesize 200} & {\footnotesize 0.05} & {\footnotesize$\tilde{Q}_{\lambda,t}^{\left(r\right)}$} & {\footnotesize -0.8} & {\footnotesize 0.5} & {\footnotesize 49.91} & {\footnotesize 11.32} & {\footnotesize 197.65} & {\footnotesize 39.28}\tabularnewline
 & {\footnotesize 200} & {\footnotesize 0.05} & {\footnotesize$\bar{Q}_{\lambda,t}^{\left(r\right)}$} & {\footnotesize 1} & {\footnotesize 0.9} & {\footnotesize 5.75} & {\footnotesize 1.44} & {\footnotesize 25.13} & {\footnotesize 3.84}\tabularnewline
 & {\footnotesize 200} & {\footnotesize 0.05} & {\footnotesize$\bar{Q}_{\lambda,t}^{\left(r\right)}$} & {\footnotesize 1} & {\footnotesize 0.95} & {\footnotesize 3.49} & {\footnotesize 0.50} & {\footnotesize 18.81} & {\footnotesize 2.80}\tabularnewline
{\footnotesize Cauchy} & {\footnotesize 200} & {\footnotesize 0.01} & {\footnotesize$P_{t}$} &  &  & {\footnotesize 2.70} & {\footnotesize 0.23} & {\footnotesize 12.25} & {\footnotesize 0.59}\tabularnewline
 & {\footnotesize 200} & {\footnotesize 0.01} & {\footnotesize$\tilde{Q}_{\lambda,t}^{\left(r\right)}$} & {\footnotesize -0.9} & {\footnotesize 0.5} & {\footnotesize 38.28} & {\footnotesize 5.31} & {\footnotesize 122.16} & {\footnotesize 14.95}\tabularnewline
 & {\footnotesize 200} & {\footnotesize 0.01} & {\footnotesize$\tilde{Q}_{\lambda,t}^{\left(r\right)}$} & {\footnotesize -0.8} & {\footnotesize 0.5} & {\footnotesize 30.11} & {\footnotesize 7.74} & {\footnotesize 77.77} & {\footnotesize 14.46}\tabularnewline
 & {\footnotesize 200} & {\footnotesize 0.01} & {\footnotesize$\bar{Q}_{\lambda,t}^{\left(r\right)}$} & {\footnotesize 1} & {\footnotesize 0.9} & {\footnotesize 2.61} & {\footnotesize 0.25} & {\footnotesize 13.47} & {\footnotesize 0.75}\tabularnewline
 & {\footnotesize 200} & {\footnotesize 0.01} & {\footnotesize$\bar{Q}_{\lambda,t}^{\left(r\right)}$} & {\footnotesize 1} & {\footnotesize 0.95} & {\footnotesize 2.77} & {\footnotesize 0.36} & {\footnotesize 11.60} & {\footnotesize 0.70}\tabularnewline
 & {\footnotesize 200} & {\footnotesize 0.05} & {\footnotesize$P_{t}$} &  &  & {\footnotesize 1.13} & {\footnotesize 0.04} & {\footnotesize 5.68} & {\footnotesize 0.10}\tabularnewline
 & {\footnotesize 200} & {\footnotesize 0.05} & {\footnotesize$\tilde{Q}_{\lambda,t}^{\left(r\right)}$} & {\footnotesize -0.9} & {\footnotesize 0.5} & {\footnotesize 7.00} & {\footnotesize 0.92} & {\footnotesize 26.34} & {\footnotesize 2.92}\tabularnewline
 & {\footnotesize 200} & {\footnotesize 0.05} & {\footnotesize$\tilde{Q}_{\lambda,t}^{\left(r\right)}$} & {\footnotesize -0.8} & {\footnotesize 0.5} & {\footnotesize 4.20} & {\footnotesize 0.58} & {\footnotesize 12.57} & {\footnotesize 0.95}\tabularnewline
 & {\footnotesize 200} & {\footnotesize 0.05} & {\footnotesize$\bar{Q}_{\lambda,t}^{\left(r\right)}$} & {\footnotesize 1} & {\footnotesize 0.9} & {\footnotesize 1.13} & {\footnotesize 0.04} & {\footnotesize 5.70} & {\footnotesize 0.11}\tabularnewline
 & {\footnotesize 200} & {\footnotesize 0.05} & {\footnotesize$\bar{Q}_{\lambda,t}^{\left(r\right)}$} & {\footnotesize 1} & {\footnotesize 0.95} & {\footnotesize 1.14} & {\footnotesize 0.04} & {\footnotesize 5.52} & {\footnotesize 0.08}\tabularnewline
\hline 
\end{tabular}}{\footnotesize\par}
\end{table}

\begin{table}
\caption{Average $R_{k}$ under dynamic OOC processes for KS $p$-value charts
$\left(P_{t}\right)_{t}$, $\left(\tilde{Q}_{\lambda,t}^{\left(r\right)}\right)_{t}$
and $\left(\bar{Q}_{\lambda,t}^{\left(r\right)}\right)_{t}$, using
100 simulation repetitions. This table contains results for $n_{0}=100$,
while Appendix Tables \ref{tab:KS_OC_EWMA_sim_dynamic} and \ref{tab:KS_OC_EWMA_sim_dynamic_part3}
contains results for $n_{0}=50$ and $n_{0}=200$, respectively.}\label{tab:KS_OC_EWMA_sim_dynamic_part2}

\centering{}{\footnotesize{}%
\begin{tabular}{cccccccccc}
\hline 
{\footnotesize OOC process} & {\footnotesize$n_{0}$} & {\footnotesize$\alpha$} & {\footnotesize Chart} & {\footnotesize$r$} & {\footnotesize$\lambda$} & {\footnotesize mean $R_{1}$} & {\footnotesize st. err. $R_{1}$} & {\footnotesize mean $R_{5}$} & {\footnotesize st. err. $R_{5}$}\tabularnewline
\hline 
{\footnotesize$\text{N}\left(\mu_{t},1\right)$} & {\footnotesize 100} & {\footnotesize 0.01} & {\footnotesize$P_{t}$} &  &  & {\footnotesize 1.69} & {\footnotesize 0.10} & {\footnotesize 9.05} & {\footnotesize 0.26}\tabularnewline
{\footnotesize$\mu_{t}\sim\text{N}\left(0,1/2\right)$} & {\footnotesize 100} & {\footnotesize 0.01} & {\footnotesize$\tilde{Q}_{\lambda,t}^{\left(r\right)}$} & {\footnotesize -0.9} & {\footnotesize 0.5} & {\footnotesize 2.72} & {\footnotesize 0.22} & {\footnotesize 6.84} & {\footnotesize 0.23}\tabularnewline
 & {\footnotesize 100} & {\footnotesize 0.01} & {\footnotesize$\tilde{Q}_{\lambda,t}^{\left(r\right)}$} & {\footnotesize -0.8} & {\footnotesize 0.5} & {\footnotesize 2.00} & {\footnotesize 0.15} & {\footnotesize 6.21} & {\footnotesize 0.19}\tabularnewline
 & {\footnotesize 100} & {\footnotesize 0.01} & {\footnotesize$\bar{Q}_{\lambda,t}^{\left(r\right)}$} & {\footnotesize 1} & {\footnotesize 0.9} & {\footnotesize 3.12} & {\footnotesize 0.35} & {\footnotesize 14.18} & {\footnotesize 0.67}\tabularnewline
 & {\footnotesize 100} & {\footnotesize 0.01} & {\footnotesize$\bar{Q}_{\lambda,t}^{\left(r\right)}$} & {\footnotesize 1} & {\footnotesize 0.95} & {\footnotesize 2.51} & {\footnotesize 0.25} & {\footnotesize 11.33} & {\footnotesize 0.50}\tabularnewline
 & {\footnotesize 100} & {\footnotesize 0.05} & {\footnotesize$P_{t}$} &  &  & {\footnotesize 1.67} & {\footnotesize 0.10} & {\footnotesize 8.04} & {\footnotesize 0.22}\tabularnewline
 & {\footnotesize 100} & {\footnotesize 0.05} & {\footnotesize$\tilde{Q}_{\lambda,t}^{\left(r\right)}$} & {\footnotesize -0.9} & {\footnotesize 0.5} & {\footnotesize 1.99} & {\footnotesize 0.15} & {\footnotesize 6.22} & {\footnotesize 0.18}\tabularnewline
 & {\footnotesize 100} & {\footnotesize 0.05} & {\footnotesize$\tilde{Q}_{\lambda,t}^{\left(r\right)}$} & {\footnotesize -0.8} & {\footnotesize 0.5} & {\footnotesize 1.98} & {\footnotesize 0.17} & {\footnotesize 6.09} & {\footnotesize 0.18}\tabularnewline
 & {\footnotesize 100} & {\footnotesize 0.05} & {\footnotesize$\bar{Q}_{\lambda,t}^{\left(r\right)}$} & {\footnotesize 1} & {\footnotesize 0.9} & {\footnotesize 2.10} & {\footnotesize 0.16} & {\footnotesize 9.25} & {\footnotesize 0.30}\tabularnewline
 & {\footnotesize 100} & {\footnotesize 0.05} & {\footnotesize$\bar{Q}_{\lambda,t}^{\left(r\right)}$} & {\footnotesize 1} & {\footnotesize 0.95} & {\footnotesize 1.70} & {\footnotesize 0.13} & {\footnotesize 7.86} & {\footnotesize 0.22}\tabularnewline
{\footnotesize$\text{N}\left(\mu_{t},1\right)$} & {\footnotesize 100} & {\footnotesize 0.01} & {\footnotesize$P_{t}$} &  &  & {\footnotesize 2.56} & {\footnotesize 0.21} & {\footnotesize 12.48} & {\footnotesize 0.45}\tabularnewline
{\footnotesize$\mu_{t}\sim\text{N}\left(0,1/4\right)$} & {\footnotesize 100} & {\footnotesize 0.01} & {\footnotesize$\tilde{Q}_{\lambda,t}^{\left(r\right)}$} & {\footnotesize -0.9} & {\footnotesize 0.5} & {\footnotesize 3.95} & {\footnotesize 0.35} & {\footnotesize 9.39} & {\footnotesize 0.62}\tabularnewline
 & {\footnotesize 100} & {\footnotesize 0.01} & {\footnotesize$\tilde{Q}_{\lambda,t}^{\left(r\right)}$} & {\footnotesize -0.8} & {\footnotesize 0.5} & {\footnotesize 2.81} & {\footnotesize 0.23} & {\footnotesize 7.67} & {\footnotesize 0.37}\tabularnewline
 & {\footnotesize 100} & {\footnotesize 0.01} & {\footnotesize$\bar{Q}_{\lambda,t}^{\left(r\right)}$} & {\footnotesize 1} & {\footnotesize 0.9} & {\footnotesize 5.07} & {\footnotesize 0.56} & {\footnotesize 23.07} & {\footnotesize 1.18}\tabularnewline
 & {\footnotesize 100} & {\footnotesize 0.01} & {\footnotesize$\bar{Q}_{\lambda,t}^{\left(r\right)}$} & {\footnotesize 1} & {\footnotesize 0.95} & {\footnotesize 4.61} & {\footnotesize 0.46} & {\footnotesize 19.82} & {\footnotesize 0.98}\tabularnewline
 & {\footnotesize 100} & {\footnotesize 0.05} & {\footnotesize$P_{t}$} &  &  & {\footnotesize 1.89} & {\footnotesize 0.13} & {\footnotesize 9.74} & {\footnotesize 0.30}\tabularnewline
 & {\footnotesize 100} & {\footnotesize 0.05} & {\footnotesize$\tilde{Q}_{\lambda,t}^{\left(r\right)}$} & {\footnotesize -0.9} & {\footnotesize 0.5} & {\footnotesize 2.50} & {\footnotesize 0.17} & {\footnotesize 7.07} & {\footnotesize 0.26}\tabularnewline
 & {\footnotesize 100} & {\footnotesize 0.05} & {\footnotesize$\tilde{Q}_{\lambda,t}^{\left(r\right)}$} & {\footnotesize -0.8} & {\footnotesize 0.5} & {\footnotesize 2.96} & {\footnotesize 0.23} & {\footnotesize 7.34} & {\footnotesize 0.26}\tabularnewline
 & {\footnotesize 100} & {\footnotesize 0.05} & {\footnotesize$\bar{Q}_{\lambda,t}^{\left(r\right)}$} & {\footnotesize 1} & {\footnotesize 0.9} & {\footnotesize 2.08} & {\footnotesize 0.19} & {\footnotesize 11.61} & {\footnotesize 0.41}\tabularnewline
 & {\footnotesize 100} & {\footnotesize 0.05} & {\footnotesize$\bar{Q}_{\lambda,t}^{\left(r\right)}$} & {\footnotesize 1} & {\footnotesize 0.95} & {\footnotesize 2.29} & {\footnotesize 0.20} & {\footnotesize 10.01} & {\footnotesize 0.36}\tabularnewline
{\footnotesize$\text{N}\left(\mu_{t},1\right)$} & {\footnotesize 100} & {\footnotesize 0.01} & {\footnotesize$P_{t}$} &  &  & {\footnotesize 2.14} & {\footnotesize 0.16} & {\footnotesize 10.91} & {\footnotesize 0.38}\tabularnewline
{\footnotesize$\sigma_{t}^{2}\sim\chi_{1}^{2}$} & {\footnotesize 100} & {\footnotesize 0.01} & {\footnotesize$\tilde{Q}_{\lambda,t}^{\left(r\right)}$} & {\footnotesize -0.9} & {\footnotesize 0.5} & {\footnotesize 3.26} & {\footnotesize 0.27} & {\footnotesize 7.54} & {\footnotesize 0.28}\tabularnewline
 & {\footnotesize 100} & {\footnotesize 0.01} & {\footnotesize$\tilde{Q}_{\lambda,t}^{\left(r\right)}$} & {\footnotesize -0.8} & {\footnotesize 0.5} & {\footnotesize 3.04} & {\footnotesize 0.25} & {\footnotesize 7.35} & {\footnotesize 0.27}\tabularnewline
 & {\footnotesize 100} & {\footnotesize 0.01} & {\footnotesize$\bar{Q}_{\lambda,t}^{\left(r\right)}$} & {\footnotesize 1} & {\footnotesize 0.9} & {\footnotesize 4.07} & {\footnotesize 0.52} & {\footnotesize 18.76} & {\footnotesize 0.95}\tabularnewline
 & {\footnotesize 100} & {\footnotesize 0.01} & {\footnotesize$\bar{Q}_{\lambda,t}^{\left(r\right)}$} & {\footnotesize 1} & {\footnotesize 0.95} & {\footnotesize 3.51} & {\footnotesize 0.37} & {\footnotesize 16.66} & {\footnotesize 0.85}\tabularnewline
 & {\footnotesize 100} & {\footnotesize 0.05} & {\footnotesize$P_{t}$} &  &  & {\footnotesize 1.94} & {\footnotesize 0.14} & {\footnotesize 9.32} & {\footnotesize 0.30}\tabularnewline
 & {\footnotesize 100} & {\footnotesize 0.05} & {\footnotesize$\tilde{Q}_{\lambda,t}^{\left(r\right)}$} & {\footnotesize -0.9} & {\footnotesize 0.5} & {\footnotesize 3.15} & {\footnotesize 0.29} & {\footnotesize 7.54} & {\footnotesize 0.31}\tabularnewline
 & {\footnotesize 100} & {\footnotesize 0.05} & {\footnotesize$\tilde{Q}_{\lambda,t}^{\left(r\right)}$} & {\footnotesize -0.8} & {\footnotesize 0.5} & {\footnotesize 2.59} & {\footnotesize 0.17} & {\footnotesize 7.14} & {\footnotesize 0.24}\tabularnewline
 & {\footnotesize 100} & {\footnotesize 0.05} & {\footnotesize$\bar{Q}_{\lambda,t}^{\left(r\right)}$} & {\footnotesize 1} & {\footnotesize 0.9} & {\footnotesize 2.18} & {\footnotesize 0.21} & {\footnotesize 10.65} & {\footnotesize 0.50}\tabularnewline
 & {\footnotesize 100} & {\footnotesize 0.05} & {\footnotesize$\bar{Q}_{\lambda,t}^{\left(r\right)}$} & {\footnotesize 1} & {\footnotesize 0.95} & {\footnotesize 1.91} & {\footnotesize 0.14} & {\footnotesize 8.94} & {\footnotesize 0.30}\tabularnewline
{\footnotesize$\text{N}\left(\mu_{t},1\right)$} & {\footnotesize 100} & {\footnotesize 0.01} & {\footnotesize$P_{t}$} &  &  & {\footnotesize 3.41} & {\footnotesize 0.32} & {\footnotesize 18.14} & {\footnotesize 0.91}\tabularnewline
{\footnotesize$\sigma_{t}^{2}\sim\chi_{2}^{2}$} & {\footnotesize 100} & {\footnotesize 0.01} & {\footnotesize$\tilde{Q}_{\lambda,t}^{\left(r\right)}$} & {\footnotesize -0.9} & {\footnotesize 0.5} & {\footnotesize 9.01} & {\footnotesize 0.95} & {\footnotesize 18.93} & {\footnotesize 1.41}\tabularnewline
 & {\footnotesize 100} & {\footnotesize 0.01} & {\footnotesize$\tilde{Q}_{\lambda,t}^{\left(r\right)}$} & {\footnotesize -0.8} & {\footnotesize 0.5} & {\footnotesize 7.05} & {\footnotesize 0.59} & {\footnotesize 16.68} & {\footnotesize 1.02}\tabularnewline
 & {\footnotesize 100} & {\footnotesize 0.01} & {\footnotesize$\bar{Q}_{\lambda,t}^{\left(r\right)}$} & {\footnotesize 1} & {\footnotesize 0.9} & {\footnotesize 9.14} & {\footnotesize 1.20} & {\footnotesize 48.17} & {\footnotesize 3.24}\tabularnewline
 & {\footnotesize 100} & {\footnotesize 0.01} & {\footnotesize$\bar{Q}_{\lambda,t}^{\left(r\right)}$} & {\footnotesize 1} & {\footnotesize 0.95} & {\footnotesize 5.77} & {\footnotesize 0.62} & {\footnotesize 31.56} & {\footnotesize 1.75}\tabularnewline
 & {\footnotesize 100} & {\footnotesize 0.05} & {\footnotesize$P_{t}$} &  &  & {\footnotesize 2.40} & {\footnotesize 0.20} & {\footnotesize 11.22} & {\footnotesize 0.43}\tabularnewline
 & {\footnotesize 100} & {\footnotesize 0.05} & {\footnotesize$\tilde{Q}_{\lambda,t}^{\left(r\right)}$} & {\footnotesize -0.9} & {\footnotesize 0.5} & {\footnotesize 5.61} & {\footnotesize 0.46} & {\footnotesize 12.52} & {\footnotesize 0.82}\tabularnewline
 & {\footnotesize 100} & {\footnotesize 0.05} & {\footnotesize$\tilde{Q}_{\lambda,t}^{\left(r\right)}$} & {\footnotesize -0.8} & {\footnotesize 0.5} & {\footnotesize 4.28} & {\footnotesize 0.43} & {\footnotesize 10.75} & {\footnotesize 0.66}\tabularnewline
 & {\footnotesize 100} & {\footnotesize 0.05} & {\footnotesize$\bar{Q}_{\lambda,t}^{\left(r\right)}$} & {\footnotesize 1} & {\footnotesize 0.9} & {\footnotesize 2.98} & {\footnotesize 0.29} & {\footnotesize 15.34} & {\footnotesize 0.77}\tabularnewline
 & {\footnotesize 100} & {\footnotesize 0.05} & {\footnotesize$\bar{Q}_{\lambda,t}^{\left(r\right)}$} & {\footnotesize 1} & {\footnotesize 0.95} & {\footnotesize 2.41} & {\footnotesize 0.23} & {\footnotesize 11.92} & {\footnotesize 0.48}\tabularnewline
\hline 
\end{tabular}}{\footnotesize\par}
\end{table}

\begin{table}
\caption{Average $R_{k}$ under dynamic OOC processes for KS $p$-value charts
$\left(P_{t}\right)_{t}$, $\left(\tilde{Q}_{\lambda,t}^{\left(r\right)}\right)_{t}$
and $\left(\bar{Q}_{\lambda,t}^{\left(r\right)}\right)_{t}$, using
100 simulation repetitions. This table contains results for $n_{0}=200$,
while Appendix Tables \ref{tab:KS_OC_EWMA_sim_dynamic} and \ref{tab:KS_OC_EWMA_sim_dynamic_part2}
contains results for $n_{0}=50$ and $n_{0}=100$, respectively.}\label{tab:KS_OC_EWMA_sim_dynamic_part3}

\centering{}{\footnotesize{}%
\begin{tabular}{cccccccccc}
\hline 
{\footnotesize OOC process} & {\footnotesize$n_{0}$} & {\footnotesize$\alpha$} & {\footnotesize Chart} & {\footnotesize$r$} & {\footnotesize$\lambda$} & {\footnotesize mean $R_{1}$} & {\footnotesize st. err. $R_{1}$} & {\footnotesize mean $R_{5}$} & {\footnotesize st. err. $R_{5}$}\tabularnewline
\hline 
{\footnotesize$\text{N}\left(\mu_{t},1\right)$} & {\footnotesize 200} & {\footnotesize 0.01} & {\footnotesize$P_{t}$} &  &  & {\footnotesize 1.49} & {\footnotesize 0.07} & {\footnotesize 7.69} & {\footnotesize 0.20}\tabularnewline
{\footnotesize$\mu_{t}\sim\text{N}\left(0,1/2\right)$} & {\footnotesize 200} & {\footnotesize 0.01} & {\footnotesize$\tilde{Q}_{\lambda,t}^{\left(r\right)}$} & {\footnotesize -0.9} & {\footnotesize 0.5} & {\footnotesize 1.65} & {\footnotesize 0.12} & {\footnotesize 5.67} & {\footnotesize 0.12}\tabularnewline
 & {\footnotesize 200} & {\footnotesize 0.01} & {\footnotesize$\tilde{Q}_{\lambda,t}^{\left(r\right)}$} & {\footnotesize -0.8} & {\footnotesize 0.5} & {\footnotesize 1.84} & {\footnotesize 0.13} & {\footnotesize 5.88} & {\footnotesize 0.13}\tabularnewline
 & {\footnotesize 200} & {\footnotesize 0.01} & {\footnotesize$\bar{Q}_{\lambda,t}^{\left(r\right)}$} & {\footnotesize 1} & {\footnotesize 0.9} & {\footnotesize 1.88} & {\footnotesize 0.18} & {\footnotesize 9.65} & {\footnotesize 0.41}\tabularnewline
 & {\footnotesize 200} & {\footnotesize 0.01} & {\footnotesize$\bar{Q}_{\lambda,t}^{\left(r\right)}$} & {\footnotesize 1} & {\footnotesize 0.95} & {\footnotesize 1.71} & {\footnotesize 0.14} & {\footnotesize 8.65} & {\footnotesize 0.29}\tabularnewline
 & {\footnotesize 200} & {\footnotesize 0.05} & {\footnotesize$P_{t}$} &  &  & {\footnotesize 1.42} & {\footnotesize 0.09} & {\footnotesize 6.61} & {\footnotesize 0.16}\tabularnewline
 & {\footnotesize 200} & {\footnotesize 0.05} & {\footnotesize$\tilde{Q}_{\lambda,t}^{\left(r\right)}$} & {\footnotesize -0.9} & {\footnotesize 0.5} & {\footnotesize 1.62} & {\footnotesize 0.10} & {\footnotesize 5.68} & {\footnotesize 0.11}\tabularnewline
 & {\footnotesize 200} & {\footnotesize 0.05} & {\footnotesize$\tilde{Q}_{\lambda,t}^{\left(r\right)}$} & {\footnotesize -0.8} & {\footnotesize 0.5} & {\footnotesize 1.68} & {\footnotesize 0.10} & {\footnotesize 5.70} & {\footnotesize 0.10}\tabularnewline
 & {\footnotesize 200} & {\footnotesize 0.05} & {\footnotesize$\bar{Q}_{\lambda,t}^{\left(r\right)}$} & {\footnotesize 1} & {\footnotesize 0.9} & {\footnotesize 1.44} & {\footnotesize 0.09} & {\footnotesize 7.58} & {\footnotesize 0.23}\tabularnewline
 & {\footnotesize 200} & {\footnotesize 0.05} & {\footnotesize$\bar{Q}_{\lambda,t}^{\left(r\right)}$} & {\footnotesize 1} & {\footnotesize 0.95} & {\footnotesize 1.38} & {\footnotesize 0.07} & {\footnotesize 6.61} & {\footnotesize 0.16}\tabularnewline
{\footnotesize$\text{N}\left(\mu_{t},1\right)$} & {\footnotesize 200} & {\footnotesize 0.01} & {\footnotesize$P_{t}$} &  &  & {\footnotesize 1.82} & {\footnotesize 0.11} & {\footnotesize 9.48} & {\footnotesize 0.29}\tabularnewline
{\footnotesize$\mu_{t}\sim\text{N}\left(0,1/4\right)$} & {\footnotesize 200} & {\footnotesize 0.01} & {\footnotesize$\tilde{Q}_{\lambda,t}^{\left(r\right)}$} & {\footnotesize -0.9} & {\footnotesize 0.5} & {\footnotesize 2.37} & {\footnotesize 0.18} & {\footnotesize 6.61} & {\footnotesize 0.20}\tabularnewline
 & {\footnotesize 200} & {\footnotesize 0.01} & {\footnotesize$\tilde{Q}_{\lambda,t}^{\left(r\right)}$} & {\footnotesize -0.8} & {\footnotesize 0.5} & {\footnotesize 2.27} & {\footnotesize 0.17} & {\footnotesize 6.61} & {\footnotesize 0.19}\tabularnewline
 & {\footnotesize 200} & {\footnotesize 0.01} & {\footnotesize$\bar{Q}_{\lambda,t}^{\left(r\right)}$} & {\footnotesize 1} & {\footnotesize 0.9} & {\footnotesize 2.99} & {\footnotesize 0.31} & {\footnotesize 12.85} & {\footnotesize 0.62}\tabularnewline
 & {\footnotesize 200} & {\footnotesize 0.01} & {\footnotesize$\bar{Q}_{\lambda,t}^{\left(r\right)}$} & {\footnotesize 1} & {\footnotesize 0.95} & {\footnotesize 2.49} & {\footnotesize 0.22} & {\footnotesize 12.28} & {\footnotesize 0.53}\tabularnewline
 & {\footnotesize 200} & {\footnotesize 0.05} & {\footnotesize$P_{t}$} &  &  & {\footnotesize 1.40} & {\footnotesize 0.08} & {\footnotesize 7.94} & {\footnotesize 0.23}\tabularnewline
 & {\footnotesize 200} & {\footnotesize 0.05} & {\footnotesize$\tilde{Q}_{\lambda,t}^{\left(r\right)}$} & {\footnotesize -0.9} & {\footnotesize 0.5} & {\footnotesize 2.15} & {\footnotesize 0.17} & {\footnotesize 6.29} & {\footnotesize 0.18}\tabularnewline
 & {\footnotesize 200} & {\footnotesize 0.05} & {\footnotesize$\tilde{Q}_{\lambda,t}^{\left(r\right)}$} & {\footnotesize -0.8} & {\footnotesize 0.5} & {\footnotesize 1.92} & {\footnotesize 0.13} & {\footnotesize 6.05} & {\footnotesize 0.15}\tabularnewline
 & {\footnotesize 200} & {\footnotesize 0.05} & {\footnotesize$\bar{Q}_{\lambda,t}^{\left(r\right)}$} & {\footnotesize 1} & {\footnotesize 0.9} & {\footnotesize 1.85} & {\footnotesize 0.15} & {\footnotesize 8.84} & {\footnotesize 0.27}\tabularnewline
 & {\footnotesize 200} & {\footnotesize 0.05} & {\footnotesize$\bar{Q}_{\lambda,t}^{\left(r\right)}$} & {\footnotesize 1} & {\footnotesize 0.95} & {\footnotesize 1.52} & {\footnotesize 0.10} & {\footnotesize 7.92} & {\footnotesize 0.22}\tabularnewline
{\footnotesize$\text{N}\left(\mu_{t},1\right)$} & {\footnotesize 200} & {\footnotesize 0.01} & {\footnotesize$P_{t}$} &  &  & {\footnotesize 1.68} & {\footnotesize 0.10} & {\footnotesize 8.81} & {\footnotesize 0.28}\tabularnewline
{\footnotesize$\sigma_{t}^{2}\sim\chi_{1}^{2}$} & {\footnotesize 200} & {\footnotesize 0.01} & {\footnotesize$\tilde{Q}_{\lambda,t}^{\left(r\right)}$} & {\footnotesize -0.9} & {\footnotesize 0.5} & {\footnotesize 2.00} & {\footnotesize 0.16} & {\footnotesize 6.11} & {\footnotesize 0.17}\tabularnewline
 & {\footnotesize 200} & {\footnotesize 0.01} & {\footnotesize$\tilde{Q}_{\lambda,t}^{\left(r\right)}$} & {\footnotesize -0.8} & {\footnotesize 0.5} & {\footnotesize 2.13} & {\footnotesize 0.17} & {\footnotesize 6.40} & {\footnotesize 0.22}\tabularnewline
 & {\footnotesize 200} & {\footnotesize 0.01} & {\footnotesize$\bar{Q}_{\lambda,t}^{\left(r\right)}$} & {\footnotesize 1} & {\footnotesize 0.9} & {\footnotesize 2.54} & {\footnotesize 0.23} & {\footnotesize 11.91} & {\footnotesize 0.56}\tabularnewline
 & {\footnotesize 200} & {\footnotesize 0.01} & {\footnotesize$\bar{Q}_{\lambda,t}^{\left(r\right)}$} & {\footnotesize 1} & {\footnotesize 0.95} & {\footnotesize 2.06} & {\footnotesize 0.19} & {\footnotesize 9.76} & {\footnotesize 0.38}\tabularnewline
 & {\footnotesize 200} & {\footnotesize 0.05} & {\footnotesize$P_{t}$} &  &  & {\footnotesize 1.44} & {\footnotesize 0.07} & {\footnotesize 7.35} & {\footnotesize 0.18}\tabularnewline
 & {\footnotesize 200} & {\footnotesize 0.05} & {\footnotesize$\tilde{Q}_{\lambda,t}^{\left(r\right)}$} & {\footnotesize -0.9} & {\footnotesize 0.5} & {\footnotesize 1.80} & {\footnotesize 0.12} & {\footnotesize 5.93} & {\footnotesize 0.14}\tabularnewline
 & {\footnotesize 200} & {\footnotesize 0.05} & {\footnotesize$\tilde{Q}_{\lambda,t}^{\left(r\right)}$} & {\footnotesize -0.8} & {\footnotesize 0.5} & {\footnotesize 1.75} & {\footnotesize 0.13} & {\footnotesize 5.84} & {\footnotesize 0.14}\tabularnewline
 & {\footnotesize 200} & {\footnotesize 0.05} & {\footnotesize$\bar{Q}_{\lambda,t}^{\left(r\right)}$} & {\footnotesize 1} & {\footnotesize 0.9} & {\footnotesize 1.54} & {\footnotesize 0.11} & {\footnotesize 8.04} & {\footnotesize 0.27}\tabularnewline
 & {\footnotesize 200} & {\footnotesize 0.05} & {\footnotesize$\bar{Q}_{\lambda,t}^{\left(r\right)}$} & {\footnotesize 1} & {\footnotesize 0.95} & {\footnotesize 1.56} & {\footnotesize 0.08} & {\footnotesize 7.57} & {\footnotesize 0.20}\tabularnewline
{\footnotesize$\text{N}\left(\mu_{t},1\right)$} & {\footnotesize 200} & {\footnotesize 0.01} & {\footnotesize$P_{t}$} &  &  & {\footnotesize 2.22} & {\footnotesize 0.16} & {\footnotesize 10.53} & {\footnotesize 0.31}\tabularnewline
{\footnotesize$\sigma_{t}^{2}\sim\chi_{2}^{2}$} & {\footnotesize 200} & {\footnotesize 0.01} & {\footnotesize$\tilde{Q}_{\lambda,t}^{\left(r\right)}$} & {\footnotesize -0.9} & {\footnotesize 0.5} & {\footnotesize 3.36} & {\footnotesize 0.28} & {\footnotesize 8.59} & {\footnotesize 0.40}\tabularnewline
 & {\footnotesize 200} & {\footnotesize 0.01} & {\footnotesize$\tilde{Q}_{\lambda,t}^{\left(r\right)}$} & {\footnotesize -0.8} & {\footnotesize 0.5} & {\footnotesize 3.65} & {\footnotesize 0.32} & {\footnotesize 8.52} & {\footnotesize 0.42}\tabularnewline
 & {\footnotesize 200} & {\footnotesize 0.01} & {\footnotesize$\bar{Q}_{\lambda,t}^{\left(r\right)}$} & {\footnotesize 1} & {\footnotesize 0.9} & {\footnotesize 3.57} & {\footnotesize 0.32} & {\footnotesize 16.76} & {\footnotesize 1.04}\tabularnewline
 & {\footnotesize 200} & {\footnotesize 0.01} & {\footnotesize$\bar{Q}_{\lambda,t}^{\left(r\right)}$} & {\footnotesize 1} & {\footnotesize 0.95} & {\footnotesize 3.05} & {\footnotesize 0.35} & {\footnotesize 16.25} & {\footnotesize 0.86}\tabularnewline
 & {\footnotesize 200} & {\footnotesize 0.05} & {\footnotesize$P_{t}$} &  &  & {\footnotesize 1.59} & {\footnotesize 0.10} & {\footnotesize 8.24} & {\footnotesize 0.26}\tabularnewline
 & {\footnotesize 200} & {\footnotesize 0.05} & {\footnotesize$\tilde{Q}_{\lambda,t}^{\left(r\right)}$} & {\footnotesize -0.9} & {\footnotesize 0.5} & {\footnotesize 2.63} & {\footnotesize 0.22} & {\footnotesize 7.11} & {\footnotesize 0.26}\tabularnewline
 & {\footnotesize 200} & {\footnotesize 0.05} & {\footnotesize$\tilde{Q}_{\lambda,t}^{\left(r\right)}$} & {\footnotesize -0.8} & {\footnotesize 0.5} & {\footnotesize 2.53} & {\footnotesize 0.19} & {\footnotesize 6.86} & {\footnotesize 0.20}\tabularnewline
 & {\footnotesize 200} & {\footnotesize 0.05} & {\footnotesize$\bar{Q}_{\lambda,t}^{\left(r\right)}$} & {\footnotesize 1} & {\footnotesize 0.9} & {\footnotesize 2.04} & {\footnotesize 0.18} & {\footnotesize 10.08} & {\footnotesize 0.45}\tabularnewline
 & {\footnotesize 200} & {\footnotesize 0.05} & {\footnotesize$\bar{Q}_{\lambda,t}^{\left(r\right)}$} & {\footnotesize 1} & {\footnotesize 0.95} & {\footnotesize 1.82} & {\footnotesize 0.13} & {\footnotesize 8.51} & {\footnotesize 0.31}\tabularnewline
\hline 
\end{tabular}}{\footnotesize\par}
\end{table}

\begin{figure}
\begin{centering}
\includegraphics[width=15cm]{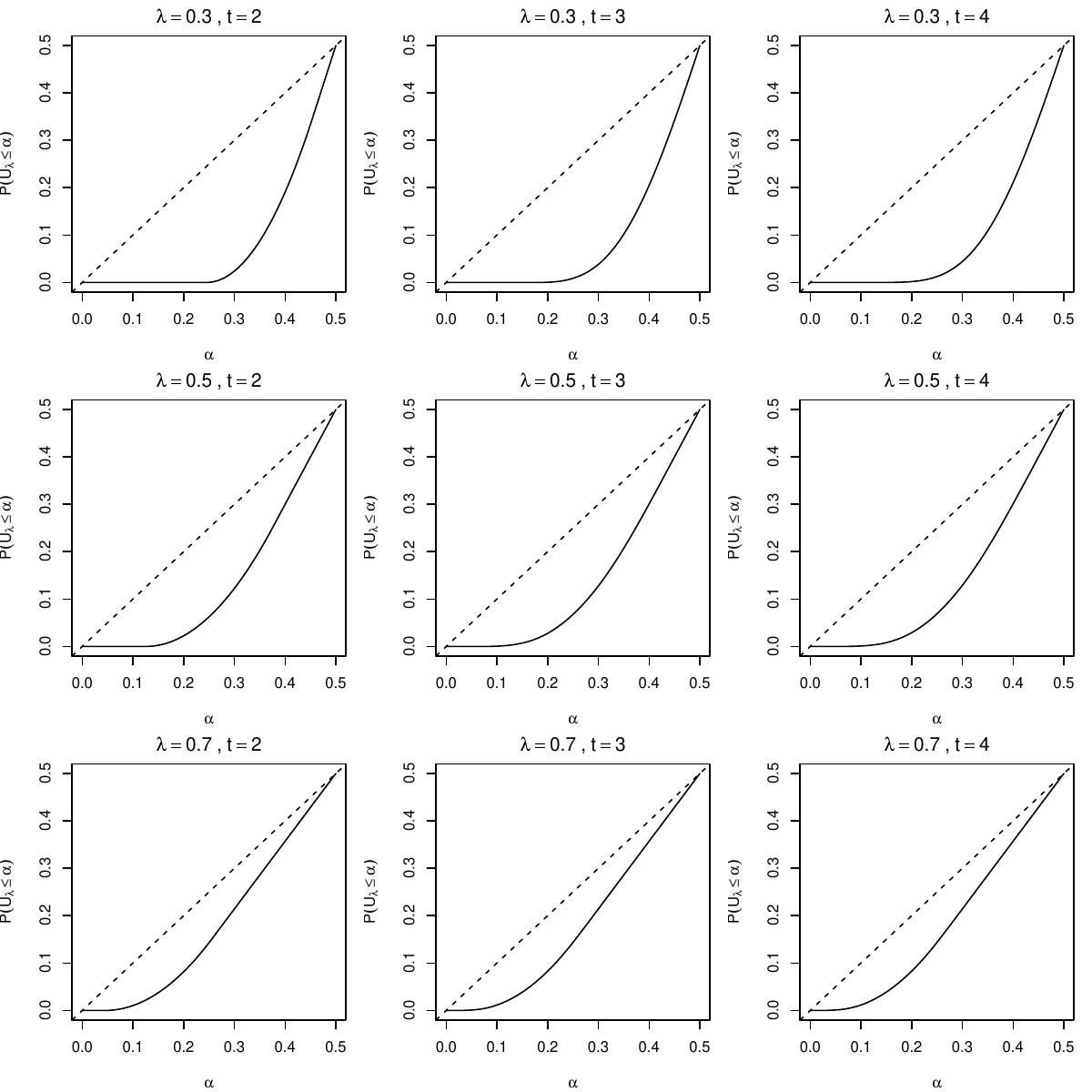}
\par\end{centering}
\caption{Plots of CDFs of the random variables $\tilde{U}_{\lambda,t}$ with
initialisation $u_{0}=1/2$, i.e., $F\left(\alpha\right)=\mathrm{P}_{0}\left(\tilde{U}_{\lambda,t}\le\alpha\right)$,
for $\lambda\in\left\{ 0.3,0.5,0.7\right\} $ and $t\in\left\{ 2,3,4\right\} $
(solid line) along with the CDF of $U\sim\mathrm{Unif}\left[0,1\right]$
(dashed line) for $\alpha\le1/2$.}\label{fig:Plots-of-CDFs}
\end{figure}

\begin{table}
\caption{Average run lengths $R$, numbers of detected OOC directions at time
of alarm \#OOC, and numbers of single time point family wise errors
(at time $t=1$) for $d=3$ variable Cauchy model with $n_{0}=20$
tested using the procedure from Section \ref{subsec:Directional chart}
with Mann--Whitney $U$-tests. Average run lengths and number of
true positives are reported via 100 simulation replications, while
single time point family wise errors are obtained from 10000 simulations.}\label{tab:Cauchy_20}

\centering{}%
\begin{tabular}{cccccc}
\hline 
$\delta$ & $\rho$ & $\alpha$ & Mean $R$ & Mean \#OOC & Mean FWEs\tabularnewline
\hline 
0.5 & 0 & 0.01 & 397 & 0.78 & 0.0031\tabularnewline
0.5 & 0 & 0.05 & 90.17 & 0.73 & 0.0184\tabularnewline
0.5 & 0.5 & 0.01 & 1762.03 & 0.83 & 0.0041\tabularnewline
0.5 & 0.5 & 0.05 & 37.99 & 0.9 & 0.0193\tabularnewline
0.5 & 0.9 & 0.01 & 2963.58 & 0.94 & 0.0051\tabularnewline
0.5 & 0.9 & 0.05 & 82.84 & 0.99 & 0.0295\tabularnewline
1 & 0 & 0.01 & 68.74 & 1.01 & 0.0038\tabularnewline
1 & 0 & 0.05 & 8.07 & 1.05 & 0.0211\tabularnewline
1 & 0.5 & 0.01 & 1052.82 & 0.97 & 0.0045\tabularnewline
1 & 0.5 & 0.05 & 4.77 & 1.03 & 0.0217\tabularnewline
1 & 0.9 & 0.01 & 44.74 & 1 & 0.0054\tabularnewline
1 & 0.9 & 0.05 & 6.05 & 1.01 & 0.023\tabularnewline
\hline 
\end{tabular}
\end{table}

\begin{table}
\caption{Average run lengths $R$, numbers of detected OOC directions at time
of alarm \#OOC, and numbers of single time point family wise errors
(at time $t=1$) for $d=3$ variable Cauchy model with $n_{0}=100$
tested using the procedure from Section \ref{subsec:Directional chart}
with Mann--Whitney $U$-tests. Average run lengths and number of
true positives are reported via 100 simulation replications, while
single time point family wise errors are obtained from 10000 simulations.}\label{tab:Cauchy_100}

\centering{}%
\begin{tabular}{cccccc}
\hline 
$\delta$ & $\rho$ & $\alpha$ & Mean $R$ & Mean \#OOC & Mean FWEs\tabularnewline
\hline 
0.5 & 0 & 0.01 & 17.45 & 1.06 & 0.0042\tabularnewline
0.5 & 0 & 0.05 & 2.65 & 1.12 & 0.0259\tabularnewline
0.5 & 0.5 & 0.01 & 9.9 & 1.01 & 0.0055\tabularnewline
0.5 & 0.5 & 0.05 & 2.69 & 1.12 & 0.0238\tabularnewline
0.5 & 0.9 & 0.01 & 6.58 & 1 & 0.0052\tabularnewline
0.5 & 0.9 & 0.05 & 2.23 & 1.04 & 0.0298\tabularnewline
1 & 0 & 0.01 & 1.1 & 1.67 & 0.0078\tabularnewline
1 & 0 & 0.05 & 1 & 1.81 & 0.0468\tabularnewline
1 & 0.5 & 0.01 & 1.03 & 1.68 & 0.0073\tabularnewline
1 & 0.5 & 0.05 & 1 & 1.78 & 0.0427\tabularnewline
1 & 0.9 & 0.01 & 1 & 1.64 & 0.0059\tabularnewline
1 & 0.9 & 0.05 & 1 & 1.84 & 0.0326\tabularnewline
\hline 
\end{tabular}
\end{table}

\end{document}